\newtheorem{open}{Open Problem}
\theoremstyle{definition}
\newtheorem{observation}[theorem]{Observation}
\newtheoremstyle{parens}
  {}
  {}
  {\itshape}
  {}
  {}
  {}
  {.5em}
  {(\thmnumber{#2})}
\theoremstyle{parens}
\newtheorem{nitem}[equation]{}
\newcommand{\NP}{{\sf NP}}
\newcommand{\FPT}{{\sf FPT}}
\newcommand{\W}{{\sf W}}
\renewcommand\cref[1]{\Cref{#1}}
\DeclareMathOperator{\thin}{thin}
\DeclareMathOperator{\pthin}{pthin}
\newcommand{\AW}{{\sf AW}[*]}
\newcommand{\adj}[2]{\mathrm{Adj}_{#1}(#2)}
\newcommand{\gr}{\mathrm{gr}}
\newcommand{\start}{{\sf start}}
\pgfplotsset{compat=1.15}
\tikzstyle{line}=[draw]
\tikzset{ n1/.style={circle,scale=1.0},
n2/.style={circle,fill=black,scale=0.5},
n3/.style={circle,draw,fill=black,draw=black,text=white,scale=0.9},
e1/.style={line width=0.1mm}, e3/.style={draw=black,line
width=0.7mm}, inter/.style={line width=0.85mm},
e2/.style={draw=black,line width=0.5mm}, c1/.style={line
width=0.3mm}, c2/.style={line width=0.2mm} }
\def\centerarc[#1](#2)(#3:#4:#5)%
\newcommand{\vertex}[4][black]{
    \draw[#1, fill=#1, inner sep=0pt] (#2, #3) circle (0.12) node(#4){};
}
\newcommand{\vertexLabel}[3][above]{
    \path (#2) node[#1]{#3};
}
\title{Non-crossing $H$-graphs: a generalization of proper interval graphs admitting FPT algorithms}
\titlerunning{Non-crossing $H$-graphs}
\author{Flavia Bonomo-Braberman}{Universidad de Buenos Aires. Facultad de Ciencias Exactas y Naturales. Departamento de Computaci\'on. Buenos Aires, Argentina. / CONICET-Universidad de Buenos Aires. Instituto de
Investigaci\'on en Ciencias de la Computaci\'on (ICC). Buenos
Aires, Argentina}{fbonomo@dc.uba.ar}{https://orcid.org/0000-0002-9872-7528}{}
\author{Nick Brettell}{School of Mathematics and Statistics, Victoria University of Wellington, New Zealand}{nick.brettell@vuw.ac.nz}{https://orcid.org/0000-0002-1136-418X}{}
\author{Noleen K\"{o}hler}{School of Computer Science, University of Leeds, UK}{n.koehler@leeds.ac.uk}{https://orcid.org/0000-0002-1023-6530}{}
\author{Andrea Munaro}{Department of Mathematical, Physical and Computer Sciences, University of Parma, Italy}{andrea.munaro@unipr.it}{https://orcid.org/0000-0003-1509-8832}{}
\author{Dani\"el Paulusma}{Department of Computer Science, Durham University, UK}{daniel.paulusma@durham.ac.uk}{https://orcid.org/0000-0001-5945-9287}{}
\authorrunning{F. Bonomo-Braberman, N. Brettell, N. K\"{o}hler, A. Munaro, D. Paulusma}
\keywords{$H$-graphs, FO Model Checking, parameterized complexity, proper mixed-thinness, twin-width.}
\begin{document}

\maketitle
\begin{abstract}
We prove new parameterized complexity results for the FO Model Checking problem on a well-known generalization of interval and circular-arc graphs: the class of $H$-graphs, for any fixed multigraph~$H$. In particular, we research how the parameterized complexity differs between two subclasses of $H$-graphs: proper $H$-graphs and non-crossing $H$-graphs, each generalizing proper interval graphs and proper circular-arc graphs. We first 
generalize a known result of Bonnet et al.~(IPEC 2022) from interval graphs to $H$-graphs, for any (simple) forest~$H$, by showing that for such $H$, 
the class of $H$-graphs is delineated. This implies that for every hereditary subclass~${\cal D}$ of $H$-graphs, FO Model Checking is in \FPT\ if ${\cal D}$ has bounded twin-width and $\AW$-hard otherwise. As proper claw-graphs  have unbounded twin-width, this means that FO Model Checking is $\AW$-hard for proper $H$-graphs for certain forests~$H$ like the claw. In contrast, we show that
even for every multigraph~$H$, non-crossing $H$-graphs have bounded proper mixed-thinness and hence bounded twin-width, and thus FO Model Checking is in $\FPT$ on non-crossing $H$-graphs when parameterized by $\Vert H \Vert+\ell$, where $\Vert H \Vert$ is the size of $H$ and $\ell$ is the size of a formula. It is known that a special case of FO Model Checking,  {\sc Independent Set}, is
$\mathsf{W}[1]$-hard on $H$-graphs  when parameterized by $\Vert H \Vert +k$, where $k$ is the size of a solution. We strengthen this $\mathsf{W}[1]$-hardness result to proper $H$-graphs. Hence, we solve, in two different ways,  an open problem of Chaplick (Discrete Math.~2023), who asked about problems that can be solved faster for non-crossing $H$-graphs than for proper $H$-graphs.
\end{abstract}

\section{Introduction}\label{s-intro}

The FO Model Checking problem is to decide whether a given graph $G$ satisfies a given first-order formula~$\phi$. As such, the FO Model Checking problem captures a large class of graph problems, including {\sc Clique}, {\sc Dominating Set} and {\sc Independent Set} and so on. We prove new parameterized complexity results for FO Model Checking by restricting the input to certain classes of intersection graphs and proving boundedness of twin-width. Showing the latter will enable us to apply a well-known result of Bonnet et al.~\cite{BKTW22}, which states that FO Model Checking is in \FPT\ for graphs of bounded twin-width, 
provided a contraction sequence witnessing small twin-width is given.
We will also show several new hardness results. Before explaining this in more detail, we first introduce our framework.

\medskip
\noindent
{\bf Our Framework.}
We consider the framework of intersection graphs, because this framework captures many well-known graph classes and plays a central role in algorithmic graph theory. An {\it intersection graph}~$G_{\cal S}$ of a family~${\cal S}$ of subsets of a set~$X$ has a unique vertex~$s_i$ for each $S_i\in {\cal S}$ and there is an edge between two distinct vertices $s_i$ and~$s_j$ if and only if $S_i\cap S_j\neq \varnothing$. Two very popular classes of intersection graphs are those of {\it interval graphs} and {\it circular-arc graphs}. We obtain these classes by letting ${\cal S}$ be a family of intervals of the real line or arcs of the circle, respectively. One of the reasons that both graph classes are so widely known is that they have various ``good'' algorithmic properties, often leading to fast algorithms.

It is natural to try to extend a graph class~${\cal G}$ into a larger graph class that preserves the desirable algorithmic properties of ${\cal G}$. Applying this approach to interval graphs and circular-arc graphs has led to the notion of $H$-graphs, introduced in 1992 by B\'{\i}r\'o, Hujter and Tuza~\cite{B-H-T-mp}. In order to define $H$-graphs, we first need to introduce some terminology.
The {\it subdivision} of an edge $uv$ in a graph replaces~$uv$ by a new vertex~$w$ and edges~$uw$ and $wv$. A graph~$H'$ is a {\it subdivision} of a graph~$H$ if we can modify $H$ into $H'$ by a sequence of edge subdivisions. A vertex subset of a graph is {\it connected} if it induces a connected subgraph (the empty set is also connected).
A {\it multigraph} may have parallel edges (edges between the same pair of vertices) and self-loops (edges $uu$). A {\it simple} graph is a graph with no parallel edges and no self-loops.
%A multigraph is {\it irreflexive} if it has no self-loops.

\begin{definition}
For a multigraph~$H$, a (simple) graph $G$ is an \emph{$H$-graph} if $G$ is the intersection graph of a family~${\cal S}$ of connected vertex subsets of a subdivision~$H'$ of $H$;
we say that ${\cal S}$ is an \emph{$H$-representation} of~$G$.
\end{definition}

\noindent
We refer to Figure~\ref{fig:proper-non-crossing} for an example of an $H$-graph~$G$. We also
note that the $P_2$-graphs are exactly the interval graphs, and that the $C_2$-graphs are exactly the circular-arc graphs;
here, $P_r$ is the $r$-vertex path and $C_s$ is the $s$-vertex cycle, where $C_2$ is the $2$-vertex cycle consisting of two vertices with two parallel edges. As such, the size $\Vert H \Vert=|E(H)|$ (counting multiplicities) of $H$ is a natural parameter for ``measuring''
how far an $H$-graph is from being interval or circular-arc. Hence, $H$-graphs form a {\it parameterized} generalization of  interval graphs and circular-arc graphs. This led Chaplick et al.~\cite{ChaplickTVZ21} to initiate an algorithmic study of $H$-graphs. They showed that fundamental problems, such as \textsc{Independent Set} and \textsc{Dominating Set}, are all in $\mathsf{XP}$ on $H$-graphs, when parameterized by $\Vert H \Vert$, and also that {\sc Clique} is para-$\mathsf{NP}$-hard on $H$-graphs when parameterized by $\Vert H\Vert$ (see~\cite{CFK,FG} for background on parameterized complexity).
Fomin, Golovach and Raymond~\cite{Algo-H-graphs} complemented these results by showing that \textsc{Independent Set} and
\textsc{Dominating Set} on $H$-graphs are $\mathsf{W}[1]$-hard when parameterized by
$\Vert H \Vert +k$, where $k$ denotes the size of the independent (resp. dominating) set. We refer to~\cite{Cagir23-recog-H-graphs,ChaplickTVZ21,Klav-T-NPc,Tuc-test} for \NP-completeness and polynomial-time results for recognizing $H$-graphs, depending on the structure of $H$.

We obtain the classes of {\it proper} interval graphs and {\it proper} circular-arc graphs by requiring the corresponding family ${\cal S}$ of intervals of the real line or arcs of the circle, respectively, to be {\it proper}; here, a family ${\cal S}$ is \emph{proper} if no set in ${\cal S}$ properly contains another.
Proper interval graphs and proper circular-arc graphs have certain structural properties that interval graphs and circular-arc graphs do not have, and that can be exploited algorithmically.
In particular, FO Model Checking is well known to be in $\mathsf{FPT}$ on proper interval graphs~\cite{FO-mc-interval} and even on proper circular-arc graphs~\cite{HPR19} when parameterized by the length of the formula. As explained in~\cite{HPR19}, such a result does not hold for interval graphs. This is due to the fact that
 {\sc Induced Subgraph Isomorphism} (does a given graph $G$ contain a given graph $H$ as an induced subgraph?) is \W[1]-hard on interval graphs when parameterized by $|V(H)|$~\cite{MS13}.
 
The underlying reason a problem is ``efficiently solvable'' on some graph class ${\cal G}$ might be the fact that ${\cal G}$ has {\it bounded} width for some width parameter $p$, that is, there exists a constant $c$ such that  $p(G)\leq c$ for every graph $G\in {\cal G}$.
Bonnet et al.~\cite{BKTW22} proved that FO Model Checking is in $\mathsf{FPT}$ when parameterized by twin-width plus the length of the formula.
As it turns out, proper circular-arc graphs, and thus proper interval graphs, have bounded twin-width~\cite{Jedelsky2021thesis}, whereas  interval graphs, and thus circular-arc graphs and more generally $H$-graphs, have unbounded twin-width~\cite{BGKTW22}.
This raises the question whether it is possible to generalize proper interval graphs and proper circular-arc graphs to  subclasses of $H$-graphs whose twin-width is at most $f(\Vert H \Vert)$ for some function $f$ that only depends on $\Vert H \Vert$. Due to~\cite{BKTW22}, the latter would imply that FO Model Checking is in $\mathsf{FPT}$ when parameterized by $\Vert H \Vert$ plus the length of the formula.

Recently, two new rich subclasses of $H$-graphs were introduced. First, Chaplick et al.~\cite{CFGKZ21} defined proper $H$-graphs (see also Figure~\ref{fig:proper-non-crossing}).

\begin{definition}
For a multigraph~$H$, a (simple) graph $G$ is a \emph{proper} $H$-graph if $G$ is the intersection graph of a proper family~${\cal S}$ of connected vertex subsets of a subdivision~$H'$ of $H$;
we say that ${\cal S}$ is a \emph{proper} $H$-representation of~$G$.
\end{definition}

 \begin{figure}[t]
     \centering
     \begin{subfigure}[b]{.2\textwidth}
         \centering
        \begin{tikzpicture}[scale=.7]
        \draw [gray] (-2,0) -- (1.8,0);
        \draw [gray] (0,0) -- (0,-1.6);
        \draw  (-1.2,0.2) -- (0.8,0.2);
        \draw  (-0.2,0.2) -- (-0.2,-0.8);
            \draw  (-1.8,0.1) -- (-0.8,0.1);
            \draw  (0.4,0.1) -- (1.4,0.1);
        \draw  (-0.1,-0.4) -- (-0.1,-1.4);
        \end{tikzpicture}
         \caption{\footnotesize proper and non-crossing \\ \phantom{new line}}
        \label{fig:p-nc}
     \end{subfigure}
     \hfill
     \begin{subfigure}[b]{.2\textwidth}
         \centering
          \begin{tikzpicture}[scale=.7]
            \draw [gray] (-2,0) -- (1.8,0);
        \draw [gray] (0,0) -- (0,-1.6);
            \draw  (-1.2,0.2) -- (1.4,0.2);
        \draw  (-0.2,0.2) -- (-0.2,-0.8);
            \draw  (-1.8,0.1) -- (-0.8,0.1);
            \draw  (0.4,0.1) -- (1.4,0.1);
        \draw  (-0.1,-0.4) -- (-0.1,-1.4);
       \end{tikzpicture}
         \caption{\footnotesize non-crossing but not proper \\ \phantom{new line}}
        \label{fig:np-nc}
     \end{subfigure}
     \hfill
     \begin{subfigure}[b]{.2\textwidth}
         \centering
        \begin{tikzpicture}[scale=.7]
        \draw [gray] (-2,0) -- (1.8,0);
        \draw [gray] (0,0) -- (0,-1.6);
        \draw  (-1.2,0.2) -- (1,0.2);
        \draw  (-0.1,0.1) -- (-0.1,-0.8);
            \draw  (-1.8,0.1) -- (-0.8,0.1);
            \draw  (0.6,0.1) -- (1.6,0.1);
            \draw  (-0.6,0.1) -- (0.4,0.1);
        \end{tikzpicture}
         \caption{\footnotesize proper but not non-crossing \\ \phantom{new line}}
        \label{fig:p-nnc}
     \end{subfigure}
     \hfill
     \begin{subfigure}[b]{.2\textwidth}
         \centering
        \begin{tikzpicture}[scale=.7]
      \draw [gray] (-2,0) -- (1.8,0);
      \draw [gray] (0,0) -- (0,-1.6);
      \draw  (-1.2,0.2) -- (1,0.2);
            \draw  (-1.8,0.1) -- (-0.8,0.1);
            \draw  (0.6,0.1) -- (1.6,0.1);
            \draw  (-0.6,0.1) -- (0.4,0.1);
        \end{tikzpicture}
         \caption{\footnotesize neither proper nor non-crossing \\ \phantom{new line}}
        \label{fig:np-nnc}
     \end{subfigure}
            \vspace*{-0.3cm}
        \caption{Four different representations of the claw ($K_{1,3}$) as a claw-graph.}
        \label{fig:proper-non-crossing}
        \vspace*{-0.3cm}
\end{figure}
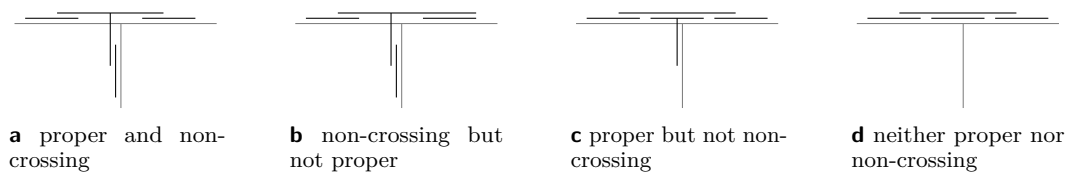

\noindent
Note that the proper $P_2$-graphs are exactly the proper interval graphs, and that the proper $C_2$-graphs are exactly the proper circular-arc graphs.
We refer to~\cite{Chap-proper-tree,D-H-H-circ-arc,Klav-T-NPc} for \NP-completeness and polynomial-time results for recognizing proper $H$-graphs, depending on $H$.
Chaplick et al.~\cite{CFGKZ21} showed the existence of polynomial kernels for Hamiltonian-type problems on proper $H$-graphs.
However, as a possible drawback, Jedelsk\'y~\cite{Jedelsky2021thesis} proved that already proper $K_{1,3}$-graphs have unbounded twin-width; here, $K_{1,3}$ denotes the {\it claw}, i.e., the $4$-vertex star.

Second, Chaplick~\cite{Chap-nc-paths} defined non-crossing $H$-graphs, whose definition requires an extra notion. Namely, a family~${\cal S}$ of connected vertex subsets of some graph is
\emph{non-crossing} if, for every pair $A$, $B$ in ${\cal S}$, both $A\setminus B$ and $B\setminus A$ are connected.
We again refer to Figure~\ref{fig:proper-non-crossing} for an example of non-crossing $H$-graphs (see~\cite{Kr96} for a different type of non-crossing intersection graphs that includes disk graphs).

\begin{definition}
For a multigraph~$H$, a (simple) graph $G$ is a \emph{non-crossing} $H$-graph if $G$ is the intersection graph of a non-crossing family~${\cal S}$ of connected vertex subsets of a subdivision~$H'$ of $H$;
we say that ${\cal S}$ is a \emph{non-crossing} $H$-representation of~$G$.
\end{definition}

\noindent
Just like proper $P_2$-graphs, non-crossing $P_2$-graphs are exactly proper interval graphs~\cite{Chap-nc-paths}.
However, unlike proper $C_2$-graphs, non-crossing 
$C_2$-graphs and proper circular-arc graphs do not coincide: 
the claw is an example of a non-crossing $C_2$-graph that is not proper circular-arc. Hence, 
proper $C_2$-graphs form a proper subclass of the class of non-crossing $C_2$-graphs.
On the other hand, we will show in Section~\ref{s-preliminaries} that there exists a proper $K_{1,3}$-graph, namely the $4$-fan, that is not a non-crossing $T$-graph for any tree~$T$.
So, proper $H$-graphs and non-crossing $H$-graphs form incomparable graph classes in general.

Chaplick~\cite{Chap-nc-paths} gave structural properties of non-crossing $H$-graphs and variants thereof to obtain linear-time algorithms not only for recognizing certain non-crossing $H$-graphs,
but also, for example, for deciding if  a chordal graph is claw-free. Given their structure, Chaplick~\cite{Chap-nc-paths}  asked the following:

 \medskip
 \noindent
 {\it Can some problems be solved faster for non-crossing $H$-graphs than for proper $H$-graphs?}

\medskip
\noindent
We aim to answer this question with respect to the FO Model Checking problem, for which we recall that boundedness of twin-width suffices~\cite{BGKTW22}, and we will therefore also focus on:

\medskip
\noindent
{\it Do non-crossing $H$-graphs have twin-width at most $f(\Vert H \Vert)$ for some function $f$ that only depends on $\Vert H \Vert$?}

\medskip
\noindent
{\bf Our Results.}
We provide positive answers to both questions in various ways.
We first let $H$ be a (simple) forest, and show that the parameterized complexity of FO Model Checking on proper $H$-graphs and non-crossing $H$-graphs may already be different for such~$H$, e.g. if $H=K_{1,3}$. Afterwards, we consider general multigraphs~$H$ and show again a difference in parameterized complexity for proper $H$-graphs and non-crossing $H$-graphs, but in a different way.

When $H$ is a forest, we will use
the concept of delineation (by twin-width), which was introduced by Bonnet et al.~\cite{BonnetC0K0T22} to classify exactly those hereditary subclasses of some graph class~${\cal G}$ for which FO Model Checking is in $\FPT$ (here, ``{\it hereditary}'' means to be closed under vertex deletion). To explain this concept, we first need to introduce another notion, which we explain informally (see~\cite{BonnetC0K0T22} for a precise definition). Namely, a hereditary graph class ${\cal D}$ is {\it monadically independent} if it is possible to construct the class of all graphs from ${\cal D}$ using a $2$-stage process, which is called a {\it (first-order) transduction}. In the first stage, the vertices of every graph in ${\cal D}$ are non-deterministically coloured (that is, are assigned unary predicates). In the second stage, the coloured graphs are transformed to the class of all graphs by means of first-order formulas, using the colouring~\cite{BS1985monadic}. As FO Model Checking is $\AW$-hard on the class of all graphs~\cite{Downey96} (see also~\cite{FG}), 
a result of Dreier, M\"ahlmann, and Torunczyk~\cite{DreierMT24} implies that FO Model Checking is $\AW$-hard on every hereditary, monadically independent graph class. Now, a class of graphs $\mathcal{G}$ is said to be \emph{delineated} if for every hereditary subclass $\mathcal{D}$ of $\mathcal{G}$, the twin-width of $\mathcal{D}$ is bounded if and only if $\mathcal{D}$ is not monadically independent.  
We also recall that if an $n$-vertex graph $G$ has twin-width at most $t$, then deciding a first-order formula $\phi$ for $G$ can be done in time $f(t,|\phi|)\cdot O(n)$ for some computable function $f$ only depending on $t$ and the formula size $|\phi|$, assuming a $t$-contraction sequence is part of the input~\cite[Theorem~1.1]{BKTW22}. Hence, the following holds for every hereditary subclass ${\cal D}$ of a delineated graph class $\cal{G}$: if ${\cal D}$ has bounded twin-width, then FO Model Checking is in \FPT\ on ${\cal D}$ (if a suitable contraction sequence is given), otherwise it is $\AW$-hard.

It is known that the following graph classes are delineated: permutation graphs~\cite{BKTW22}, ordered graphs~\cite{twin-width4}, interval graphs or, more generally, rooted directed path graphs~\cite{BonnetC0K0T22}, and tournaments~\cite{GenietT23}. 
In Section~\ref{s-new}, we extend the delineation of the class of interval graphs, that is, $P_2$-graphs, to the class of $H$-graphs for every (simple) forest $H$, by refining the argument of Bonnet et al.~\cite{BonnetC0K0T22} for interval graphs.

\begin{restatable}{theorem}{thmdel}
\label{thm:effectiveDelineationHGraph}
For every forest $H$, the class of $H$-graphs is delineated.
\end{restatable}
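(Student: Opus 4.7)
The plan is to establish the nontrivial direction of delineation: for any hereditary subclass $\mathcal{D}$ of the class of $H$-graphs that has unbounded twin-width, $\mathcal{D}$ is monadically independent. The converse direction (bounded twin-width implies monadic dependence) is a general feature of twin-width proved in~\cite{BKTW22} and does not depend on~$H$.

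To set up a reduction to the interval-graph case, I would first \emph{type} the vertices of each $G \in \mathcal{D}$ according to their $H$-representation. Fix an $H$-representation $(S_v)_{v \in V(G)}$ in some subdivision $H'$ of $H$. For each $v$, let $B_v := S_v \cap V(H)$ be the set of branch vertices of $H$ contained in $S_v$; if $B_v = \varnothing$, also record the unique edge $e_v \in E(H)$ whose subdivision contains $S_v$. Since $H$ is fixed, the pair $(B_v, e_v)$ takes only $c = c(H)$ possible values. Two structural facts are then immediate. First, any two vertices $u,v$ with $B_u \cap B_v \neq \varnothing$ are adjacent in $G$, because their subsets share a branch vertex; hence the subgraph induced by the vertices with $B_v \neq \varnothing$ has adjacency determined by a bounded-size combinatorial datum, namely a family of subsets of $V(H)$. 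Second, for each $e \in E(H)$, the induced subgraph on vertices of type $(\varnothing, e)$ is an \emph{interval graph}, because a subdivision of $e$ is a path and connected subsets of a path are intervals.

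The core of the argument is a \emph{reduction lemma}: if, for every $e \in E(H)$, the hereditary class $\mathcal{D}_e$ consisting of all induced subgraphs of graphs in $\mathcal{D}$ on type-$(\varnothing, e)$ vertices has bounded twin-width, then $\mathcal{D}$ itself has bounded twin-width. Granted this lemma, the theorem follows at once. If $\mathcal{D}$ has unbounded twin-width, the contrapositive forces some $\mathcal{D}_e$ to have unbounded twin-width; since $\mathcal{D}_e$ is a hereditary class of interval graphs and the class of interval graphs is delineated~\cite{BonnetC0K0T22}, $\mathcal{D}_e$ is monadically independent. Hereditariness of $\mathcal{D}$ gives $\mathcal{D}_e \subseteq \mathcal{D}$, so the transduction witnessing monadic independence of $\mathcal{D}_e$ also witnesses it for~$\mathcal{D}$.

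The main obstacle is the reduction lemma. I would prove it by exhibiting a fixed FO transduction that produces $\mathcal{D}$ from the coloured disjoint union of the classes $\mathcal{D}_e$ together with a constant-size coloured structure on $V(H)$ encoding, for each multi-branch vertex $v$, which branch vertices $v$ contains. Unary predicates mark each vertex's type, and a bounded number of additional relations describe how each multi-branch vertex attaches along each incident branch. The subtle point is that a single multi-branch vertex may intersect many single-branch vertices, producing a half-graph-like pattern on each branch touching $B_v$; however, the footprint of $S_v$ on any such branch is an interval anchored at a branch vertex, so this attachment is FO-definable from the interval representations underlying the $\mathcal{D}_e$. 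Since FO transductions preserve bounded twin-width~\cite{BKTW22}, the assumed bound on each $\mathcal{D}_e$ transfers to $\mathcal{D}$, completing the argument.
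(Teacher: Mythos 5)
Your proposal hinges on the ``reduction lemma'': if each class $\mathcal{D}_e$ of induced subgraphs on single-edge vertices has bounded twin-width, then $\mathcal{D}$ has bounded twin-width. This lemma is false, and the paper's own running example refutes it. Take $\mathcal{D}$ to be the class of proper $K_{1,3}$-graphs, which has unbounded twin-width~\cite{Jedelsky2021thesis}. For any proper $K_{1,3}$-representation, the vertices whose representatives avoid all four branch vertices of $K_{1,3}$ and lie inside the subdivision of a single edge $e$ form a \emph{proper} family of intervals on a path, so each $\mathcal{D}_e$ is a class of proper interval graphs and has bounded twin-width; the multi-branch vertices moreover fall into a bounded number of cliques. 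Yet $\mathcal{D}$ has unbounded twin-width. The complexity of $H$-graphs does not localize to the single-edge parts: it lives precisely in the bipartite \emph{interaction} between the pieces (e.g.\ the central clique of a $K_{1,3}$-representation reaching varying depths into the three legs encodes arbitrary permutations). Your proposed transduction cannot recover these interactions either, because the attachment of a multi-branch vertex along a branch is a binary relation between it and unboundedly many single-edge vertices that is not present in, nor FO-definable from, the disjoint union of the $\mathcal{D}_e$'s plus unary colours. (A smaller error of the same flavour: two multi-branch vertices with disjoint sets $B_u,B_v$ may still intersect inside an edge subdivision, so adjacency among them is not determined by the subsets of $V(H)$.)

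The paper's proof is structured exactly around this difficulty. It fixes a global linear order $\prec$ on $V(G)$ derived from a rooted ordering of the edges of $H$, and shows (Lemma~\ref{lem:findingTransversalPair}) that a large rank division of $\adj{\prec}{G}$ can be traced, by an iterative argument walking along a root-to-leaf path, to two (possibly distinct) edges $e_j,e_{j'}$ of $H$ carrying interval-like blocks $I_1\prec'\dots\prec' I_{t^2}\prec' J_1\prec'\dots\prec' J_{t^2}$ whose cross-rank is still at least $2$; only then does it build an auxiliary interval representation along the path from $e_j$ to $e_{j'}$ and invoke the interval-graph lemma of~\cite{BonnetC0K0T22} to extract a semi-induced transversal pair, concluding via \Cref{lem:seqOrTransversalPair}. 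The two sides of the transversal pair may live on different edges of $H$, which is exactly the cross-piece interaction your decomposition discards. To repair your approach you would need to replace the reduction lemma by an analysis of these bipartite interactions, at which point you essentially arrive at the paper's argument.
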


\noindent
In Section~\ref{s-new}, we also show that for a forest $H$, it suffices to have an $H$-representation for input graphs from a hereditary subclass of $H$-graphs that has bounded twin-width.

\begin{restatable}{theorem}{twedo}\label{t-wedoabitmore}
Let $H$ be a forest and let ${\cal D}$ be a hereditary subclass of $H$-graphs such that every graph in ${\cal D}$ has twin-width at most~$t$, for some constant~$t$. It is possible to compute, in time 
$h(t)\cdot |V(G)|^{O(1)}$, an $f(t)$-contraction sequence of every graph $G\in {\cal D}$ from a given $H$-representation of $G$, where $h,f$ are computable functions that only depend on $t$.
\end{restatable}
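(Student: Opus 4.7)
The plan is to convert the given $H$-representation of $G$ into an ordered binary structure (an ordered graph enriched with a bounded set of unary predicates) whose twin-width is bounded by a function of $t$ and $\Vert H \Vert$, and then invoke an existing FPT algorithm that, given an ordered binary structure of bounded twin-width, computes a bounded-width contraction sequence of it. A contraction sequence of this enriched structure then immediately yields a contraction sequence of $G$ itself, since $G$ sits on the same vertex set and its adjacency is a quantifier-free reduct of the enriched structure.

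For the construction, I would first root each component of the subdivision $H'$ of $H$ and compute a DFS linear order $\prec$ on $V(H')$. Since $H$ is a forest, each $S_v \in \mathcal{S}$ is a subtree of $H'$ with a unique $\prec$-minimum vertex $r(v)$; I would order $V(G)$ lexicographically first by $r(v)$, then by the set of ``branches'' of $H$ that $S_v$ occupies (breaking remaining ties arbitrarily), obtaining a linear order $<$ on $V(G)$. To each $v$ I would attach a \emph{profile}, recording the (bounded) combinatorial type of $S_v$ with respect to $H$: which branches it enters, in which order along $\prec$, and whether it reaches each branch endpoint. Since $\Vert H \Vert$ is fixed, only $g_0(\Vert H\Vert)$ distinct profiles occur, and they may be encoded by unary predicates. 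Together with the edges of $G$ and the order $<$, this defines the ordered colored graph $\widehat{G}$.

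Two facts remain to be checked. First, $G$ is recoverable from $\widehat{G}$ by a quantifier-free first-order interpretation: the adjacency $uv \in E(G)$ can be read off from the order and the profiles of $u$ and $v$, because it depends only on whether their subtrees of $H'$ overlap, which in turn is determined by the $r$-values, the branch profiles, and the order. Second, $\widehat{G}$ has twin-width bounded by some $g(t,\Vert H\Vert)$; this is essentially the effective content of the delineation proof of \cref{thm:effectiveDelineationHGraph}, which already produces such a colored, ordered encoding (its inverse being a first-order transduction). Applying the FPT contraction-sequence algorithm for ordered binary structures of bounded twin-width to $\widehat{G}$ then yields, in time $h(t)\cdot |V(G)|^{O(1)}$, a contraction sequence of $\widehat{G}$ of width $f_0(g(t,\Vert H\Vert))$; restricting to the underlying graph gives an $f(t)$-contraction sequence of $G$.

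The main obstacle is the second bullet point above: verifying that $\widehat{G}$ inherits bounded twin-width from $G$ in a constructive, effective way. The difficulty is that generic transduction arguments preserve bounded twin-width only up to unspecified functions; here one must be careful that the enrichment used to make the interpretation quantifier-free does not itself blow up the twin-width. The resolution is to use exactly the enrichment produced by the argument behind \cref{thm:effectiveDelineationHGraph}, so that the mutual interpretations between $G$ and $\widehat{G}$ are first-order and the twin-width transfer is quantitative and computable, with the branching structure of the forest $H$ providing the bounded palette of profiles needed.
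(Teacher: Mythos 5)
Your high-level strategy is the same as the paper's: derive a linear order on $V(G)$ from the $H$-representation and then feed the resulting ordered structure to the algorithm of Bonnet et al.\ (Twin-width IV) that turns bounded grid rank of an ordered adjacency matrix into a bounded-width contraction sequence (\cref{thm:twwGridRank} in the paper). However, the step you yourself flag as ``the main obstacle'' --- showing that the ordered structure $\widehat{G}$ has twin-width (equivalently, grid rank) bounded by a computable function of $t$ --- is exactly the technical heart of the argument, and your resolution is circular: you appeal to ``the enrichment produced by the argument behind \cref{thm:effectiveDelineationHGraph}'', but that argument is precisely what has to be supplied here. Bounded twin-width of the unordered graph $G$ does \emph{not} imply bounded twin-width of $(G,<)$ for an arbitrary order $<$ read off the representation, and no generic transduction-preservation statement closes this gap, because an arbitrary linear order is not obtainable from $G$ by a transduction.

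What the paper actually proves, and what is missing from your proposal, is the combinatorial lemma (\cref{lem:findingTransversalPair}, resting on \cref{intervalinH-graph} and \cref{lem:transversalPairInIntervalGraphs}): if the adjacency matrix ordered by the carefully chosen order $\prec$ admits a rank-$f(t)$ division, then $G$ contains a large semi-induced transversal pair of half-graphs; such a pair lets one transduce the class of all graphs from $\mathcal{D}$, which contradicts the assumption that $\mathcal{D}$ has twin-width at most $t$ (since twin-width is preserved, computably, under FO transductions). Only this chain makes the grid-rank bound effective and lets \cref{thm:twwGridRank} apply. Two further points: (i) your order (lexicographic by the $\prec'$-minimum $r(v)$ and a coarse branch profile) is weaker than the paper's order, which compares the $\prec'$-minimum of the symmetric difference $S_u\triangle S_v$; the finer order is what makes \cref{obs:propOrder} and hence the transversal-pair extraction go through. (ii) Your claim that adjacency in $G$ is determined by the order together with bounded profiles cannot hold: already for interval graphs, knowing only the order of left endpoints plus a bounded amount of per-vertex data does not determine intersection, since the right endpoints vary over an unbounded range. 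Fortunately no such interpretation is needed --- the paper works directly with $\adj{\prec}{G}$ and never introduces unary predicates.
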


\noindent
Combining Theorems~\ref{thm:effectiveDelineationHGraph} and~\ref{t-wedoabitmore} yields our first main result which, as mentioned, generalizes a known result for interval 
graphs~\cite{BonnetC0K0T22}.

\begin{restatable}{corollary}{thmdel2}\label{thm:effectiveDelineationHGraph2}
Let $H$ be a forest and let $\mathcal{D}$ be a hereditary subclass of $H$-graphs. If every graph in ${\cal D}$ has twin-width at most~$t$, for some constant~$t$, then FO Model Checking can be solved in time $f(t,\ell)\cdot |V(G)|^{O(1)}$, for some computable function $f$ only depending on $t$ and the formula size~$\ell$, provided that an
$H$-representation of the input graph $G\in \mathcal{D}$ is given. Otherwise, FO Model Checking is $\AW$-hard on ${\cal D}$ when parameterized by $\ell$.
\end{restatable}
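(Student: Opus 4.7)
The plan is to derive the corollary as a direct synthesis of the two theorems just stated together with two known results about FO Model Checking that are cited earlier in the excerpt: Theorem~1.1 of Bonnet et al.~\cite{BKTW22} (bounded twin-width plus a contraction sequence yields FPT FO Model Checking) and the theorem of Dreier, M\"ahlmann, and Torunczyk~\cite{DreierMT24} (FO Model Checking is $\AW$-hard on any hereditary, monadically independent graph class). The corollary is essentially a bookkeeping statement packaging these ingredients into a single dichotomy.

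For the algorithmic (FPT) direction, I would argue as follows. Assume every graph in $\mathcal{D}$ has twin-width at most $t$. Given an input $G \in \mathcal{D}$ together with an $H$-representation, I first invoke \cref{t-wedoabitmore}: in time $h(t)\cdot |V(G)|^{O(1)}$ this produces an $f(t)$-contraction sequence of $G$. Feeding $G$, this contraction sequence, and the formula $\phi$ of size $\ell$ into \cite[Theorem~1.1]{BKTW22} then decides whether $G\models\phi$ in time $g(f(t),\ell)\cdot |V(G)|^{O(1)}$ for a computable $g$. Composing, the overall running time is of the promised form $f(t,\ell)\cdot |V(G)|^{O(1)}$. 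Note that the $H$-representation is used only to build the contraction sequence; once that sequence is in hand, \cite{BKTW22} is a black box.

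For the hardness direction, assume twin-width is unbounded on $\mathcal{D}$. By \cref{thm:effectiveDelineationHGraph}, the class of $H$-graphs is delineated, and since delineation is inherited by hereditary subclasses (it is a statement quantifying over all hereditary subclasses), the delineation property applies to $\mathcal{D}$ itself: either $\mathcal{D}$ has bounded twin-width or $\mathcal{D}$ is monadically independent. By assumption we are in the second case. The result of Dreier, M\"ahlmann, and Torunczyk~\cite{DreierMT24}, combined with the $\AW$-hardness of FO Model Checking on all graphs~\cite{Downey96}, then implies that FO Model Checking is $\AW$-hard on $\mathcal{D}$ parameterized by the formula size~$\ell$, which is exactly the claim.

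The two directions are independent and each a short deduction, so there is no real obstacle beyond making sure the quantifiers line up: in the FPT case the $H$-representation is part of the input (as assumed), and in the hardness case no representation is required since hardness is established for the underlying graph class. A minor point to keep precise is that \cref{thm:effectiveDelineationHGraph} asserts delineation of the ambient class of $H$-graphs, which immediately yields the desired dichotomy for each of its hereditary subclasses $\mathcal{D}$; this is the only place where the forest hypothesis on $H$ is used.
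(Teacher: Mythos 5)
Your proposal is correct and follows exactly the route the paper takes: the paper derives the corollary by combining Theorem~\ref{t-wedoabitmore} (to compute a contraction sequence from the $H$-representation, then invoke \cite[Theorem~1.1]{BKTW22}) with Theorem~\ref{thm:effectiveDelineationHGraph} (delineation forces a twin-width-unbounded hereditary subclass to be monadically independent, whence \cite{DreierMT24} gives $\AW$-hardness). The only stylistic quibble is your phrase ``delineation is inherited by hereditary subclasses''---nothing is inherited; the definition of delineation already quantifies over all hereditary subclasses, as you yourself note in the same sentence.
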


\noindent
For a forest~$H$, we now apply \Cref{thm:effectiveDelineationHGraph2} to proper $H$-graphs and non-crossing $H$-graphs. We first consider proper $H$-graphs.
Because proper $K_{1,3}$-graphs have unbounded twin-width~\cite{Jedelsky2021thesis}, we
immediately 
obtain the following negative result as a consequence of \Cref{thm:effectiveDelineationHGraph2}.

\begin{corollary}\label{c-pp}
FO Model Checking is $\AW$-hard on proper $K_{1,3}$-graphs when parameterized by the formula size $\ell$.
\end{corollary}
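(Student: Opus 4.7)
The plan is to apply \Cref{thm:effectiveDelineationHGraph2} to the class $\mathcal{D}$ of proper $K_{1,3}$-graphs, taking $H = K_{1,3}$. Since $K_{1,3}$ is a (simple) tree, it is in particular a forest, so the hypothesis on $H$ is satisfied. What needs to be checked is that $\mathcal{D}$ is a hereditary subclass of the class of $K_{1,3}$-graphs, and that the twin-width of graphs in $\mathcal{D}$ is not uniformly bounded.

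For the first point, every proper $K_{1,3}$-graph is by definition a $K_{1,3}$-graph, so $\mathcal{D}$ is a subclass. For heredity, given any $G \in \mathcal{D}$ with proper $K_{1,3}$-representation $\mathcal{S} = \{S_v : v \in V(G)\}$ on some subdivision $H'$ of $K_{1,3}$, and any $v \in V(G)$, the family $\mathcal{S} \setminus \{S_v\}$ is still a family of connected vertex subsets of $H'$, and removing a set from a proper family cannot create a proper containment among the remaining sets. Thus $\mathcal{S} \setminus \{S_v\}$ is a proper $K_{1,3}$-representation of $G - v$, so $G - v \in \mathcal{D}$.

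For the second point, Jedelsk\'y~\cite{Jedelsky2021thesis} proved that proper $K_{1,3}$-graphs have unbounded twin-width, meaning that no constant $t$ bounds the twin-width of every graph in $\mathcal{D}$. Consequently, the first alternative in the dichotomy of \Cref{thm:effectiveDelineationHGraph2} fails, and we fall into the second alternative: FO Model Checking is $\AW$-hard on $\mathcal{D}$ when parameterized by the formula size $\ell$, as desired.

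No serious obstacle is anticipated: the corollary is a direct instantiation of the delineation framework of \Cref{thm:effectiveDelineationHGraph2} combined with Jedelsk\'y's unboundedness result, and the only nontrivial bookkeeping step is the straightforward verification that $\mathcal{D}$ is hereditary.
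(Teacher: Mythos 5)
Your proposal is correct and matches the paper's argument exactly: the corollary is obtained by instantiating \Cref{thm:effectiveDelineationHGraph2} with $H=K_{1,3}$ and $\mathcal{D}$ the (hereditary) class of proper $K_{1,3}$-graphs, then invoking Jedelsk\'y's result that this class has unbounded twin-width to land in the $\AW$-hardness branch. Your extra verification of heredity is fine and is also noted in the paper's preliminaries.
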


\noindent
But what about non-crossing $H$-graphs?
As we will see, Corollary~\ref{c-pp} will be in line with our result that there exist proper $K_{1,3}$-graphs that are not non-crossing $T$-graphs for any tree~$T$ (see Section~\ref{s-preliminaries}). Namely, we will give an affirmative answer to our second research question about boundedness of twin-width of non-crossing $H$-graphs, not only for forests~$H$ but for {\it all} multigraphs~$H$. 

In order to do this, we first show in Section~\ref{sec:sketch} that for every multigraph~$H$, the class of non-crossing $H$-graphs has bounded proper mixed-thinness (Theorem~\ref{thm:top-prop}). This width parameter has been introduced by
Balab\'an, Hlinen\'y and Jedelsk\'y~\cite{BHJ-DM}, who also proved that proper mixed-thinness is a less powerful width parameter than twin-width. We refer to Section~\ref{s-preliminaries} for the definition of proper mixed-thinness. 

\begin{restatable}{theorem}{thmtopprop}
\label{thm:top-prop}
For a multigraph~$H$, every non-crossing $H$-graph $G$ has proper mixed-thinness at most $t=2^{\Vert H \Vert}(4^{\Vert H \Vert}-1)(\Vert H\Vert^2+1)+\Vert H \Vert$; moreover, a proper
$t$-mixed-thin representation can be obtained in polynomial time from a non-crossing $H$-representation of $G$.
\end{restatable}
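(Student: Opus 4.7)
The plan is to derive, from a given non-crossing $H$-representation $\mathcal{S}=\{S_v:v\in V(G)\}$, both a linear order on $V(G)$ and a partition into $t$ classes, and then to verify the proper mixed-thin axioms class by class. The bound $t$ will come from a signature-based classification of the sets in $\mathcal{S}$, and the non-crossing hypothesis is what ultimately makes the properness of the representation go through.

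First, I would single out the $S\in\mathcal{S}$ that lie entirely inside a single open edge of the subdivision $H'$. For each edge $e\in E(H)$ the sets contained in the interior of (the subdivision of) $e$ form a non-crossing family of subintervals of a path, i.e.\ a proper-interval-type representation, and contribute one class per edge of $H$; this accounts for the additive $\Vert H\Vert$ term. For the remaining sets $S$, I would record the signature $\sigma(S)=(V_S,F_S,\tau_S)$, where $V_S:=S\cap V(H)$ is the (non-empty, connected) set of branching vertices of $H$ contained in $S$, $F_S\subseteq E(H)$ is the set of edges of $H$ whose subdivision is entirely covered by $S$, and $\tau_S$ assigns to each edge $e=uv\in E(H)\setminus F_S$ one of four labels in $\{00,01,10,11\}$ recording whether $S$ contains an initial segment of $e$ from the $u$-side and whether it contains one from the $v$-side. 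Discarding the all-zero edge-type (which would force $S$ to be edge-internal and was handled separately), this yields at most $2^{\Vert H\Vert}(4^{\Vert H\Vert}-1)$ signatures.

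Second, within one signature class the sets $S$ all share the same skeletal shape and differ only in the lengths of the initial segments of each edge $e$ they extend into, for those edges $e$ where $\tau_S(e)\neq 00$ and $e\notin F_S$. Here I would invoke the non-crossing property: for two sets $S,S'$ with the same signature, $S\setminus S'$ and $S'\setminus S$ must both be connected, which forces their prefix lengths on each relevant edge to be linearly comparable and, more importantly, forces these per-edge comparisons to be \emph{simultaneously coherent} across all edges incident to a common branching vertex of $H$. Ordering the vertices of a signature class by a lexicographic composition of the per-edge length orders yields a linear order witnessing thinness within the class, and at most $\Vert H\Vert^{2}+1$ further refinements are used to ensure both the ``left-to-right'' and the ``right-to-left'' monotonicity conditions hold; this is the origin of the $\Vert H\Vert^{2}+1$ factor in $t$.

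Third, I would glue the individual class orders into one global order on $V(G)$, for instance by following a fixed depth-first traversal of $H$ that lays out signature classes in a canonical sequence, and then verify the cross-class part of the proper mixed-thin axioms. I expect this final compatibility check to be the main obstacle: proper mixed-thinness requires that the bipartite adjacency pattern between any two classes also satisfies the proper-thin monotonicity, and orderings that look consistent when read from one end of $H$ may appear reversed when viewed from the other. The non-crossing condition should be precisely the combinatorial ingredient that rules out the obstructions, because it prohibits exactly the ``zig-zag'' swap between two sets meeting at a branching vertex that would violate proper monotonicity across signature classes. Polynomial-time computability of the resulting representation is then immediate, since extracting signatures, sorting within each class, and traversing $H$ all run in time polynomial in $|V(G)|$ and $\Vert H\Vert$.
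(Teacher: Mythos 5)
Your overall architecture (classify representatives by a per-edge signature, order within each class, then combine classes pairwise) matches the paper's, and your class count is essentially the paper's $(4^{\Vert H\Vert}-1)2^{\Vert H\Vert}$ profiles plus $\Vert H\Vert$ edge-interior classes. But two genuine gaps remain. First, the origin of the $\Vert H\Vert^{2}+1$ factor is not ``further refinements to ensure monotonicity'': it is a concrete structural consequence of the non-crossing hypothesis that you have not extracted. If $X_v$ and $X_w$ have the same signature, then $X_v\setminus X_w$ contains no branching vertex of $H$, so its connectivity forces it to lie inside the interior of a \emph{single} edge of $H$ (and likewise for $X_w\setminus X_v$). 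Fixing an ambassador $v_i$ in each class and refining by the at most two edges on which a vertex disagrees with $v_i$ yields $1+\Vert H\Vert+\Vert H\Vert(\Vert H\Vert-1)=\Vert H\Vert^{2}+1$ subclasses. Without this, your ``lexicographic composition of the per-edge length orders'' has no reason to be strongly consistent: being linearly comparable edge-by-edge still allows one set to extend strictly beyond another on two different edges simultaneously, which is exactly what non-crossing forbids and what your order would not handle.

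Second, and more seriously, the cross-class step --- which you defer to ``I expect the non-crossing condition to rule out the obstructions'' --- is where the real content lies, and your sketch omits the one ingredient that makes it work: the freedom to choose $E_{i,j}=E(\overline{G}[V^i,V^j])$, i.e., to complement the bipartite graph between two classes. When two classes interact on two subdivided edges (or on an edge both of whose endpoints lie in the representatives of one class), their union induces a proper \emph{circular-arc} graph; no choice of orders makes the uncomplemented bipartite adjacency satisfy (CO) and (SC), and one must complement and invoke the fact that proper circular-arc graphs are inversion-free proper $2$-mixed-thin (the paper's Lemma~\ref{lem:circ-arc-2-pmixed}, proved via the forbidden-bipartite-pattern characterization of Lemma~\ref{lem:patterns}). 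That complementation is unavoidable is witnessed by Theorem~\ref{t-nobound}: non-crossing $H$-graphs have unbounded thinness, hence unbounded proper thinness, whenever $H$ contains a cycle, so any argument that only permutes vertices without ever complementing cross-class edges must fail. Relatedly, your plan to glue everything into ``one global order on $V(G)$'' via a traversal of $H$ cannot succeed as stated: condition (AL) permits, and the construction genuinely requires, reversing one class's order relative to another's on a per-pair basis, so the final object is a family of pairwise orders $<_{ij}$, not a single linear order.
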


\noindent
In contrast to Theorem~\ref{thm:top-prop}, proper $H$-graphs may have unbounded proper mixed-thinness, because they may have unbounded twin-width~\cite{Jedelsky2021thesis}. However,
we prove Theorem~\ref{thm:top-prop} by adapting a proof of a corresponding but incomparable result of Balab\'an, Hlinen\'y and Jedelsk\'y~\cite{BHJ-DM}, which shows that some proper $H$-graphs
{\it do have} bounded proper mixed-thinness, namely the class of proper $H$-graphs with a 
proper $H$-representation~${\cal S}$ in which each set induces a path in the corresponding subdivision $H'$ of $H$. We state their result in Section~\ref{sec:sketch} (as Theorem~\ref{thm:BHJ-DM}) before proving Theorem~\ref{thm:top-prop}. 
Moreover, in Section~\ref{s-notthecase}, we show that for connected graphs~$H$ we cannot replace ``proper mixed-thinness'' by ``(proper) thinness'' in Theorem~\ref{thm:top-prop} unless $H$ is a tree. In fact, the thinness bound that we show for classes of $H$-graphs where $H$ is a tree improves a known bound on linear mim-width for those classes~\cite{Algo-H-graphs}.

Balab\'an, Hlinen\'y and Jedelsk\'y~\cite{BHJ-DM} proved that for every integer $t\geq 1$, the twin-width of a graph $G$ with proper mixed-thinness $t$ is at most $9t$, and moreover a $9t$-contraction sequence of $G$ can be computed in polynomial time from a proper $t$-mixed-thin representation of~$G$. Recall also that Bonnet et al.~\cite{BKTW22} proved that deciding a first-order formula $\phi$ for an $n$-vertex graph $G$ with a given $d$-contraction sequence can be done in $f(d,|\phi|)\cdot O(n)$ time for some computable function $f$  only depending on $d$ and $|\phi|$. Combining these two results with Theorem~\ref{thm:top-prop} yields the following.

\begin{corollary}\label{t-fpt}
For a multigraph~$H$, FO Model Checking is in $\mathsf{FPT}$ for non-crossing $H$-graphs when parameterized by $\Vert H\Vert + \ell$, where $\ell$ is the size of a formula, provided that a non-crossing $H$-representation of the input graph~$G$ is given.
 \end{corollary}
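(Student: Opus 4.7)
The plan is a straightforward chain of the three results invoked in the paragraph just above the corollary, so there is essentially no new content to produce: the task is to verify that the parameter dependencies and algorithmic costs compose correctly.

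First, given $G$ together with a non-crossing $H$-representation, I apply \Cref{thm:top-prop} to compute, in polynomial time, a proper $t$-mixed-thin representation of $G$, where $t = 2^{\Vert H \Vert}(4^{\Vert H \Vert}-1)(\Vert H \Vert^2+1) + \Vert H \Vert$. Crucially, $t$ depends only on $\Vert H \Vert$. Second, I feed this representation into the polynomial-time procedure of Balab\'an, Hlinen\'y and Jedelsk\'y~\cite{BHJ-DM} to obtain a $9t$-contraction sequence of $G$; since $9t$ still depends only on $\Vert H \Vert$, this witnesses that the twin-width of $G$ is bounded by a function of the parameter. Third, I invoke \cite[Theorem~1.1]{BKTW22} on $G$, the $9t$-contraction sequence, and the input formula $\phi$ of size $\ell$ to decide $G \models \phi$ in time $f(9t, \ell) \cdot O(|V(G)|)$ for some computable $f$.

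Composing the three running times and absorbing all factors depending on $\Vert H \Vert$ into a single computable function of $\Vert H \Vert + \ell$ yields a total runtime of the form $g(\Vert H \Vert, \ell) \cdot |V(G)|^{O(1)}$, which is exactly the FPT bound claimed. The only point to watch is administrative rather than mathematical: one must check that each step is genuinely constructive (not merely an existence statement) and that the polynomial factor in $|V(G)|$ at each stage does not acquire an unwanted exponent depending on $\ell$. All three cited results (\Cref{thm:top-prop}, \cite{BHJ-DM}, \cite{BKTW22}) are explicitly algorithmic with the correct shape of runtime, so the composition is clean and no additional work is required.
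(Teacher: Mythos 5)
Your proposal is correct and follows exactly the same three-step chain the paper uses: \Cref{thm:top-prop} to get a proper $t$-mixed-thin representation with $t$ depending only on $\Vert H \Vert$, the result of Balab\'an, Hlinen\'y and Jedelsk\'y~\cite{BHJ-DM} to convert it into a $9t$-contraction sequence in polynomial time, and \cite[Theorem~1.1]{BKTW22} to decide the formula. Nothing further is needed.
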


\noindent
Note that \Cref{t-fpt} can be viewed as a more general result than what is obtained by combining
\Cref{thm:effectiveDelineationHGraph2} and \Cref{thm:top-prop}, because doing the latter requires $H$ to be a (simple) forest instead of an arbitrary multigraph. However,  
Corollary~\ref{thm:effectiveDelineationHGraph2} only requires an $H$-representation as part of the input, whereas Corollary~\ref{t-fpt} requires a non-crossing $H$-representation.

\begin{table}[t]
\centering
\begin{tabular}{| l l l l|}
 \hline
 Parameter & Graph class & Complexity & Reference \\ [0.5ex]
 \hline\hline
  $\Vert H \Vert$ & $H$-graphs & $\mathsf{XP}$ & \cite{ChaplickTVZ21}\\
   & proper $H$-graphs & $\mathsf{XP}$ &  \cite{ChaplickTVZ21} (from $H$-graphs) \\
    & non-crossing $H$-graphs &  $\mathsf{XP}$ &  \cite{ChaplickTVZ21} (from $H$-graphs)\\
   $\Vert H \Vert + k$ & $H$-graphs & $\mathsf{W}[1]$-hard & \cite{Algo-H-graphs}
   (strengthened by \Cref{hardness})\\
   & proper $H$-graphs & $\mathsf{W}[1]$-hard  & \Cref{hardness} \\
    & non-crossing $H$-graphs & $\mathsf{FPT}$  & Corollary~\ref{t-fpt}\\
\hline
\end{tabular}
\caption{Complexity of \textsc{Independent Set} on $H$-graphs, where $k$ denotes the size of a solution.
It is still open whether {\sc Independent Set} is in $\mathsf{FPT}$
for non-crossing $H$-graphs when parameterized by $\Vert H \Vert$ only.}
\label{table:1}
\end{table}

Corollary~\ref{t-fpt} applies, in particular, to classic problems on non-crossing $H$-graphs, such as {\sc Independent Set}, {\sc Clique} and {\sc Dominating Set}, which are all special cases of the FO Model Checking problem.
Note that \Cref{c-pp} does not necessarily exclude fixed-parameter tractable algorithms for any of these problems on proper $H$-graphs. However,
our next result, proven in Section~\ref{s-hard}, provides hard evidence, assuming $\mathsf{W}[1]\neq \mathsf{FPT}$. It gives, together with Corollary~\ref{t-fpt}, a positive answer to our first research question, which was originally posed by Chaplick~\cite{Chap-nc-paths}.

\begin{restatable}{theorem}{hardness}
\label{hardness}
{\sc Independent Set} is $\mathsf{W}[1]$-hard for proper $H$-graphs when parameterized by $\Vert H\Vert +k$, even if a proper $H$-representation of the input graph $G$  is given.
\end{restatable}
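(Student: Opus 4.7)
The plan is to adapt the $\mathsf{W}[1]$-hardness reduction of Fomin, Golovach and Raymond~\cite{Algo-H-graphs}, which proves that \textsc{Independent Set} on $H$-graphs is $\mathsf{W}[1]$-hard parameterized by $\Vert H \Vert + k$, so that its output is equipped with a \emph{proper} $H$-representation. We reduce from a suitable $\mathsf{W}[1]$-hard problem (e.g.\ \textsc{Multicolored Clique} parameterized by clique size $k$, or the variant of $k$-\textsc{Clique} on regular graphs used in~\cite{Algo-H-graphs}): from an instance with parameter $k$, we construct a multigraph $H$ with $\Vert H \Vert$ bounded by some function $f(k)$, a subdivision $H'$ of $H$, and a proper family $\mathcal S$ of connected subsets of $V(H')$ such that the intersection graph $G_{\mathcal S}$ has an independent set of a prescribed size $k' = g(k)$ if and only if the input is a yes-instance.

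The first step is to set up, essentially as in~\cite{Algo-H-graphs}, a \emph{core} $H$-representation $\{S_v\}_v$ of a graph whose large independent sets encode solutions of the input. The second, and decisive, step enforces properness by \emph{uniformizing cardinalities}: we modify the family so that every $S_v$ contains exactly the same number of vertices of $V(H')$, by appending a short \emph{private tail} to each set. Two distinct connected subsets of $V(H')$ of equal cardinality cannot be in proper containment, so equal-cardinality families are automatically proper.

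The delicate point is that we cannot afford a fresh pendant of $H$ per vertex, as this would make $\Vert H\Vert$ scale with $n = |V(G)|$. Instead, all tails are routed along a bounded number $f(k)$ of long subdivided edges attached to $H$: along each such edge, the tails are laid out consecutively as \emph{pairwise disjoint, non-nested} subpaths, one per vertex assigned to that edge. Because each such edge is a single edge of $H$ regardless of how long it is subdivided, this contributes only $O(f(k))$ to $\Vert H \Vert$. A direct check confirms that the tails introduce no new intersections between the cores of different $S_v$, so the adjacency pattern used in the original reduction is preserved, and the equivalence between $k$-cliques in the input and independent sets of $G_{\mathcal S}$ of size $k'$ transfers from~\cite{Algo-H-graphs}.

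The main obstacle is precisely this balancing act between uniform cardinality (needed for properness) and bounded $\Vert H \Vert$ (needed for the hardness to be with respect to the stated parameter). Routing all private tails through a bounded number of long subdivided edges resolves it: the number of edges of $H$ depends only on the number of gadgets/color classes, hence only on $k$, whereas the arbitrarily long subdivisions absorb any variation in core-set cardinalities that depends on $n$. The final representation is constructed explicitly, so the hardness is witnessed even when a proper $H$-representation is part of the input.
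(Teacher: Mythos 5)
There is a genuine gap, and it sits exactly at the point you call ``decisive.'' A tail $T_v$ appended to $S_v$ must keep $S_v\cup T_v$ connected, so $T_v$ has to be attached at a vertex of $H'$ that $S_v$ already contains. If several vertices' tails are routed along the \emph{same} long pendant (subdivided) edge $p q_1 q_2\cdots$, connectivity forces each such tail to be an initial segment $\{q_1,\dots,q_j\}$; these segments all contain $q_1$, so they pairwise intersect and create spurious adjacencies, destroying the adjacency pattern of the core representation. Conversely, if you insist the tails be pairwise disjoint subpaths laid out consecutively along that edge, then all but the first are disconnected from their cores and the family is no longer a valid representation. So ``pairwise disjoint, non-nested subpaths, one per vertex'' on a shared edge is incompatible with connectedness, and the claim that ``the tails introduce no new intersections'' cannot be checked --- it is false for any shared routing. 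The only way out is a fresh branch point per set that needs a private escape direction, which is precisely what you say you cannot afford; indeed, the paper's Remark in Section~\ref{s-hard} shows that no conversion of an arbitrary $H$-representation into a proper $H'$-representation with $\Vert H'\Vert$ bounded by a function of $\Vert H\Vert$ alone can exist (the disjoint union of many claws is an interval graph witnessing this), so any parameter-oblivious uniformization scheme is doomed.

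The paper escapes this obstruction with a win--win argument (Proposition~\ref{thm:proper-H}) that your proposal is missing. Starting from the given $H$-representation, it greedily extracts $k_1$ pairwise disjoint intervals from the subdivided edges; these form an independent set, so if $k_1\ge k$ the instance is a trivial yes. Otherwise every representative contains one of only $|V(H)|+2k_1 = O(|V(H)|+k)$ anchor vertices, and attaching one pendant leaf per anchor --- extending the sets through that anchor onto the leaf in \emph{reverse} order of inclusion, i.e.\ deliberately nested tails rather than disjoint ones --- kills all proper containments while adding only $O(|V(H)|+k)$ edges to $H$. Note that the tails there are allowed to intersect each other because they are only shared among sets that already pairwise intersect (they all contain the anchor), which is the observation your construction lacks. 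To repair your proof you would need both this ``large independent set or few anchors'' dichotomy and the switch from disjoint equal-length tails to nested inclusion-reversing tails.
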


\noindent
Theorem~\ref{hardness} also strengthens the aforementioned  $\mathsf{W}[1]$-hardness result of Fomin, Golovach and Raymond~\cite{Algo-H-graphs} for general $H$-graphs, when parameterized by $\Vert H\Vert+k$; see also the summary in Table~\ref{table:1}.

From our results we conclude that, perhaps somewhat surprisingly, non-crossing $H$-graphs are indeed the ``correct'' generalization of proper interval graphs in the $H$-graph setting, at least from the viewpoint of parameterized complexity. Nevertheless, there are still many open structural and algorithmic problems resulting from our paper, which we list in Section~\ref{s-conclusion}.

\section{Preliminaries}\label{s-preliminaries}

\textbf{Basic graph terminology.} A class of graphs is {\it hereditary} if it is closed under vertex deletion. Note that for every graph $H$, the classes of $H$-graphs, proper $H$-graphs and non-crossing $H$-graphs are hereditary.

Let $G$ be a graph. For disjoint subsets $S, T \subseteq V(G)$, we say that $S$ is \textit{complete to} $T$ if every vertex of $S$ is adjacent to every vertex of $T$, whereas $S$ is \textit{anticomplete to} $T$ if there are no edges with one endpoint in $S$ and the other in $T$. A set $M\subseteq E(G)$ is a {\it matching} if no two edges of~$M$ have a common end-vertex. A matching~$M$ is {\it induced} if there is no edge in $G$ between vertices of different edges of $M$. An \textit{independent set} of $G$ is a set of pairwise non-adjacent vertices of $G$. The maximum size of an independent set of $G$ is denoted by $\alpha(G)$. A \textit{clique} of $G$ is a set of pairwise adjacent vertices of $G$. A set $S \subseteq V(G)$ \textit{induces} the subgraph $G[S] = (S, \{uv : u,v \in S, \ uv \in E(G)\})$.
Two vertices $u$ and $v$ in $G$ are \emph{true twins} if $u$ and $v$ are adjacent and have the same set of neighbours in $V(G)\setminus \{u,v\}$.

The complement of a graph $G$ is the graph $\overline{G}$ with vertex set $V(G)$, such that $uv \in E(\overline{G})$ if and only if $uv \notin E(G)$. For two vertex-disjoint graphs $G_1$ and $G_2$, the \emph{disjoint union} of $G_1$ and $G_2$ is the graph~$G$ with $V(G) = V(G_1) \cup V(G_2)$ and $E(G) = E(G_1) \cup E(G_2)$.

A {\it forest} is a graph with no cycles and a {\it tree} is a connected forest.
The path, cycle and complete graph on $n$ vertices are denoted by $P_n$, $C_n$ and $K_n$, respectively. The complete bipartite graph with partition classes of sizes $s$ and $t$ is denoted by $K_{s,t}$; recall that $K_{1,3}$ is also known as the claw.

%We refer the reader to \cite{CFK} for a detailed introduction to the field of parameterized algorithms and just recall the main notions needed in our paper. A parameterized problem is \textit{fixed-parameter tractable} (FPT for short) if, given an instance $(I, k)$ where $k$ is the \textit{parameter}, it can be solved in time $f(k)\cdot |I|^{O(1)}$, for some function $f$. The parameterized complexity class $\mathsf{FPT}$ consists of all fixed-parameter tractable problems. A parameterized problem belongs the class $\mathsf{XP}$ if it can be solved in time $|I|^{f(k)}$ for some function $f$.

\medskip
\noindent
\textbf{$H$-graph representations.} For a 
 multigraph $H$, let $G$ be an $H$-graph with $H$-representation~${\cal S}$. Thus, ${\cal S}$ is a family of connected vertex subsets of some subdivision $H'$ of $H$, in which case we call $H'$ the \emph{framework}.
Moreover, for every vertex $u\in V(G)$, there exists a unique vertex subset $S_u\subseteq V(H')$, which we call the {\it representative} of $u$ in ${\cal S}$. By definition, for every two distinct vertices $u$ and $v$ in $G$ it holds that $u$ and $v$ are adjacent if and only if $S_u\cap S_v\neq \varnothing$. Now, let $x$ and $y$ be two adjacent vertices of $H$. To obtain $H'$, the edge $xy$ may be replaced by a path $xz_1\cdots z_ry$ for some $r\geq 1$.
In that case, we say that $z_1,\ldots,z_r$ are the \emph{internal vertices} of the edge~$xy$.
If we order the path $xz_1\cdots z_ry$ from $x$ to $y$, then we obtain the ordered path $S(xy)$.

\medskip
\noindent
\textbf{Width parameters.} 
We first define the notion of twin-width, introduced in~\cite{BKTW22}. Afterwards, we will define the notion of thinness and several of its variants. 

A~\emph{trigraph} $G$ is a graph with two disjoint edge sets, the black edge set $E(G)$ and the red edge set $R(G)$.  
A~(vertex) \emph{contraction} consists of merging two (non-necessarily adjacent) vertices $u, v\in V(G)$ into a new vertex $w$, where the incidence of $w$ is defined as follows. There is a black edge $wz$ if and only if $uz$ and $vz$ were previously black edges. There is a red edge $wz$ if and only if either $uz$ or $vz$ was previously red or exactly one of the two pairs $uz, vz$ was a black edge (and the other was a non-edge). The rest of the graph remains unchanged.
A~\emph{contraction sequence} of an $n$-vertex (tri)graph $G$ is a sequence of trigraphs $G=G_n, \ldots, G_1=K_1$ such that $G_i$ is obtained from $G_{i+1}$ by performing one contraction.
A~\mbox{\emph{$d$-contraction sequence}} 
of $G$ is a contraction sequence $G=G_n, \ldots, G_1=K_1$ such that for every $i\in [n]$, every vertex $v\in V(G_i)$ is incident to at most~$d$ red edges.
The~\emph{twin-width} of $G$ is the minimum integer~$d$ such that $G$ admits a 
$d$-contraction sequence.

A graph $G=(V,E)$ is \emph{$k$-thin} if there is an ordering $v_1, \dots , v_n$ of~$V$ and a partition of~$V$ into $k$ classes $(V^1,\dots,V^k)$ such that for every triple $(r,s,t)$ with $r<s<t$ the following holds: if $v_r$ and $v_s$ belong to the same class $V^i$ for some $i\in\{1,\ldots,k\}$ and moreover $v_r v_t\in E$, then $v_sv_t \in E$. In this case, the ordering $v_1, \dots , v_n$ and the partition $(V^1,\dots,V^k)$ are said to be \emph{consistent} and form a {\it $k$-thin representation} of~$G$. The \emph{thinness} $\thin(G)$ of~$G$, introduced by Mannino et al.~\cite{M-O-R-C-thinness}, is the minimum integer~$k$ such that $G$ is $k$-thin.
See Figure~\ref{fig:thinness} for an example.
Interval graphs are known to be equivalent to $1$-thin graphs~\cite{Ola-interval,R-PR-interval}, and a vertex ordering consistent with the partition into a single class is called an \emph{interval order} for $G$.

\begin{figure}[t]
\begin{center}
    \begin{tikzpicture}[scale=0.7]

\foreach \x in {0.5} {
    \vertex{-6}{0+4}{v1};
    \vertex{-6}{3*\x+4}{v2};
    \vertex{-6}{6*\x+4}{v3};
    \vertex{-6}{8*\x+4}{v4};

    \vertex{-4}{2*\x+4}{w1};
    \vertex{-4}{5*\x+4}{w2};
    \vertex{-4}{10*\x+4}{w3};

    \vertex{-2}{1*\x+4}{z1};
    \vertex{-2}{4*\x+4}{z2};
    \vertex{-2}{7*\x+4}{z3};
    \vertex{-2}{9*\x+4}{z4};
    \vertex{-2}{11*\x+4}{z5};
}

\node[label=above:{$V^1$}] at (-6,2.5) {};
\node[label=above:{$V^2$}] at (-4,2.5) {};
\node[label=above:{$V^3$}] at (-2,2.5) {};

\path (v4) edge [e1,bend right=25] (v2); \path (v4) edge [e1]
(v3);

\path (w3) edge [e1] (w2);

\path (z4) edge [e1,bend left=25] (z2); \path (z4) edge [e1] (z3);
\path (z3) edge [e1] (z2);

\path (v2) edge [e1] (w1); \path (v3) edge [e1] (w2); \path (v4)
edge [e1] (w1); \path (v4) edge [e1] (w2); \path (v1) edge [e1]
(w2); \path (v4) edge [e1] (w3); \path (v2) edge [e1] (w2); \path
(v1) edge [e1] (w1);

\path (v1) edge [e1] (z1); \path (v2) edge [e1] (z2); \path (v2)
edge [e1,bend left=15] (z1); \path (v1) edge [e1,bend left=15]
(z2); \path (v2) edge [e1,bend left=15] (z3); \path (v4) edge [e1]
(z4); \path (v3) edge [e1] (z4); \path (v3) edge [e1] (z2); \path
(v3) edge [e1] (z3);

\path (z2) edge [e1] (w1); \path (z3) edge [e1] (w2); \path (z5)
edge [e1] (w3); \path (z5) edge [e1] (w2); \path (z2) edge [e1]
(w2); \path (z1) edge [e1] (w1); \path (z1) edge [e1] (w2);

\end{tikzpicture}
\end{center}
\vspace*{-0.75cm}
\caption{A $3$-thin representation of a graph. The vertices are ordered increasingly by their $y$-coordinate, and each class corresponds to a column of vertices.}\label{fig:thinness}
\end{figure}
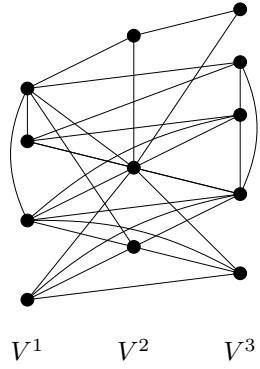

A graph $G = (V, E)$ is \textit{proper $k$-thin} if there is an ordering $v_1, \ldots, v_n$ of $V$ and a partition of $V$ into $k$ classes $(V^1, \ldots, V^k)$ such that for every triple $(r, s, t)$ with $r < s < t$ the following holds: if $v_r, v_s$ belong to the same class and $v_tv_r \in E$, then $v_tv_s \in E$ and if $v_s, v_t$ belong to the same class and $v_rv_t \in E$, then $v_rv_s \in E$. Equivalently, $G$ is proper $k$-thin if
both $v_1, \ldots, v_n$ and $v_n, \ldots, v_1$ are consistent with the partition. In this case, the partition and the ordering $v_1, \ldots, v_n$ are said to be \emph{strongly consistent}, and form a {\it proper $k$-thin representation} of $G$.  The \emph{proper thinness} $\pthin(G)$ of $G$, introduced by Bonomo and de Estrada~\cite{BE19}, is the minimum $k$ such that $G$ is proper $k$-thin and is denoted by $\pthin(G)$. Proper interval graphs are known to be equivalent to proper $1$-thin graphs~\cite{Rob-uig}, and a vertex ordering strongly consistent with the partition into a single class is called a \emph{proper interval order} for $G$.

We now define the parameters mixed-thinness and proper mixed-thinness. These parameters were introduced by Balab\'an, Hlinen\'y and Jedelsk\'y~\cite{BHJ-DM} and form a lower bound for thinness and proper thinness, respectively.
Let $G = (V, E)$ be a graph and let $k > 0$ be an integer.
For disjoint $W,W' \subseteq V$, we let $G[W,W']$ be the bipartite subgraph of $G$ induced by the edges with one end-vertex in $W$ and the other in $W'$. For convenience, we also let $G[W,W] = G[W]$. For two linear orders $<$ and $<'$ on the same set, we say that $<$ and $<'$ are \emph{aligned} if they are the same or one is the inverse of the other. Now, $G$ is \emph{$k$-mixed-thin} if there exists a partition $(V^1,\dots, V^k)$ of $V$ and, for each $1 \leq i \leq j \leq k$, a linear order $<_{ij}$ on $V^i \cup V^j$ and a choice of
$E_{i,j} \in \{E(G[V^i,V^j]), E(\overline{G}[V^i,V^j])\}$ such that, for every $1 \leq i \leq j \leq k$, the following hold:

\begin{description}
\item[(AL)] the restriction of $<_{ij}$ to $V^i$ (resp. to $V^j$) is aligned with $<_{ii}$ (resp. $<_{jj}$);
\item[(CO)] for every triple $u, v, w$ such that ($\{u, v\} \subseteq V^i$ and $w \in V^j$) or ($\{u, v\} \subseteq V^j$ and $w \in V^i$),
we have that if $u <_{ij} v <_{ij} w$ and $uw \in E_{i,j}$, then $vw \in E_{i,j}$.
\end{description}

\noindent
A $k$-mixed-thin graph $G$ is \emph{proper} if, in addition to (AL) and (CO), the following holds:

\begin{description}
\item[(SC)] for every triple $u, v, w$ such that ($\{v, w\} \subseteq V^i$ and $u \in V^j$) or ($\{v, w\} \subseteq V^j$ and $u \in V^i$),
we have that if $u <_{ij} v <_{ij} w$ and $uw \in E_{i,j}$, then $uv \in E_{i,j}$.
\end{description}

\noindent
A (proper) $k$-mixed-thin graph $G$ is \emph{inversion-free} if (AL) is replaced with the following:

\begin{description}
\item[(IN)] the restriction of $<_{ij}$ to $V^i$ (resp. to $V^j$) is equal to $<_{ii}$ (resp. $<_{jj}$).
\end{description}

\noindent
We can represent the edge sets choice by a matrix $R \in \{-1,1\}^{k\times k}$ such that, for every $1 \leq i \leq j \leq k$,
$R_{ij} = 1$ if $E_{i,j} = E(G[V^i,V^j])$ and $R_{ij} = -1$ if $E_{i,j} = E(\overline{G}[V^i,V^j])$.
We then call the $3$-tuple $((V^1,\dotsc,V^k),\{<_{i,j}\, : 1 \le i \le j \le k\},R)$ a \emph{(proper) $k$-mixed-thin representation} of $G$.

It follows from the definitions of (proper) thinness and (proper) mixed-thinness that the class of $k$-mixed-thin graphs is a superclass of the class of $k$-thin graphs, and that the same holds in the ``proper'' case (in particular, every proper interval graph is proper $1$-mixed-thin).
Indeed, a (proper) $k$-thin graph $G$ with vertex partition $(V^1,\dotsc,V^k)$ and vertex order $<$ can be represented as (proper) $k$-mixed-thin by the $3$-tuple $((V^1,\dotsc,V^k),\{<_{i,j}\, : 1 \le i \le j \le k\},R)$, where $<_{i,j}$ is simply the restriction of $<$ to $V^i \cup V^j$ for each pair of indices  $1 \le i \le j \le k$, and $R$ is the all-ones matrix. The inclusion is strict. For instance, the cycle $C_4 = v_1v_2v_3v_4$ is not an interval graph, and hence not a $1$-thin graph. However, $\overline{C_4}=2K_2$ is a proper interval graph. Hence, the $3$-tuple $((V^1),<_{1,1},R)$, where $V^1=\{v_1,v_2,v_3,v_4\}$, $v_1 <_{1,1} v_3 <_{1,1} v_2 <_{1,1} v_4$ and $R$ is the singleton matrix with entry $-1$, is a proper $1$-mixed-thin representation of $C_4$.

\medskip
\noindent
{\bf On the difference between proper and non-crossing $H$-graphs.}
We finish the section by arguing that 
proper $H$-graphs and non-crossing $H$-graphs form, in general, incomparable graph classes. 

We already observed in \Cref{s-intro} that proper $C_2$-graphs form a subclass of the class of non-crossing $C_2$-graphs. We recall that this inclusion is strict, because the claw $K_{1,3}$ is a non-crossing $C_2$-graph (see Figure~\ref{fig:4-fan-rep}) but not a proper $C_2$-graph. In fact, for every $n\geq 4$, the $(n+1)$-vertex star $K_{1,n}$ is also a non-crossing $C_2$-graph. However, it is not difficult to see that there exists no graph $H$, such that for every $n\geq 3$, $K_{1,n}$ is a proper $H$-graph. Hence, the class of non-crossing $C_2$-graphs is not contained in any class of proper $H$-graphs.

Similarly, it is also true that the class of proper $K_{1,3}$-graphs is not contained in any class of non-crossing $H$-graphs.
The reason is that proper $K_{1,3}$-graphs have unbounded twin-width~\cite{Jedelsky2021thesis}, whereas \Cref{thm:top-prop} states that for every multigraph $H$, the class of non-crossing $H$-graphs has bounded proper mixed-thinness, and thus bounded twin-width~\cite{BHJ-DM}. 

We also observe that the $4$-fan (i.e., $P_5$ plus a universal vertex) is a proper claw-graph. In fact, it can be represented as the intersection graph of a proper family of paths on a claw (see \Cref{fig:4-fan-rep}). However, to illustrate the difference between proper $H$-graphs and non-crossing $H$-graphs even more, we now show that the $4$-fan is not a non-crossing $T$-graph, for any tree $T$.

\begin{figure}[ht]
         \centering
        \begin{tikzpicture}[scale=.7]
        \draw [gray] (-2,0) -- (1.8,0);
        \draw [gray] (0,0) -- (0,-1.1);
        \draw  (-1.2,0.3) -- (1,0.3);
            \draw  (-1.8,0.1) -- (-1,0.1);
            \draw  (0.8,0.1) -- (1.6,0.1);
            \draw  (-0.1,0.2) -- (1.3,0.2);
            \draw  (-1.5,0.2) -- (-0.4,0.2);
            \draw  (-0.6,0.1) -- (0.1,0.1);
            \draw  (0.1,-0.9) -- (0.1,0.1);

\centerarc[gray](6,-0.5)(0:360:0.7);
\centerarc[black](6,-0.5)(0:360:0.8);
\centerarc[black](6,-0.5)(10:50:0.9);
\centerarc[black](6,-0.5)(70:110:0.9);
\centerarc[black](6,-0.5)(130:170:0.9);

        \end{tikzpicture}
         \caption{\footnotesize A proper claw-representation of the $4$-fan (left) and a non-crossing $C_2$-representation of the claw (right).}
        \label{fig:4-fan-rep}
     \end{figure}

\begin{observation}\label{obs:4-fan}
The $4$-fan is not a non-crossing $T$-graph, for any tree $T$.
\end{observation}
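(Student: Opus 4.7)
The plan is to derive a contradiction from the existence of a non-crossing $T$-representation $\{S_u, S_{v_1}, \ldots, S_{v_5}\}$ of the $4$-fan on a subdivision $T'$ of a tree $T$, where $u$ is the universal vertex and $v_1v_2v_3v_4v_5$ is the induced $P_5$. Because $\{v_1,v_3,v_5\}$ is independent in the $4$-fan, the subtrees $S_{v_1},S_{v_3},S_{v_5}$ of $T'$ are pairwise disjoint, while $S_u$ (itself a subtree of $T'$) meets each of them. First I would split on the shape of the Steiner subtree spanning $S_{v_1}\cup S_{v_3}\cup S_{v_5}$ in $T'$: contracting each of the three subtrees to a single point yields a tree with three distinguished vertices, which is either (B) a path with one distinguished vertex as an internal vertex, in which case one $S_{v_j}$ lies on every $T'$-path between the other two, or (A) a subdivided $K_{1,3}$ with a unique branching vertex $m\in V(T')\setminus(S_{v_1}\cup S_{v_3}\cup S_{v_5})$ on each of the three pairwise bridges.

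In case (B), up to symmetry $S_{v_j}$ with $j\in\{1,3,5\}$ lies on the unique $T'$-path between any point of $S_{v_a}$ and any point of $S_{v_b}$, where $\{a,b\}=\{1,3,5\}\setminus\{j\}$. Picking $x_a\in S_u\cap S_{v_a}$ and $x_b\in S_u\cap S_{v_b}$, the connectedness of $S_u$ in the tree $T'$ places the unique $T'$-path from $x_a$ to $x_b$ inside $S_u$; by choice of $j$, this path meets $S_u\cap S_{v_j}$. Uniqueness of paths in a tree then forces $x_a$ and $x_b$ into distinct components of $S_u\setminus S_{v_j}$, contradicting the non-crossing condition on the pair $(S_u,S_{v_j})$.

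In case (A), set $A_i := S_u\cap S_{v_i}$ for each $i$. Picking $x_i\in A_i$ for $i\in\{1,3,5\}$, the subtree $S_u$ contains all three pairwise $T'$-paths between $x_1,x_3,x_5$, each of which crosses $m$ by the definition of the median, so $m\in S_u$. For $i\in\{1,3\}$, the subtrees $S_u,S_{v_2},S_{v_i}$ pairwise intersect because $u,v_2,v_i$ are pairwise adjacent in the $4$-fan; the Helly property of subtrees in a tree then yields $A_2\cap A_i\neq\varnothing$. Since the intersection of two subtrees is a subtree, $A_2$ is itself a subtree of $T'$ meeting both $A_1\subseteq S_{v_1}$ and $A_3\subseteq S_{v_3}$, so it contains the unique $T'$-path between a vertex of the former and a vertex of the latter; this path passes through $m$, whence $m\in A_2$. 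An identical argument with the triple $\{v_3,v_4,v_5\}$ gives $m\in A_4$. But $v_2v_4\notin E$ forces $S_{v_2}\cap S_{v_4}=\varnothing$, and hence $A_2\cap A_4=\varnothing$, a contradiction. The main difficulty is a clean justification of the dichotomy between cases (A) and (B); once this elementary fact about Steiner structure in trees is in place, the rest is a careful application of the Helly property for subtrees and the uniqueness of paths in trees.
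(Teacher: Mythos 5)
Your argument is correct, and its endgame coincides with the paper's: the contradiction in both proofs is that a single branch/median vertex is forced to lie in both $S_{v_2}$ and $S_{v_4}$, which is impossible since $v_2v_4$ is a non-edge. The difference is in how the branching configuration is obtained. The paper cites a structural result of Chaplick on non-crossing tree representations of the claw $\{z;v_1,v_3,v_5\}$, which directly supplies a degree-$\geq 3$ vertex $x\in T_z$ with $T_{v_1},T_{v_3},T_{v_5}$ meeting $T_z$ in distinct branches; you instead make the proof self-contained by splitting on the Steiner structure of the three disjoint subtrees $S_{v_1},S_{v_3},S_{v_5}$. Your case (B) — one subtree separating the other two — is exactly the configuration Chaplick's lemma rules out, and you exclude it correctly by observing that $S_u\setminus S_{v_j}$ would then be disconnected, violating the non-crossing condition; your case (A) reproduces the paper's contradiction (and, as you could note, does not even need the Helly property or non-crossing: $S_{v_2}$ is connected and meets both $S_{v_1}$ and $S_{v_3}$, so it already contains $m$). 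What your route buys is independence from the cited lemma and an explicit view of where non-crossing is actually used (only to kill case (B)); what the paper's route buys is brevity. The dichotomy you worry about is the standard fact that the median of three distinct vertices of a tree either is one of them or is a fourth vertex lying on all three pairwise paths, applied after contracting the three disjoint subtrees; this is unproblematic.
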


\begin{proof} Suppose we have a $4$-fan with vertices $v_1,v_2,v_3,v_4,v_5$ inducing a path and vertex $z$ adjacent to all of them. In particular, $v_1, v_3, v_5$
and $z$ induce a claw and, as it was proved in~\cite{Chap-nc-paths}, in every non-crossing $T$-representation where $T$ is a tree, $T_z$ is a subtree of a subdivision of $T$ with a vertex $x$ of degree at least
three, such that each of $T_{v_1}, T_{v_3}, T_{v_5}$ intersects $T_z$ in a different branch with respect to $x$. As $T_{v_2}$ intersects $T_{v_1}$ and $T_{v_3}$, it contains $x$; and as
$T_{v_4}$ intersects $T_{v_3}$ and $T_{v_5}$, it contains $x$ too, contradicting the fact that $T_{v_2}$ and $T_{v_4}$ are disjoint.
\end{proof}

\section{The Proofs of Theorems~\ref{thm:effectiveDelineationHGraph}
and~\ref{t-wedoabitmore}}\label{s-new}

For a set $U$ with total order $<$ and $A,B\subseteq U$, we write $A<B$ if $a<b$ for every $a\in A$ and $b\in B$. In the following all matrices are taken over $\mathbb{F}_2$.
A \emph{$k$-division} of a matrix $M$ consists of a partition $\mathcal{A}=(A_1,\dots, A_k)$ of the rows of $M$ and a partition $\mathcal{B}=(B_1,\dots, B_k)$ of the columns of $M$ such that $A_1\prec \dots \prec A_k$ and $B_1\prec \dots \prec B_k$ (and hence, all parts are consecutive in $\prec$). We denote the submatrix of~$M$ with rows in $A_i$ and columns in $B_j$ by $A_i\cap B_j$. We call such a submatrix a \emph{zone} of the $k$-division. A \emph{rank-$k$ division} of a matrix $M$ is a $k$-division $\mathcal{A}=(A_1,\dots, A_k)$, $\mathcal{B}=(B_1,\dots, B_k)$ such that every zone $A_i\cap B_j$ has at least $k$ distinct rows or $k$ distinct columns. Note that for large values of $k$, this implies that each zone of $M$ has rank at least $2$. The \emph{grid rank} of a matrix $M$, denoted $\gr(M)$, is the maximum $k\in \mathbb{N}$ such that $M$ admits a rank-$k$ division.
For a graph $G$ and a total order $\prec$ of the vertices of $G$, we let $\adj{\prec}{G}$ denote the adjacency matrix of $G$, for which rows and columns are ordered according to $\prec$.

We will use the following characterization of boundedness of twin-width.

\begin{theorem}[Bonnet et al.~\cite{twin-width4}]\label{thm:twwGridRank}
  There is a computable function $f\colon \mathbb N \to \mathbb N$ such that for every graph~$G$, the following two implications hold:\\[-10pt]
  \begin{description}
  \item [(i)] If $G$ has twin-width at most $k$,
  then there is a total order $\prec$ of $V(G)$ such that $\gr(\adj{\prec}{G}) \leqslant f(k)$;
  \item [(ii)] If there is a total order $\prec$ of $V(G)$ such that $\gr(\adj{\prec}{G}) \leqslant k$, then $G$ has
  twin-width at most $f(k)$.\\[-10pt]
  \end{description}
Moreover, there are computable functions $g, h\colon \mathbb N \to \mathbb N$ and an $h(k) \cdot |V(G)|^{O(1)}$-time algorithm that, for an adjacency matrix $\adj{\prec}{G}$ with no rank-$k$ division, outputs a $g(k)$-contraction sequence of~$G$.
\end{theorem}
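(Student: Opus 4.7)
The plan is to establish both implications by setting up a tight correspondence between contraction sequences of $G$ and hierarchical orderings of $V(G)$ that respect the structure of the adjacency matrix. The overall strategy is to interpret a contraction sequence as a rooted binary tree (the \emph{contraction tree}) whose leaves are $V(G)$, and to interpret a rank-$k$ division as a nested product of consecutive intervals in the DFS order of that tree.

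For direction~(i), assume $G$ admits a $d$-contraction sequence. I would extract from the sequence its contraction tree $T$ and let $\prec$ be a DFS traversal of $T$. Any rank-$k$ division $(\mathcal{A},\mathcal{B})$ of $\adj{\prec}{G}$ partitions the rows and columns into $k$ consecutive windows each, and every such window is approximately a disjoint union of a few subtrees of $T$, up to a bounded number of boundary vertices. Each zone is therefore controlled by a bounded collection of contracted ``super-vertices'' at appropriate heights of~$T$, and the red-degree bound~$d$ in every trigraph of the sequence forces the adjacency between two such super-vertex families to be determined by only $O(d)$ bits of information. Consequently, any zone has rank bounded by a function of~$d$; choosing $k$ large enough as a function of~$d$ (using pigeonhole to absorb the boundary contribution) contradicts the rank-$k$ requirement, giving $\gr(\adj{\prec}{G}) \leq f(d)$ for a suitable~$f$.

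Direction~(ii) is more delicate. Given $\prec$ with $\gr(\adj{\prec}{G}) \leq k$, I would build a contraction sequence in a bottom-up, chunk-by-chunk fashion. Partition $V(G)$ into consecutive chunks of some size $s=s(k)$; the absence of a rank-$(k+1)$ division guarantees that inside each chunk the adjacency to any other chunk has low complexity (few distinct row patterns), so collapsing a chunk to a single vertex introduces only $O(k)$ red edges per contraction. Recursing on the quotient graph, whose natural order inherits a comparable grid-rank bound, yields a full contraction sequence with bounded red-degree. The hard part will be verifying that the grid-rank bound is preserved, or only mildly degraded, under each quotient step; this typically requires a careful potential-function argument combined with a Ramsey-type pigeonhole on the $k\times k$ grid of zones, and it is the real technical core of the theorem.

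Finally, the algorithmic \emph{moreover} statement follows by making each recursion level constructive: detecting a chunking that avoids a large rank-division can be done in $|V(G)|^{O(1)}$ time for fixed $k$, and each recursion shrinks the vertex set by a constant factor, yielding the claimed $h(k)\cdot |V(G)|^{O(1)}$ running time and a computable $g(k)$-contraction sequence.
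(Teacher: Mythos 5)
This statement is \cref{thm:twwGridRank}, which the paper imports verbatim as a black box from Bonnet et al.\ (Twin-width IV); the paper contains no proof of it, so there is nothing in-paper to compare your argument against. Judged on its own terms, your sketch for direction (i) has the right shape: ordering $V(G)$ by a left-to-right traversal of the contraction tree and arguing that a large rank division would force some trigraph in the sequence to have large red degree is essentially how the original papers establish that half of the equivalence, although the claim that each window of the division is ``approximately a disjoint union of a few subtrees up to a bounded number of boundary vertices'' needs real work to make precise.

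Direction (ii), however, contains a genuine gap. You assert that the absence of a rank-$(k+1)$ division guarantees that each consecutive chunk has low-complexity adjacency to every other chunk. This does not follow: a rank-$k$ division requires \emph{all} $k^2$ zones to be complex simultaneously, so its absence only rules out a choice of $k$ consecutive row parts and $k$ consecutive column parts in which every zone is complex. Individual pairs of chunks can still have arbitrarily high-rank interaction, and collapsing such a chunk to a single vertex can create unbounded red degree. The actual proof closes exactly this gap with the Marcus--Tardos theorem on forbidden permutation patterns, which bounds the \emph{number} of complex zones in any division into consecutive parts linearly in the number of parts; that counting bound is what licenses a coarsening step that merges parts while keeping the red degree of the quotient bounded, and it also drives the iteration showing the grid-rank bound survives the quotient. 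Your ``Ramsey-type pigeonhole on the $k\times k$ grid of zones'' is a placeholder for precisely this ingredient, and you flag it yourself as ``the real technical core'' without supplying it. As written, the argument for (ii) --- and hence for the algorithmic moreover-part, which constructs the contraction sequence via the same recursion --- is incomplete.
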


We do not require the definition of delineation, but will make use of the following concept instead. 
For some integer~$k$, a {\it transversal pair $T_k$ (of half graphs)}
of a graph $G$ consists of three pairwise disjoint vertex subsets $A,B,C$ of $G$, each of size $k^2$. The edges of $T_k$ form a half-graph (also known as chain graph or ladder) between $A$ and $B$ and a half-graph between $B$ and $C$ such that the two vertex orderings of $B$ obtained by considering neighbourhood inclusion with respect to $A$ or $C$ encode a universal permutation (see~\cite{BonnetC0K0T22} for a precise definition). We say that $T_k$ is {\it semi-induced} if
we allow extra edges within $A$, $B$ or $C$, or between $A$ and $C$.

\begin{lemma}[Bonnet et al.~\cite{BonnetC0K0T22}]\label{lem:seqOrTransversalPair}
  Let $f\colon \mathbb N \to \mathbb N$ be any computable 
  function, and let $\mathcal C$ be a class of graphs.
  If for every $k\in \mathbb{N}$ and every $G \in \mathcal C$, either $G$ admits an  
  $f(k)$-contraction sequence or $G$ has a semi-induced transversal pair~$T_{k}$, then $\mathcal C$ is delineated.
\end{lemma}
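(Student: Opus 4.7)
The plan is to verify the two directions of the definition of delineation. Take any hereditary subclass $\mathcal{D}\subseteq\mathcal{C}$: I have to show that $\mathcal{D}$ has bounded twin-width if and only if $\mathcal{D}$ is not monadically independent. The easy direction --- bounded twin-width implies $\mathcal{D}$ is not monadically independent --- is standard: by the result of Dreier, M\"ahlmann and Toru\'nczyk quoted in the introduction, monadic independence together with heredity would make FO Model Checking on $\mathcal{D}$ be $\AW$-hard, contradicting the FPT algorithm of Bonnet et al.\ for graphs of bounded twin-width.

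For the hard direction, assume $\mathcal{D}$ has unbounded twin-width. Then for every $k$ there exists $G_k\in\mathcal{D}$ of twin-width exceeding $f(k)$, and in particular $G_k$ admits no $f(k)$-contraction sequence; the dichotomy in the hypothesis then forces $G_k$ to contain a semi-induced transversal pair $T_k$, whose induced subgraph on $A\cup B\cup C$ also lies in $\mathcal{D}$ since $\mathcal{D}$ is hereditary. Thus $\mathcal{D}$ contains arbitrarily large semi-induced transversal pairs, and the task reduces to showing that such a class is monadically independent.

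To this end, I would exhibit a first-order transduction from $\mathcal{D}$ to the class of all finite graphs. The key observation is that the ``semi-induced'' relaxation only permits extra edges within $A$, $B$, $C$ or between $A$ and $C$, so the bipartite subgraphs $G[A,B]$ and $G[B,C]$ are genuine half-graphs. Colouring $A$, $B$, $C$ with three distinct unary predicates, neighbourhood inclusion \emph{relativised to these predicates} is FO-definable, and yields two total orders on $B$. By the defining property of a transversal pair, these two orders together encode a universal permutation on $k^2$ symbols, which by definition realises every $n$-permutation as a pattern for $n \leq k$. To transduce a target graph $H$ on $n$ vertices, I would fix $k$ large enough that $H$ is realised (via its adjacency matrix viewed as a $0/1$ pattern inside a permutation matrix) by some pattern of bounded size; an additional colouring would single out the vertices of $B$ whose positions in the two orders realise this pattern, and a fixed FO formula would then read the edges of $H$ off the orders restricted to the selected vertices.

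The main obstacle is the combinatorial--logical step of turning a universal permutation into an arbitrary graph encoding, together with ensuring that the transduction is uniform in $|V(H)|$: this is precisely what the ``universal permutation'' property is tailored for, but one also has to check that the extra edges permitted by the semi-induced relaxation do not contaminate the FO-definition of the two orders on $B$. This is secured by relativising every quantifier to the colour predicates, so that only the pristine bipartite edges $A$-$B$ and $B$-$C$ are ever consulted when defining the orderings and extracting the adjacencies of $H$.
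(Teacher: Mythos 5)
The paper does not prove this lemma itself -- it is quoted verbatim from Bonnet et al.\ (Twin-width VIII), so there is no in-paper argument to compare against. Judged on its own terms, your proposal identifies the right two directions and the right overall strategy for the hard one, but the easy direction as you argue it has a genuine gap.

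Delineation is a purely combinatorial/logical equivalence (``bounded twin-width iff not monadically independent''), so it must be established unconditionally. Your argument for the direction ``bounded twin-width $\Rightarrow$ not monadically independent'' derives a contradiction between $\AW$-hardness (Dreier--M\"ahlmann--Toru\'nczyk) and membership in \FPT\ (Bonnet et al.). These two facts are only contradictory under the unproven assumption $\FPT\neq\AW$, so your proof of the lemma is conditional where the lemma is not. (There is a secondary issue: the \FPT\ algorithm needs a contraction sequence as input, so even ``FO Model Checking is in \FPT\ on $\mathcal{D}$'' is not immediate from bounded twin-width alone.) The correct, unconditional argument is structural and is in fact the one this paper itself uses in the proof of Theorem~\ref{t-wedoabitmore}: boundedness of twin-width is preserved under first-order transductions (Theorem~8.1 of~\cite{BKTW22}), and the class of all graphs has unbounded twin-width, so a class of bounded twin-width cannot transduce all graphs and is therefore monadically dependent. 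You should replace the complexity-theoretic contradiction with this transduction-closure argument.

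Your hard direction -- extract arbitrarily large semi-induced transversal pairs via heredity, then transduce the class of all graphs by FO-defining the two neighbourhood-inclusion orders on $B$ relative to the colour predicates and exploiting the universal permutation they encode -- is exactly the route taken in \cite[Lemma~13]{BonnetC0K0T22}, and your remark about relativising quantifiers to the colours to neutralise the ``semi-induced'' extra edges addresses the right concern. However, the step from ``universal permutation'' to ``encodes every graph via a single fixed interpretation formula, uniformly in the target graph'' is the technical heart of that lemma and is left at the level of intent in your write-up; as stated it is an appeal to a known construction rather than a proof. Since the paper also treats this as a black box, that deferral is defensible, but the conditional argument in the easy direction is not.
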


For an arbitrary (simple) forest~$H$,
we will now reduce the case of $H$-graphs 
to the case of interval graphs. In interval graphs, having the configuration described in the following lemma is sufficient to find the obstruction needed to show delineation. For a fixed interval representation of a graph $G$ and vertex set $X \subseteq V(G)$, $\start(X)$ denotes the set of all start-points of the intervals representing vertices in $X$. Moreover, an interval representation is \textit{minimal} if the sum of the lengths of all intervals is minimal (while assuming integer start-points and end-points).

\begin{lemma}[Bonnet et al.~\cite{BonnetC0K0T22}]\label{lem:transversalPairInIntervalGraphs} Let $G$ be an interval graph and let $\adj{}{G}$ be an adjacency matrix of $G$. 
	Let $\mathcal{A}=\{A_1,\dots,A_{t^2}\}$ and $\mathcal{B}=\{B_1,\dots,B_{t^2}\}$ be two classes of pairwise disjoint vertex sets such that the following properties hold:
    
	\begin{enumerate}
		\item $\start(A_1) < \cdots  <\start(A_{t^2})\leq \start(B_1)< \cdots < \start(B_{t^2})$,
		\item The submatrix of $\adj{}{G}$ induced by rows $A_{i}$ and columns $B_{j}$ has rank at least $2$,
	\end{enumerate}
    
    \noindent where the starting intervals ${\sf start}(A_i)$ and ${\sf start}(B_j)$ are defined with respect to a minimal interval representation of $G$.  
	Then $G$ contains a semi-induced transversal pair $T_t$.
\end{lemma}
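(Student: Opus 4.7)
The plan is to translate the algebraic rank-$\geq 2$ hypothesis into a geometric witness in a fixed minimal interval representation, and then organise these witnesses into the transversal pair. First I observe that, since $\start(A_i)\le\start(B_j)$ for all $i,j$, any $a\in\bigcup_i A_i$ and $b\in\bigcup_j B_j$ intersect iff $\mathrm{end}(a)\ge\start(b)$. Hence the bipartite subgraph between $\bigcup A_i$ and $\bigcup B_j$ is already a half-graph, ordered by endpoints on the $A$-side and by startpoints on the $B$-side. Over $\mathbb{F}_2$, a half-graph matrix has rank $\ge 2$ exactly when it contains a $2\times 2$ submatrix of the form $\bigl(\begin{smallmatrix}1&1\\1&0\end{smallmatrix}\bigr)$, since the identity pattern is ruled out by the threshold structure. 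Therefore, each zone $A_i\times B_j$ yields vertices $a_{ij}^-,a_{ij}^+\in A_i$ and $b_{ij}^-,b_{ij}^+\in B_j$ with
\[
\start(b_{ij}^-)\le \mathrm{end}(a_{ij}^-)<\start(b_{ij}^+)\le \mathrm{end}(a_{ij}^+),
\]
so that between two startpoints of $B_j$-vertices sits an endpoint of some $A_i$-vertex, while another $A_i$-vertex extends beyond both.

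Next I would reindex the $t^2$ blocks on each side as $[t]\times[t]$ in the given order. Using the witnesses above, for each $(i,j)\in[t]\times[t]$ I would select a ``middle'' vertex $m_{ij}$ whose endpoint lies in a prescribed gap of $B_j$-startpoints determined by the pair $(i,j)$; minimality of the interval representation ensures that such endpoints can be pinned down and distinguished from those picked in other zones. Taking one extreme representative per $A$-column to form the outer set $A$ of size $t$, one per $B$-row to form the outer set $C$ of size $t$, and the $m_{ij}$'s as the middle set $B$ of size $t^2$, the threshold characterisation of adjacency automatically yields half-graphs on $A$-$B$ and on $B$-$C$; moreover, the two orderings of $B$ by neighbourhood inclusion with respect to $A$ and with respect to $C$ encode the coordinate projections $(i,j)\mapsto i$ and $(i,j)\mapsto j$, which jointly realise every permutation of $[t]$, i.e., a universal permutation. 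Since ``semi-induced'' permits arbitrary edges inside $A$, $B$, $C$ and between $A$ and $C$, no further cleaning is required.

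The hard part is the combinatorial alignment in the second step: the local rank-$2$ witnesses produced independently in the $t^4$ zones must be chosen coherently so that the two induced orderings of the $t^2$-element middle set really encode a universal permutation on $[t]$. I expect this to be handled by a Ramsey-style refinement of the index set $[t^2]$ (which is precisely why the hypothesis is stated with $t^2$ groups rather than $t$), together with the minimality of the representation, which precludes endpoint collisions and thereby makes the two orderings of $B$ well-defined and consistent across zones.
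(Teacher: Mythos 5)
First, a point of comparison: the paper does not prove this lemma at all — it is imported as a black box from Bonnet et al.\ (Twin-width VIII) — so your attempt has to be judged against the cited result rather than against anything in this paper. Your first paragraph is correct and is the right opening move. Since every interval of $\bigcup_i A_i$ starts no later than every interval of $\bigcup_j B_j$, adjacency across the bipartition reduces to the one-sided threshold $\mathrm{end}(a)\ge\start(b)$, the cross-matrix is a half-graph, and over $\mathbb{F}_2$ a rank-$\ge 2$ zone must contain the pattern $\bigl(\begin{smallmatrix}1&1\\1&0\end{smallmatrix}\bigr)$ (the identity pattern is excluded by the nesting of rows in a half-graph). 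This yields exactly the chain of inequalities you display: an $A_i$-endpoint strictly between two $B_j$-startpoints, with another $A_i$-interval reaching past both.

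The assembly step, however, has genuine gaps beyond being deferred to an unspecified ``Ramsey-style refinement''. (i) By the definition used in this paper, all three sets of a transversal pair $T_t$ have size $t^2$; your outer sets have size $t$, so neighbourhood inclusion with respect to them induces only preorders with about $t$ classes on the $t^2$-element middle set, which cannot produce the two linear orders of $[t^2]$ (differing by a $t$-universal permutation) that the definition requires. (ii) Your middle set and your left outer set are both drawn from the early-starting ($\mathcal{A}$) side, and adjacency between two such intervals is \emph{not} governed by the one-sided threshold — it depends on both startpoints and endpoints — so the claimed ``automatic'' half-graph on the $A$--$B$ side of the transversal pair does not follow from anything you have established. (iii) The coherence problem you flag is not solved by Ramsey: the standard route fixes a $t$-universal permutation $\sigma$ of $[t^2]$ in advance and, for each $i$, uses only the witness from the single zone $(i,\sigma(i))$ to place the $i$-th middle vertex's endpoint inside the startpoint-range of $B_{\sigma(i)}$; the blow-up to $t^2$ blocks in the hypothesis is spent on the length of the universal permutation and on the size of the three sets, not on a Ramsey argument. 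One must then still exhibit $t^2$ left-side vertices separating the middle vertices by startpoint and $t^2$ right-side vertices separating them by endpoint (this is where minimality of the representation and the remaining blocks are used). Without these repairs the proposal does not establish the lemma.
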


In order to use \cref{thm:twwGridRank}, we first define a suitable total ordering of the vertices of an $H$-graph, for which we rely on the $H$-representation of the graph. Next, we show in \cref{lem:findingTransversalPair} below that, given a graph for which the grid rank with respect to our chosen total order is large, we can find a path in $H$ along which we can find the representation of an interval graph satisfying the conditions in
\cref{lem:transversalPairInIntervalGraphs}.

We first define a total ordering for $H$-graphs in the case where $H$ is a tree. Afterwards, we will consider the more general case where $H$ is a forest.
Fix a tree $H$ and consider an $H$-graph $G$ with $H$-representation $\mathcal{S}$ and framework $H'$. We make the same minimality assumption as for interval graphs, that is, the sum of the cardinalities of all 
sets~$S_v$ is minimal. This assumption implies that for every vertex $x$ of $H'$ and adjacent vertices $u,v\in V(G)$ such that $x\in S_u$ but $x\notin S_v$, there is a vertex $w\in V(G)$ such that $x\in S_w$ and $w$ is not adjacent to $v$, that is, $w$ distinguishes $u$ and $v$. In particular, if $u$ and $v$ are true twins, the minimality assumption implies that $S_u=S_v$. 

We now fix an arbitrary root $r$ of $H$ and choose any linear order $e_1<\dots < e_{\Vert H \Vert}$ of the edges of $H$ such that $e_i<e_j$ if there is a root-to-leaf path traversing $e_i$ before $e_j$. We first define a linear order $\prec'$ on $V(H')$ as follows. To this end, we denote the path that replaces the edge $e_1$ in $H'$ by $z_1^{(1)}\cdots z_{\ell_1}^{(1)}$, where $r=z_1^{(1)}$. For every other edge $e_i \in E(H)$, we denote the path that replaces $e_i$ in $H'$ by $z_1^{(i)}\cdots z_{\ell_i}^{(i)}$, where the vertex closest to the root is omitted and $z_1^{(i)}$ is the vertex which is the second closest to the root $r$. The reason we exclude the first vertex for paths replacing edges $e_i$ ($i>2$) is so that every vertex in $H'$ gets only one label $z_i^{(j)}$.
For $z,z'\in V(H')$, let $e_i$ and $e_{i'}$ be the minimum edges (with respect to $<$) of $H$ such that $z\in \{z_1^{(i)},\dots, z_{\ell_i}^{(i)}\}$ and $z'\in \{z_1^{(i')},\dots, z_{\ell_{i'}}^{(i')}\}$. 
We let $z\prec' z'$ if
\begin{itemize}
    \item $i<i'$, or 
    \item $i=i'$ and, for some $j<k$, it holds that $z=z_j^{(i)}$ and $z'=z_k^{(i)}$.  
\end{itemize}

\noindent
We are now ready to define a linear order $\prec$ on $V(G)$ as follows. For $u,v\in V(G)$, we let $u\prec v$ if the minimum vertex (with respect to $\prec'$) in the symmetric difference $S_u\triangle S_v$ is contained in $S_u$; if $S_u=S_v$, we choose an arbitrary order of the two vertices.  
See \cref{fig:ordering} for an illustration of how to construct $\prec$.
We generalize this ordering $\prec$ to the case where $H$ is a forest by choosing an arbitrary ordering of the connected components of $H$ and concatenating the orderings $\prec$ for each connected component respecting the order of components that we have chosen.
Note that we can compute such a linear order $\prec$ in time $O(|V(G)|)$, given the $H$-representation of $G$, by following the steps above. 
\begin{figure}[t]
\begin{center}
    \begin{tikzpicture}[scale=0.44]
    \def \dist {6.5}

    \vertex{0}{0}{a};
    \vertex{-1}{-2}{b};
    \vertex{1.5}{-3}{c};
    \vertex{-3}{-6}{d};
    \vertex{0}{-4}{e};
    \vertex{-1}{-6}{f};
    \vertex{1}{-6}{g};

\node[label=above:{$H$ with $<$}] at (0,-9) {};
\node[label=above:{$H'$ with $\prec'$}] at (\dist,-9) {};
\node[label=above:{\textcolor{red}{$S_u$}}] at (2*\dist,-9) {};
\node[label=above:{\textcolor{blue}{$S_v$}}] at (3*\dist,-9) {};

\draw (a) -- (b) node [midway,fill=white,inner sep=0.02cm] {$e_1$};
\draw (a) -- (c) node [midway,fill=white,inner sep=0.02cm] {$e_2$};
\draw (b) -- (d) node [midway,fill=white,inner sep=0.02cm] {$e_3$};
\draw (b) -- (e) node [midway,fill=white,inner sep=0.02cm] {$e_4$};
\draw (e) -- (f) node [midway,fill=white,inner sep=0.02cm] {$e_5$};
\draw (e) -- (g) node [midway,fill=white,inner sep=0.02cm] {$e_6$};

    \vertex{0+\dist}{0}{aa};
    \vertex{-0.5+\dist}{-1}{aa1};
    \vertex{0.5+\dist}{-1}{aa2};
    \vertex{1+\dist}{-2}{aa3};
    \vertex{-1+\dist}{-2}{bb};
    \vertex{-1.5+\dist}{-3}{bb1};
    \vertex{-2+\dist}{-4}{bb2};
    \vertex{-2.5+\dist}{-5}{bb3};
    \vertex{-0.5+\dist}{-3}{bb4};
    \vertex{1.5+\dist}{-3}{cc};
    \vertex{-3+\dist}{-6}{dd};
    \vertex{0+\dist}{-4}{ee};
    \vertex{-0.5+\dist}{-5}{ee1};
    \vertex{0.5+\dist}{-5}{ee2};
    \vertex{-1+\dist}{-6}{ff};
    \vertex{1+\dist}{-6}{gg};

\vertexLabel[above]{aa}{$1$}
\vertexLabel[left]{aa1}{$2$}
\vertexLabel[left]{bb}{$3$}
\vertexLabel[right]{aa2}{$4$}
\vertexLabel[right]{aa3}{$5$}
\vertexLabel[right]{cc}{$6$}
\vertexLabel[left]{bb1}{$7$}
\vertexLabel[left]{bb2}{$8$}
\vertexLabel[left]{bb3}{$9$}
\vertexLabel[left]{dd}{$10$}
\vertexLabel[right]{bb4}{$11$}
\vertexLabel[right]{ee}{$12$}
\vertexLabel[left]{ee1}{$13$}
\vertexLabel[left]{ff}{$14$}
\vertexLabel[right]{ee2}{$15$}
\vertexLabel[right]{gg}{$16$}

\draw (aa) -- (bb);
\draw (aa) -- (cc);
\draw (bb) -- (dd);
\draw (bb) -- (ee);
\draw (ee) -- (ff);
\draw (ee) -- (gg);

    \vertex[gray!60]{0+2*\dist}{0}{aaa};
    \vertex[red]{-1+2*\dist}{-2}{bbb};
    \vertex[gray!60]{1.5+2*\dist}{-3}{ccc};
    \vertex[gray!60]{-3+2*\dist}{-6}{ddd};
    \vertex[red]{0+2*\dist}{-4}{eee};
    \vertex[gray!60]{-1+2*\dist}{-6}{fff};
    \vertex[gray!60]{1+2*\dist}{-6}{ggg};

\draw[gray!60] (aaa) -- (bbb);
\draw[gray!60] (aaa) -- (ccc);
\draw[gray!60] (bbb) -- (ddd);
\draw[gray!60] (bbb) -- (eee);
\draw[gray!60] (eee) -- (fff);
\draw[gray!60] (eee) -- (ggg);

    \vertex[red]{-0.5+2*\dist}{-1}{aaa1};
    \vertex[gray!60]{0.5+2*\dist}{-1}{aaa2};
    \vertex[gray!60]{1+2*\dist}{-2}{aaa3};
    \vertex[red]{-1.5+2*\dist}{-3}{bbb1};
    \vertex[red]{-2+2*\dist}{-4}{bbb2};
    \vertex[red]{-2.5+2*\dist}{-5}{bbb3};
    \vertex[red]{-0.5+2*\dist}{-3}{bbb4};
    \vertex[gray!60]{-0.5+2*\dist}{-5}{eee1};
    \vertex[red]{0.5+2*\dist}{-5}{eee2};

    \vertex[gray!60]{0+3*\dist}{0}{aaaa};
    \vertex[blue]{-1+3*\dist}{-2}{bbbb};
    \vertex[gray!60]{1.5+3*\dist}{-3}{cccc};
    \vertex[gray!60]{-3+3*\dist}{-6}{dddd};
    \vertex[blue]{0+3*\dist}{-4}{eeee};
    \vertex[blue]{-1+3*\dist}{-6}{ffff};
    \vertex[gray!60]{1+3*\dist}{-6}{gggg};

\draw[gray!60] (aaaa) -- (bbbb);
\draw[gray!60] (aaaa) -- (cccc);
\draw[gray!60] (bbbb) -- (dddd);
\draw[gray!60] (bbbb) -- (eeee);
\draw[gray!60] (eeee) -- (ffff);
\draw[gray!60] (eeee) -- (gggg);

    \vertex[blue]{-0.5+3*\dist}{-1}{aaaa1};
    \vertex[gray!60]{0.5+3*\dist}{-1}{aaaa2};
    \vertex[gray!60]{1+3*\dist}{-2}{aaaa3};
    \vertex[blue]{-1.5+3*\dist}{-3}{bbbb1};
    \vertex[gray!60]{-2+3*\dist}{-4}{bbbb2};
    \vertex[gray!60]{-2.5+3*\dist}{-5}{bbbb3};
    \vertex[blue]{-0.5+3*\dist}{-3}{bbbb4};
    \vertex[blue]{-0.5+3*\dist}{-5}{eeee1};
    \vertex[blue]{0.5+3*\dist}{-5}{eeee2};
\end{tikzpicture}
\end{center}
\caption{An example of how to obtain the order $\prec$, where $\textcolor{red}{u}\prec \textcolor{blue}{v}$ in the example given in the figure.}\label{fig:ordering}
\end{figure}

We further introduce the following notation. For a vertex $u\in V(G)$, the \emph{start edge} of $u$ is the minimum edge $e_i$ (with respect to $<$) such that the minimum $x\in S_u$ (with respect to $\prec'$) is contained in the subdivision of $e_i$. We observe that, by construction, in the linear order $\prec$ all vertices of $G$ with start edge~$e_1$ precede all vertices with start edge $e_2$ and so on. We let $\min_v^{(i)}$ and $\max_v^{(i)}$ be the minimum and maximum vertex (with respect to $\prec'$) from $\{z_1^{(i)},\dots, z_{\ell_i}^{(i)}\}$ which is contained in $S_v$, respectively, and set $\min_v^{(i)}=\max_v^{(i)}=\bot$ if $S_v\cap\{z_1^{(i)},\dots, z_{\ell_i}^{(i)}\}=\varnothing$. 
We make a straightforward observation.
\begin{observation}\label{obs:propOrder}
    For every two vertices $u,v\in V(G)$, both with start edge $e_i$, and which are not true twins, $u\prec v$ if
    \begin{itemize}
        \item $\min_u^{(i)}\prec' \min_v^{(i)}$, or
        \item there is $j>i$ such that $\min_u^{(i)}=\min_v^{(i)}$, $\max_u^{(k)}=\max_v^{(k)}$ for every $i<k<j$, and $\max_v^{(j)} \prec' \max_u^{(j)}$. 
    \end{itemize}
\end{observation}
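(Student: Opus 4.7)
The plan is to locate the minimum vertex of $S_u\triangle S_v$ with respect to $\prec'$ and to show, in each of the two sufficient conditions, that it lies in $S_u$; by the definition of $\prec$, this is exactly what needs to be verified. Because $u$ and $v$ both have start edge $e_i$, neither $S_u$ nor $S_v$ contains any vertex of $H'$ of the form $z_m^{(p)}$ with $p<i$, so in the scan in $\prec'$-order it suffices to inspect the subdivision of $e_i$ and the subdivisions of the later edges.

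The first item unfolds directly: if $\min_u^{(i)}\prec'\min_v^{(i)}$, then $\min_u^{(i)}\in S_u$ and, since every vertex of $S_v$ is $\succeq'\min_v^{(i)}\succ'\min_u^{(i)}$, it is not in $S_v$; every vertex earlier in $\prec'$ belongs to neither set. Hence $\min_u^{(i)}$ is the minimum of $S_u\triangle S_v$ and lies in $S_u$, yielding $u\prec v$.

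For the second item, the key structural claim I would establish first is that for every $k>i$ with $S_u\cap\{z_1^{(k)},\dots,z_{\ell_k}^{(k)}\}\neq\varnothing$ one has $\min_u^{(k)}=z_1^{(k)}$, and likewise for $v$. The reason is that $S_u$ is a connected subtree of the tree $H'$, so the unique path in $H'$ from $\min_u^{(i)}$ to any vertex of $e_k$'s subdivision must enter that subdivision through its vertex closest to the root; for $k\geq 2$ this entry vertex is of the form $z_m^{(p)}$ for some $p<k$ and therefore is not among $\{z_1^{(k)},\dots,z_{\ell_k}^{(k)}\}$, so the first such label encountered is necessarily $z_1^{(k)}$, which must belong to $S_u$. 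Combined with the hypothesis $\max_u^{(k)}=\max_v^{(k)}$ for $i<k<j$, this forces $S_u\cap\{z_1^{(k)},\dots,z_{\ell_k}^{(k)}\}=S_v\cap\{z_1^{(k)},\dots,z_{\ell_k}^{(k)}\}$ for every such $k$ (either both are empty, or both equal the interval $[z_1^{(k)},\max_u^{(k)}]$), so no discrepancy is seen on the subdivisions of $e_{i+1},\dots,e_{j-1}$. The same claim applied to $e_j$ gives $\min_u^{(j)}=\min_v^{(j)}=z_1^{(j)}$; both sets then contain $[z_1^{(j)},\max_v^{(j)}]$, while the immediate successor $z^{\star}$ of $\max_v^{(j)}$ on $e_j$'s subdivision (which exists because $\max_v^{(j)}\prec'\max_u^{(j)}$) lies in $S_u\setminus S_v$.

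It remains to check that no earlier discrepancy occurs on the subdivision of $e_i$ strictly after $\min_u^{(i)}=\min_v^{(i)}$. Since $S_u$ is connected and $\max_u^{(j)}$ is a real vertex, the unique path in $H'$ from $\min_u^{(i)}$ to a vertex on $e_j$'s subdivision forces $S_u$ to contain the entire stretch $[\min_u^{(i)},z_{\ell_i}^{(i)}]$ of $e_i$'s subdivision whenever $e_j$ lies in the subtree of $H$ rooted at the far endpoint of $e_i$; the same analysis applies to $S_v$ (which also reaches $e_j$ since $\max_v^{(j)}$ is non-$\bot$), giving $\max_u^{(i)}=\max_v^{(i)}=z_{\ell_i}^{(i)}$ and hence agreement on $e_i$'s subdivision. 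I expect the main obstacle to be precisely this last step in the corner case where $e_j$ is not a descendant of $e_i$; the start-edge condition forces this corner to occur only when $i=1$ and $\min_u^{(i)}=r$, and in that configuration $S_u$ and $S_v$ can reach $e_j$ through the root without traversing the far endpoint of $e_1$. Closing this case will require either exploiting an implicit additional equality $\max_u^{(i)}=\max_v^{(i)}$ suggested by the lexicographic structure of the statement, or a careful tree-path analysis showing that the first label on which $S_u$ and $S_v$ diverge inside $e_1$'s subdivision cannot occur before $z^{\star}$.
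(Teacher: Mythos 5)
The paper offers no proof of this observation (it is asserted as ``straightforward''), so there is no argument to compare against; your reconstruction---locate the $\prec'$-minimum of $S_u\triangle S_v$ and show it lies in $S_u$---is the only sensible reading of the definition of $\prec$. Your treatment of the first item is correct, and for the second item your structural claim that $S_w$ meets the subdivision of each edge $e_k$ with $k>i$ in an initial segment $[z_1^{(k)},\max_w^{(k)}]$ (so that equality of the $\max$'s forces equality of the intersections on $e_{i+1},\dots,e_{j-1}$, and the first discrepancy on $e_j$ is the successor of $\max_v^{(j)}$, which lies in $S_u$) is exactly right, as is your observation that for $i\geq 2$, or for $i=1$ with $\min_u^{(i)}\neq r$, the start-edge condition prevents $S_u$ from containing the upper endpoint of $e_i$, so both sets must reach $e_j$ through $z_{\ell_i}^{(i)}$ and hence $\max_u^{(i)}=\max_v^{(i)}=z_{\ell_i}^{(i)}$.

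The corner case you flag is not a defect of your proof but of the statement: as literally written, the observation is false when $i=1$ and both representatives contain the root and branch there. For instance, let $H$ be the two-edge star $e_1=ra$, $e_2=rb$, subdivided so that $S(e_1)=r z_2^{(1)}\cdots z_{\ell_1}^{(1)}$ and $S(e_2)=r z_1^{(2)}\cdots z_{\ell_2}^{(2)}$, and take $S_u=\{r,z_1^{(2)},z_2^{(2)},z_3^{(2)}\}$ and $S_v=\{r,z_2^{(1)},z_1^{(2)}\}$. Both vertices have start edge $e_1$ with $\min_u^{(1)}=\min_v^{(1)}=r$, there is no $k$ with $1<k<2$, and $\max_v^{(2)}=z_1^{(2)}\prec' z_3^{(2)}=\max_u^{(2)}$, so the second item would give $u\prec v$; but the $\prec'$-minimum of $S_u\triangle S_v$ is $z_2^{(1)}\in S_v$, so in fact $v\prec u$ (and $u,v$ are adjacent but easily made non-twins). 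The fix is precisely the implicit extra hypothesis you conjecture, namely $\max_u^{(i)}=\max_v^{(i)}$ (equivalently, agreement of $S_u$ and $S_v$ on the whole subdivision of $e_i$), which as you show holds automatically outside this one configuration. So your proposal is correct in all the cases in which the statement itself is correct, and you were right not to paper over the remaining one.
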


We are now ready to state and prove our key lemma. 

\begin{lemma}\label{lem:findingTransversalPair}
Let $H$ be a forest, and let $h= \Vert H \Vert+1$ and $f(t)=2(h \cdot (2 t^2)^{h}+h-1)$. 
For every $t\in \mathbb{N}$, if $G$ is an $H$-graph such that $\gr(\adj{\prec}{G}) \geq f(t)$, then $G$ contains a semi-induced transversal pair $T_t$. 
\end{lemma}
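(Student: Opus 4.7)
The strategy is to reduce the general forest case to the interval-graph case and then invoke \cref{lem:transversalPairInIntervalGraphs}. Starting from a rank-$f(t)$ division $\mathcal{A}=(A_1,\dots,A_{f(t)})$, $\mathcal{B}=(B_1,\dots,B_{f(t)})$ of $\adj{\prec}{G}$, I would iteratively refine the division, exploiting the lexicographic nature of $\prec$ to extract $t^2$ consecutive row parts and $t^2$ consecutive column parts whose surviving vertices effectively behave as an interval representation along a single edge of $H$, with rank-$2$ zones preserved. The multiplicative $(2t^2)^h$ factor in $f(t)$ corresponds to losing a factor of $2t^2$ per refinement round (one round per edge of $H$), while the additive $h-1$ absorbs the cost of discarding parts that straddle structural boundaries in $\prec$.

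For the first refinement, all vertices with a common start edge form a contiguous block in $\prec$, and since $H$ has $h-1$ edges, at most $h-1$ row parts and $h-1$ column parts can straddle a boundary between such blocks. After discarding these and splitting the remaining parts into at most $h$ sub-runs, pigeonhole yields a consecutive sub-division of size at least roughly $2(2t^2)^h$ in which every row part consists of vertices with a common start edge $e_r$ and every column part with a common start edge $e_c$ (with $r\le c$ since rows precede columns in $\prec$). For the subsequent rounds, \cref{obs:propOrder} tells us that within a start-edge class, $\prec$ is lexicographic in the coordinates $\min^{(i)}_v$ and $\max^{(j)}_v$ for $j>i$. In each of at most $h$ further rounds, I would fix one of these coordinates and retain the largest consecutive sub-division of parts whose vertices agree on the fixed value, paying a factor of at most $2t^2$ per round. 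After these rounds, at least $t^2$ row parts and $t^2$ column parts survive whose representatives agree on every edge of $H$ except possibly one distinguished edge~$e$.

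Along the ordered path $S(e)$, each surviving set $S_v\cap S(e)$ is an interval, and the adjacencies between the surviving row and column vertices are determined entirely by pairwise intersections of these intervals, so the restricted bipartite structure coincides with that of an interval graph $G'$ whose start-point order (under a minimal representation) agrees with $\prec$. The two hypotheses of \cref{lem:transversalPairInIntervalGraphs} are then satisfied with parameter $t$, producing a semi-induced transversal pair $T_t$ in $G'$; because ``semi-induced'' permits arbitrary edges within $A$, $B$, $C$ and between $A$ and $C$, the pair lifts directly to a semi-induced $T_t$ in $G$. The main obstacle is the iterative refinement: at each step one must verify that fixing a coordinate preserves the rank-$2$ property of the zones in the resulting sub-division, and this relies on the minimality of the $H$-representation to guarantee that distinguishing adjacency witnesses survive along the remaining free edge.
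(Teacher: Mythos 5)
Your overall strategy---refine the rank division along the lexicographic structure of $\prec$ until the problem collapses to an interval graph, then invoke \cref{lem:transversalPairInIntervalGraphs}---is the same as the paper's, and your first refinement step (discarding the at most $h-1$ parts straddling start-edge boundaries and pigeonholing into one start-edge class for the rows and one for the columns) matches the actual proof. However, the core of the argument is missing. Hypothesis~1 of \cref{lem:transversalPairInIntervalGraphs} requires the start-point sets of the $t^2$ row parts and the $t^2$ column parts to be pairwise disjoint and strictly ordered, and nothing in your refinement produces this separation. Your rounds always succeed in ``fixing a coordinate''; if that were possible on every edge, all surviving representatives would share the same marker value on every edge, the zones would become constant, and the rank-$2$ property would be violated. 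The actual proof turns this around: it iterates edge by edge, and the decisive step is the round at which fixing \emph{fails} because there are at least $2t^2$ distinct marker values $m_1'\prec'\cdots\prec' m_{2t^2}'$; one then keeps only every other part and takes $I_k$ to be the subpath from $m_{2k-1}'$ to $m_{2k}'$, so that $I_1\prec'\cdots\prec' I_{t^2}$ are disjoint and each $I_k$ contains all markers of its assigned part (via the monotonicity consequence of \cref{obs:propOrder}). A separate rank-$2$ contradiction shows that this failure must occur strictly before the columns' start edge is reached, and a further argument of the same kind produces the intervals $J_1\prec'\cdots\prec' J_{t^2}$ for the column parts on that start edge. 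None of these three steps appears in your proposal, and they are where essentially all the work lies.

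A second, related error: the surviving vertices do not in general ``agree on every edge of $H$ except possibly one distinguished edge $e$'', and the final interval structure does not live on a single edge. The rows' distinguishing edge $e_j$ and the columns' start edge $e_{i'}$ are typically different, and the paper builds the interval representation along the whole path in $H$ from $e_j$ to $e_{i'}$ (with some care about the direction in which $e_j$ is traversed, since for $j>i$ the relevant marker is a maximum rather than a minimum). Finally, the obstacle you single out---that the refinement might destroy the rank-$2$ property of the zones---is not an obstacle at all: the refinement only ever deletes whole row or column parts, so every surviving zone is an unchanged submatrix of the original rank division and automatically retains at least $f(t)$ distinct rows or columns, hence rank at least $2$; minimality of the $H$-representation plays no role in that step.
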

\begin{proof}
    For ease of notation, we sometimes identify rows and columns with the vertices they are indexed by.
    Let $\mathcal{A}=(A_1,\dots, A_{f(t)})$, $\mathcal{B}=(B_1,\dots, B_{f(t)})$ be a rank-$f(t)$ division of $\adj{\prec}{G}$. We can view this $f(t)$-division as a $2$-division in which each zone has a rank-$(h \cdot (2 t^2)^{h}+h-1)$ division.
    Furthermore, among the four resulting zones at least one does not contain any entries from the main diagonal and hence the rows and columns of this submatrix are indexed by two disjoint vertex sets. We restrict to this zone and let $\mathcal{A}'=(A_1',\dots, A_{h \cdot (2 t^2)^{h}+h-1}')$, $\mathcal{B}'=(B_1',\dots, B_{h \cdot (2 t^2)^{h}+h-1}')$ be the rank-$(h \cdot (2 t^2)^{h}+h-1)$ division obtained by omitting the other row/column parts.

    Let $C_i$ be the set of vertices with start edge $e_i$. Observe that $(C_1,\dots, C_{h})$ is a partition of both the rows and columns of $\adj{\prec}{G}$ into consecutive parts (the reason for choosing $h= \Vert H \Vert+1$ will become evident below). Hence, at most $h-1$ of the row parts $A_1',\dots, A_{h \cdot (2 t^2)^{h}+h-1}'$ and of the column parts $B_1',\dots, B_{h \cdot (2 t^2)^{h}+h-1}'$ contain vertices from more than one of the sets $C_1,\dots, C_{h}$. Let $\mathcal{A}''=(A_1'', \dots, A_{h \cdot (2 t^2)^{h}}'')$, $\mathcal{B}''=(B_1'', \dots, B_{h \cdot (2 t^2)^{h}}'')$ be the $(h \cdot (2 t^2)^{h})$-division obtained from $\mathcal{A}'$, $\mathcal{B}'$ by restricting to parts that are fully contained in one of the sets $C_1,\dots, C_{h}$. Note that at this point we only consider one connected component of $H$ because we only consider one specific edge which resides in one component of $H$. 
    
    Next, we observe that there must be two (possibly equal) indices $i,i'$ such that the set $C_i$ contains at least $(2 t^2)^{h}$ of the row parts of $\mathcal{A}''$ and the set $C_{i'}$ contains at least $(2 t^2)^{h}$ of the column parts of $\mathcal{B}''$. Let $\mathcal{A}'''=(A_1''', \dots, A_{(2 t^2)^{h}}''')$, $\mathcal{B}'''=(B_1''', \dots, B_{(2 t^2)^{h}}''')$ be the restriction of $\mathcal{A}''$ and $\mathcal{B}''$ to parts fully contained in $C_i$ or $C_{i'}$, respectively. Without loss of generality, we may assume that $e_i\leq e_{i'
    }$.

    \begin{claim}\label{intervalinH-graph}There exist two (possibly equal) indices $j, j'$ with $i\leq j\leq j' \leq i'$, a $t^2$-division $\mathcal{A}^\circ =(A_1^\circ, \dots, A_{t^2}^\circ)$, $\mathcal{B}^\circ=(B_1^\circ, \dots, B_{t^2}^\circ)$ obtained from $\mathcal{A}'''$ and $\mathcal{B}'''$ by deleting some row/column parts, and pairwise disjoint connected vertex subsets $I_1,\dots, I_{t^2}$, $J_1, \dots, J_{t^2}$ of $H'$ satisfying the following: 
    
    \begin{enumerate}[(i)]
        \item\label{item1} for every $1 \leq k \leq t^2$, $I_k\subseteq \{z_1^{(j)},\dots, z_{\ell_j}^{(j)}\}$ and $J_k\subseteq \{z_1^{(j')},\dots,z_{\ell_{j'}}^{(j')} \}$,
        \item\label{item2} $I_1\prec' \dots \prec' I_{t^2}\prec' J_1\prec' \dots \prec' J_{t^2}$, 
        \item\label{item3} for every $1 \leq k \leq t^2$ and $b\in B_k^\circ$, $\min_b^{(j')}\in J_k$, 
        \item\label{item4} for every $1 \leq k \leq t^2$ and $a\in A_k^\circ$, if $j=i$ then $\min_a^{(j)}\in I_k$, and if $j>i$ then $\max_a^{(j)}\in I_k$.
    \end{enumerate}
    \end{claim}

    \begin{claimproof}
    We first find an index $j$ satisfying the properties above by repeating the steps described below. To simplify the inductive argument, we may assume that the minimum (with respect to $\prec'$) in $\bigcup_{1\leq k \leq (2 t^2)^{h}}\bigcup_{b\in B_k'''}S_b$ is the vertex  $z_1^{(i')}$. If this is not the case, we may subdivide $e_{i'}$ in $H$ once, obtaining the new edges $e_{i'}^1$ and $e_{i'}^2$, where $e_{i'}^1$ is closer to the root of $H$. The new graph $H$ has one extra edge which is accounted for in our choice of $h$. We obtain a new total edge ordering $<$ of the new graph $H$ by replacing $e_{i'}$ by $e_{i'}^1$ directly followed by $e_{i'}^2$. This edge ordering ensures that the ordering $\prec$ we obtain coincides with the original ordering.
    
    Set $m_a^{(j)}=\min_a^{(j)}$ in case $j=i$, and $m_a^{(j)}=\max_a^{(j)}$ in case $j>i$. For every $1 \leq k \leq (2 t^2)^{h}$, set $m_k^{(j)}$ to be the minimum (with respect to $\prec'$) of $m_{a}^{(j)}$ over all $a\in A_k'''$.  
    We start with $j=i$ and increase $j$ at every unsuccessful iteration by one. We assume that in the previous iteration we either have found a suitable index $j$ (Case 1 below), or obtained $(2t^2)^{h+i-j}$ row parts $A_1^{(j)},\dots, A_{(2t^2)^{h+i-j}}^{(j)}$ (initially, we set $A_k^{(i)}=A_k'''$ for every $1 \leq k \leq (2 t^2)^{h}$) such that for every pair $a,a'\in \bigcup_{1\leq k \leq (2t^2)^{h+i-j}} A_k^{(j)}$ we have $m_{a}^{(k)}=m_{a'}^{(k)}$ for every $k<j$ (Case 2 below). 
    
    We make the following observation.
    \begin{nitem}\label{claim-intervalOverlap}
        For every $1 \leq k \leq (2 t^2)^{h}$ and $a\in A_{k}^{(j)}$, it holds that $m_{k}^{(j)}\preceq' m_{a}^{(j)}\preceq' m_{k+1}^{(j)}$. 
    \end{nitem}
        To show \eqref{claim-intervalOverlap}, observe first that, by \cref{obs:propOrder}, $m_{k}^{(j)}\preceq' m_a^{(j)}$. To argue that  $m_a^{(j)}\preceq' m_{k+1}^{(j)}$, let $a'\in A_{k+1}^{(j)}$ be a vertex for which $m_{a'}^{(j)}=m_{k+1}^{(j)}$. Recall that $A_{k}^{(j)}\prec A_{k+1}^{(j)}$ (and hence $a\prec a'$) and by assumption $m_{a}^{(k)}=m_{a'}^{(k)}$ for every $k<j$. By \cref{obs:propOrder} this implies that $m_{a}^{(j)}\preceq' m_{a'}^{(j)}=m_{k+1}^{(j)}$.

    \bigskip    
    We now distinguish the following cases.
    
    \noindent \textbf{Case 1.} There are at least $2t^2$ pairwise distinct vertices among $m_1^{(j)}, \dots, m_{(2t^2)^{h+i-j}}^{(j)}$. 
    
    Let $m_1'\prec' \dots \prec' m_{2t^2}'$ be such vertices and, for every $1\leq k \leq 2t^2$, let $A_k^{\circ'} \in \{A_1^{(j)},\dots, A_{(2t^2)^{h+i-j}}^{(j)}\}$ be a row part for which the minimum (with respect to $\prec'$) of $m_a^{(j)}$ over all $a\in A_k^{\circ'}$ is $m_k'$. In this case, for every $1\leq k \leq t^2$, we set $I_k$ to be the set of vertices of the subpath of $z_1^{(j)}\cdots z_{\ell_j}^{(j)}$ from $m_{2k-1}'$ to $m_{2k}'$. By construction, the sets $I_1,\dots, I_{t^2}$ are pairwise disjoint, connected and satisfy that $I_1\prec' \dots \prec' I_{t^2}$. By \eqref{claim-intervalOverlap}, we further guarantee that $m_a^{(j)}\in I_{k}$ for every $a\in A_{2k-1}^{\circ'}$. Hence, by setting $A_k^{\circ}=A_{2k-1}^{\circ'}$, we obtain that our current choice of $j$ is suitable. 
    
    \noindent \textbf{Case 2.} There are at most $2t^2-1$ pairwise distinct vertices among $m_1^{(j)}, \dots, m_{(2t^2)^{h+i-j}}^{(j)}$.
    
    Hence, there must be $m\in \{m_1^{(j)}, \dots, m_{(2t^2)^{h+i-j}}^{(j)}\}$ which appears at least $(2t^2)^{h+i-(j+1)}+1$ times among $m_1^{(j)}, \dots, m_{(2t^2)^{h+i-j}}^{(j)}$. We let $A_1^{(j+1)}\prec \dots \prec A_{(2t^2)^{h+i-(j+1)}}^{(j+1)}$ be the first (with respect to $\prec$) $(2t^2)^{h+i-(j+1)}$ parts for which the minimum coincides with $m$. By \eqref{claim-intervalOverlap} this implies that $m_a^{(j)}=m_{a'}^{(j)}$ for every pair of vertices $a,a'\in \bigcup_{1\leq k \leq (2t^2)^{h+i-(j+1)}} A_k^{(j)}$, providing the row parts for the next iteration. In this case, we increase $j$ by $1$ and proceed with the next iteration.

    We now argue that this process terminates with $j< i'$. Indeed, if this is not the case, then we obtain $2t^2\geq 2$ row parts $A_1^{(i')},\dots, A_{2t^2}^{(i')}$ such that, for every pair $a,a'\in \bigcup_{1\leq k \leq 2t^2} A_k^{(i')}$, we have $m_{a}^{(k)}=m_{a'}^{(k)}$ for every $k<i'$. Note that there are some vertices  $a\in A_2^{(i')}$, $b\in B_2'''$ which are adjacent. Otherwise, the rank of $A_2^{(i')}\cap B_2'''$ is less than $2$, as such zone only contains $0$ entries. By \Cref{obs:propOrder}, it follows that $\max_a^{(i')}\preceq' \max_{a'}^{(i')}$ for every $a'\in A_1^{(i')}$, since $a'\prec a$. Additionally, by \Cref{obs:propOrder}, we have that $\min_{b'}^{(i')}\preceq' \min_{b}^{(i')}$ for every $b'\in B_2'''$, as $b'\prec b$.   But then every vertex in $A_1^{(i')}$ is adjacent to every vertex in $B_1'''$. Hence, the rank of $A_1^{(i')}\cap B_1'''$ is  less than $2$, a contradiction.

    We finally argue that setting $j'=i'$ suffices. We set $m_k$ to be the minimum (with respect to $\prec'$) of $\min_{b}^{(i')}$ over all $b\in B_k'''$. First, we argue that $m_k\not= m_{k+1}$. Similarly to \cref{obs:propOrder},  $m_k= m_{k+1}$ would imply that $\min_b^{(i')}=m_k$ for every $b\in B_k'''$. Since the starting edge $e_i$ of the $a'$s precedes $e_{i'}$, this implies that, for example, in the zone $A_1^\circ\cap B_k'''$ every row is either constant $0$ or constant $1$. Since this contradicts that $A_1^\circ\cap B_k'''$ has rank at least $2$, we get that $m_k\not= m_{k+1}$. We define the set $J_k$ to be the subpath of the path $z_{1}^{(i')}\cdots z_{\ell_{i'}}^{(i')}$  from $m_{2k-1}$ to $m_{2k}$. By our observation before, we know that the sets $J_k$ are pairwise vertex disjoint, connected and $I_{t^2}\prec' J_1 \prec' \dots \prec' J_{t^2}$. Similarly to \eqref{claim-intervalOverlap}, we can additionally show that $\min_b^{(i')}\in J_k$ for every vertex $b\in B_{2k-1}$, and hence we set $B_k^\circ=B_{2k-1}$.
    \end{claimproof}

\noindent
    Equipped with \Cref{intervalinH-graph}, we now consider the graph $\widehat{G}=G[\bigcup_{1\leq k \leq t^2} (A_k^\circ\cup  B_k^\circ)]$, for which we build an intersection model as follows. Choose $\widehat{H}=P_2$ and 
    let $P$ be the path from $e_{j}$ to $e_{j'}$ (including $e_j$ and $e_{j'}$). We choose the framework $\widehat{H}'$ to be $P$ after successively contracting edges $xy$ for which there is no vertex $v\in V(\widehat{G})$ for which $x\in S_v$ and $y\notin S_v$ (or the other way around). For $u\in \bigcup_{1\leq k \leq t^2} (A_k^\circ\cup  B_k^\circ)$, we let $\widehat{S}_u=S_u\cap V(\widehat{H}')$ and set $\widehat{\mathcal{S}}=\{S_u: u\in \bigcup_{1\leq k \leq t^2} (A_k^\circ\cup  B_k^\circ)\}$. It is easy to observe that $\widehat{\mathcal{S}}$ is a minimal 
    $P_2$-representation of $\widehat{G}$ and hence $\widehat{G}$ is an interval graph. Furthermore, (\ref{item3}) and (\ref{item4}) imply that $\start(A_k^\circ)\subseteq I_k$ and $\start(B_k^\circ)\subseteq J_k$ for every $1\leq k \leq t^2$ (note that, unless $j=i$, the framework $\widehat{H}'$ traverses $e_j$ largest to smallest vertex). Hence, by (\ref{item2}), we have that $\start(A_1^\circ)<\dots < \start(A_{t^2}^\circ)<\start(B_1^\circ)<\dots < \start(B_{t^2}^\circ)$. Since additionally $A_k^\circ \cap A_\ell^\circ$ has rank  at least $2$, we   
    can now use \cref{lem:transversalPairInIntervalGraphs} to complete the proof.
\end{proof}

We can finally show \Cref{thm:effectiveDelineationHGraph,t-wedoabitmore}. First, we immediately obtain \Cref{thm:effectiveDelineationHGraph} by combining \cref{lem:findingTransversalPair}, \cref{lem:seqOrTransversalPair} and \cref{thm:twwGridRank}. As for \cref{t-wedoabitmore} (restated below for convenience), we use \cref{lem:findingTransversalPair} combined with some standard results from the literature.

\twedo*

\begin{proof} Let $\prec$ be the linear order on the vertex set of an $H$-graph defined above. We first argue that $\gr(\adj{\prec}{G})$ is bounded by a constant $k$ for every $G\in \mathcal{D}$. If this is not the case then, by \Cref{lem:findingTransversalPair}, for every $c\in \mathbb{N}$ there exists a graph $G\in \mathcal{D}$ that contains a semi-induced transversal pair $T_c$. But then $\mathcal{D}$ is monadically independent, i.e., we can transduce the class of all graphs $\mathcal{G}$ from $\mathcal{D}$ (this is implied by \Cref{lem:seqOrTransversalPair}, and an explicit transduction is given in \cite[Lemma~13]{BonnetC0K0T22}). Since $\mathcal{D}$ has bounded twin-width and boundedness of twin-width is closed under applying first-order transductions \cite[Theorem~8.1]{BKTW22}, we conclude that the class of all graphs $\mathcal{G}$ must have bounded twin-width, a contradiction (see, e.g., \cite{BKTW22} were several classes of graphs of unbounded twin-width are given). 
An inspection of the transduction in \cite[Lemma~13]{BonnetC0K0T22} and the proof of \cite[Theorem~8.1]{BKTW22} reveals that the constant $k$ is computable and depends only on the twin-width $t$ of $\mathcal{D}$, hence $k=f(t)$ for some computable function~$f$.

We can compute $\prec$ in $O(|V(G)|)$-time using the $H$-representation of $G\in\mathcal{D}$, and use \cref{thm:twwGridRank} to compute a $g(f(t))$-contraction sequence of $G$ in time $h(t)\cdot |V(G)|^{O(1)}$, for some computable functions $g,h$. 
\end{proof}

\section{The Proof of Theorem~\ref{thm:top-prop}}\label{sec:sketch}

In this section we give the proof of \Cref{thm:top-prop}.  
As mentioned, we prove it by adapting a proof of a result of Balab\'an, Hlinen\'y and Jedelsk\'y~\cite{BHJ-DM} using a different approach (as we explain in detail later).

\begin{restatable}[Balab\'an, Hlinen\'y and Jedelsk\'y~\cite{BHJ-DM}]{theorem}{thmBHJ}
\label{thm:BHJ-DM}
    Let $H$ be a simple connected 
    graph\footnote{We remark that the proof of \Cref{thm:BHJ-DM} in \cite{BHJ-DM} in fact works for every multigraph $H$.}, and let $k$ be the sum of the number of paths and the number of cycles in $H$.
    Let $G$ be a proper $H$-graph, admitting a proper $H$-representation where each vertex set induces a path in the subdivision of $H$.
    Then $G$ is proper $||H||^2k$-mixed-thin, and a proper $||H||^2k$-mixed-thin representation of $G$
     can be computed in polynomial time from such an $H$-representation of $G$.
\end{restatable}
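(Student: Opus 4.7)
The plan is to build a proper mixed-thin representation of $G$ directly from the given proper ``path'' $H$-representation $\mathcal{S}$. I would first decompose $H$ topologically into its $p$ maximal paths (between vertices of degree $\neq 2$) and its $c$ cycle components (components in which every vertex has degree $2$), giving $k = p + c$ pieces $T_1, \ldots, T_k$ whose corresponding subdivided copies in $H'$ share only the branch vertices of $H$. Since every representative $S_v \in \mathcal{S}$ is a path in $H'$, it has two well-defined endpoints and a well-defined ``route'' through these pieces.

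Next, I would partition $V(G)$ by \emph{signatures}. For each $v \in V(G)$, let the signature of $v$ record (a) the ordered pair of edges of $H$ whose subdivisions contain the two endpoints of $S_v$, and (b) the piece $T_j$ through which $S_v$ first exits its starting endpoint (a tiebreak that ensures all vertices in a class share the same sequence of pieces and endpoint edges). The number of distinct signatures is at most $\Vert H \Vert^2 \cdot k$, so one gets a partition $(V^1, \ldots, V^t)$ with $t \le \Vert H \Vert^2 k$. Within a class $V^i$, all representatives follow the same topological route and share the same edge of $H$ at each endpoint; properness of $\mathcal{S}$ forbids one from strictly containing another, so the two endpoint-positions vary monotonically together and induce a proper-interval (or proper-circular-arc, for a cyclic route) structure on $G[V^i]$, yielding the ordering $<_{ii}$ with sign $R_{ii} = 1$.

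For two distinct classes $V^i$ and $V^j$, I would case-analyse how their routes interact. If the routes are edge-disjoint in $H$ and meet at no branch vertex then adjacency is impossible, and $R_{ij}$ is chosen so that $E_{i,j}$ is empty. If the routes share an arc, then adjacency is governed by the relative positions along that shared arc and is a bipartite proper interval graph in the orderings inherited from $<_{ii}$ and $<_{jj}$, so we can take $R_{ij} = 1$ and let $<_{ij}$ be the natural interleaving. If the routes meet only at a branch vertex $x$, then deciding whether $S_u \cap S_v \neq \varnothing$ reduces to comparing how far each representative extends into each arc incident to $x$; properness pins this down consistently along each arc, so the adjacency pattern (or its complement, chosen by $R_{ij} \in \{-1,1\}$) forms a bipartite staircase in a suitable interleaving $<_{ij}$. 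In every subcase, (AL) follows because $<_{ij}$ is constructed as the restriction (or reversal) of $<_{ii}$ and $<_{jj}$, while (CO) and (SC) are verified directly from the proper path structure.

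The main obstacle I anticipate is the case where the routes of $V^i$ and $V^j$ both pass through a common branch vertex $x$ of $H$ but use different pairs of arcs incident to $x$. The adjacency between $u \in V^i$ and $v \in V^j$ can then depend simultaneously on whether both reach $x$ at all and on how deeply each extends into a possibly shared arc on the far side; capturing this with a single linear order $<_{ij}$ and a single sign $R_{ij}$ is the delicate point. The fix is to make signatures fine enough that all vertices in one class agree on exactly which arcs incident to $x$ they enter (and with what orientation), and then orient $<_{ij}$ so that ``deeper into $x$ along one side'' is aligned with ``deeper into $x$ along the other''. A careful accounting shows that this refinement still fits within the bound $\Vert H \Vert^2 k$, because the extra information is already determined by the endpoint edges of $H$ and the choice of first piece in the signature. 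Once this is set up, (AL), (CO) and (SC) are routine, and since signatures, orderings and signs are read directly off $\mathcal{S}$, the whole construction is polynomial time.
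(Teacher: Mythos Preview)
Your high-level strategy --- partition $V(G)$, order each class as a proper interval (or proper circular-arc) graph, then extend pairwise with a sign choice --- matches the approach the paper attributes to Balab\'an, Hlinen\'y and Jedelsk\'y. However, there is a genuine gap in how you set up the partition, stemming from a misreading of the parameter $k$.

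You take $k$ to be the number of pieces in the topological decomposition of $H$ (maximal degree-$2$ paths plus cycle components). But in the statement, ``the number of paths and the number of cycles in $H$'' means path \emph{subgraphs} and cycle \emph{subgraphs} of $H$. This is confirmed by the paper's summary of the proof: the first step is to partition $V(G)$ ``according to the vertices and (subdivided) edges of $H$ involved in their representatives'', i.e., by the full footprint of each $S_v$ in $H$, which is a path or a cycle of $H$. Only afterwards is this refined by the (ordered) pair of edges of $H$ containing the two endpoints. With that reading, the bound $\Vert H\Vert^2 k$ is immediate, whereas your reading would give a much smaller $k$ and hence a stronger claim than the theorem makes.

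More importantly, your signature (endpoint-edge pair plus first piece traversed) does \emph{not} determine the route, so your classes can mix vertices with different footprints. For instance, take $H$ on $\{1,2,3,4,5\}$ with edges $12,23,24,34,35,45$. Two paths in $H'$ starting inside edge $12$, entering piece $24$ first, and ending inside edge $35$ can go either $12{-}24{-}34{-}35$ or $12{-}24{-}45{-}35$; they share your signature but have different routes. Now let $W'$ be the class of vertices represented inside edge $34$. A vertex of the first route may or may not meet a given $w\in W'$, while a vertex of the second route is never adjacent to $w$; this dichotomy is invisible to any order built from the endpoint positions in $12$ and $35$, so no single sign $R_{ij}$ and interleaving $<_{ij}$ can make the $V^i$--$W'$ bipartite pattern a staircase. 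Conditions (CO)/(SC) fail.

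You actually diagnose this yourself in your ``main obstacle'' paragraph: you need classes in which all vertices agree on \emph{every} arc they enter at \emph{every} branch vertex --- that is precisely the full-footprint partition. The fix is not a minor refinement that still fits under $\Vert H\Vert^2 \cdot (\text{number of topological pieces})$; it is exactly the partition by path/cycle subgraphs of $H$, whose count is the $k$ in the theorem. Once you adopt that partition (and then refine by the special endpoint edges, as in the paper's description), your per-class and pairwise analyses go through along the lines you sketched.
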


Note that \Cref{thm:top-prop} and \Cref{thm:BHJ-DM} address incomparable graph classes.
We will need an auxiliary result, \cref{lem:circ-arc-2-pmixed}, which states that a proper circular-arc graph is inversion-free proper $2$-mixed-thin, and describes how to obtain a representation.
Note, in contrast, that proper circular arc-graphs have unbounded thinness (see \Cref{t-nobound}).
The key idea behind \cref{lem:circ-arc-2-pmixed} appears in~\cite{BHJ-DM} (in the proof of Theorem~3.7, see cases~3 and~6) and we provide a proof below.

We also require a lemma that we use in the proofs of \Cref{lem:circ-arc-2-pmixed} and \Cref{thm:top-prop}.
In order to prove that a graph is proper $k$-mixed-thin, we frequently need to extend two proper interval orders to a linear ordering satisfying (CO) and (SC). The next lemma characterises when this is possible (see \Cref{fig:bip-pat-comp}).
We first need to give the definition of bipartite pattern from~\cite{B-B-lagos21-dam}. 

A \emph{bipartite trigraph} $T$ is a $5$-tuple $(A(T) \cup B(T),E(T),N(T),U(T))$, where $A(T) \cup B(T)$ is the vertex set and $E(T)$ (\emph{edges}), $N(T)$ (\emph{non-edges}) and $U(T)$ (\emph{undecided edges}) form a partition of the set $\{(a,b): a \in A(T), b \in B(T)\}$. A bipartite graph $G = (A(G) \cup B(G),E(G))$ is a \emph{realisation} of a trigraph $T$ if $A(G) = A(T)$, $B(G) = B(T)$, and $E(G) = E(T) \cup U'$, where $U' \subseteq U(T)$. An \emph{ordered bipartite graph} is a bipartite graph $G$ with a linear ordering of each of $A(G)$ and $B(G)$. We define the same for a bipartite trigraph, and call it a \emph{bipartite pattern}. We say that an ordered bipartite graph is a \emph{realisation} of a bipartite pattern if they share the same sets of vertices and respective linear orderings and the bipartite graph is a realisation of the bipartite trigraph. When, in an ordered bipartite graph, no ordered subgraph (with induced partition and orders) is the realisation of a given bipartite pattern, we say that the ordered bipartite graph \emph{avoids} the bipartite pattern.

\begin{lemma}[Bonomo-Braberman and Brito~\cite{B-B-lagos21-dam}]\label{lem:patterns}
    Let $G$ be a graph, let $\{V^1,V^2\}$ be a partition of $V(G)$, and let $<_1$ and $<_2$ be proper interval orders of $V^1$ and $V^2$ respectively.
    The linear orders $<_1$ and $<_2$ can be extended to a linear order $<$ of $V(G)$ such that $(\{V^1,V^2\}, <, \allowbreak\left(\begin{smallmatrix*}[r]1 & -1
\\ -1 & 1 \end{smallmatrix*}\right))$ satisfies conditions (CO) and (SC) if and only if $G[V^1,V^2]$, ordered according to $<_1$ and $<_2$, avoids the bipartite patterns $\overline{R_1}$, $\overline{R_2}$, $\overline{R_4}$ and $\overline{R_4'}$.
\end{lemma}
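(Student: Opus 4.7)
The plan is to prove both directions of the equivalence, with the main effort going into sufficiency.

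\emph{Necessity.} Assume $<_1$ and $<_2$ extend to a linear order $<$ such that $(\{V^1,V^2\},<,R)$ with $R = \left(\begin{smallmatrix*}[r] 1 & -1 \\ -1 & 1 \end{smallmatrix*}\right)$ satisfies (CO) and (SC). Under this choice of $R$, each cross-class instance of (CO) and (SC) (the cases $i \neq j$) translates into a statement about non-edges of $G[V^1,V^2]$: informally, the non-neighbourhood within $V^2$ of every $u \in V^1$ must occupy a contiguous prefix-and-suffix of $V^2$ with respect to $<$, and symmetrically. For each of $\overline{R_1},\overline{R_2},\overline{R_4},\overline{R_4'}$, I would suppose for contradiction that the pattern occurs in $G[V^1,V^2]$. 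Each pattern involves only four vertices with a prescribed adjacency and orderings on each side, so there are just a handful of $<$-interleavings consistent with $<_1$ and $<_2$; by direct case analysis, each forces a violation of (CO) or (SC). This is a routine finite check.

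\emph{Sufficiency.} Assume $G[V^1,V^2]$, ordered by $<_1$ and $<_2$, avoids all four patterns. I would construct $<$ explicitly by a merge procedure. The first step is to show, using the avoidance of $\overline{R_1}$ and $\overline{R_2}$, that for every $v \in V^1$ the set $N_G(v)\cap V^2$ is an interval of $(V^2,<_2)$, and symmetrically for every $w \in V^2$ the set $N_G(w)\cap V^1$ is an interval of $(V^1,<_1)$. Given this interval structure, each $v \in V^1$ gets a natural ``insertion window'' in $<_2$ (say, between its last $V^2$-non-neighbour lying $<_2$-before its neighbours and its first $V^2$-neighbour), and one places $v$ somewhere in this window. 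The avoidance of $\overline{R_4}$ and $\overline{R_4'}$ then ensures that these insertion windows are monotone in $<_1$: as $v$ moves forward in $<_1$, its insertion point only moves forward in $<_2$. This yields a well-defined linear extension $<$ of $<_1 \cup <_2$, and checking (CO) and (SC) on triples that mix the two classes reduces, by the interval structure, to a short verification.

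\emph{Main obstacle.} The delicate point is proving that the four listed patterns exhaust all obstructions: any cross-class conflict in the merge must trace back to one of them. I would handle this by enumerating the minimally forbidden ordered sub-configurations on two vertices of $V^1$ and two of $V^2$ and showing each falls into exactly one of the four types; equivalently (and probably cleaner), one can encode (CO) and (SC) as a directed ``forced-order'' graph on $V(G)$ built from $<_1,<_2$ together with the arcs forced by the bipartite adjacency, and prove that avoidance of $\overline{R_1},\overline{R_2},\overline{R_4},\overline{R_4'}$ is equivalent to acyclicity of this digraph; any topological sort then gives the desired~$<$.
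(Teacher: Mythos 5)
First, note that the paper does not prove \cref{lem:patterns} at all: it is imported verbatim from Bonomo-Braberman and Brito~\cite{B-B-lagos21-dam}, so there is no in-paper proof to compare your attempt against. Judged on its own terms, your overall strategy (necessity by a finite case check, sufficiency by an interval-structure-plus-monotone-merge argument) is the natural one, but the concrete structural claims in your sufficiency step are wrong in two ways. First, since the matrix entry $R_{12}$ equals $-1$, conditions (CO) and (SC) across the two classes are imposed on the bipartite \emph{complement}; what must be an interval of $(V^2,<_2)$ for each $v\in V^1$ is therefore the \emph{non}-neighbourhood $V^2\setminus N_G(v)$, not $N_G(v)\cap V^2$, and $v$ must be inserted inside that non-neighbourhood interval -- yet you state the interval property for the neighbourhood, even though your necessity paragraph correctly observes that the cross-class conditions concern non-edges. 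Second, you attribute the properties to the wrong patterns: a constraint on one vertex of one class against three of the other is exactly what $\overline{R_4}$ and $\overline{R_4'}$ encode (each uses three vertices on one side and one on the other), so it is their avoidance -- not that of $\overline{R_1}$ and $\overline{R_2}$ -- that yields the per-vertex interval structure; conversely, $\overline{R_1}$ and $\overline{R_2}$ use two vertices on each side and are what force the insertion windows of distinct vertices to be compatible and monotone. Your proof swaps these two roles.

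Beyond that, the step you yourself flag as the ``main obstacle'' -- showing that every conflict in the merge traces back to one of the four patterns, equivalently that the forced-order digraph is acyclic whenever the patterns are avoided -- is precisely the substance of the lemma, and you only name two possible ways to attack it without carrying either out. Acyclicity of a forced-order digraph does not follow merely from excluding a few short obstructions; one must show that any directed cycle (or any failure of the merge) can be contracted to a configuration on at most four vertices of the prescribed shapes, and that argument, which is the bulk of the proof in~\cite{B-B-lagos21-dam}, is absent here. So the proposal has the right architecture, but the parts that are made concrete are incorrect as stated, and the decisive step is missing.
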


\begin{figure}
    \begin{center}
    \begin{tikzpicture}[scale=0.65]

\foreach \x in {0,...,4,7}
    \foreach \y in {0,1}
        \vertex{2*\x}{\y}{v\x\y};

\foreach \x in {5,6}
    \foreach \y in {1}
        \vertex{2*\x}{\y}{v\x\y};

\foreach \x in {4,7}
    \foreach \y in {2}
        \vertex{2*\x}{\y}{v\x\y};

\path (v00) edge [e1,dotted] (v11); \path (v01) edge [e1,dotted] (v10); \path
(v00) edge [e1] (v10);

\path (v20) edge [e1,dotted] (v31); \path (v21) edge [e1,dotted] (v30); \path
(v21) edge [e1] (v31);

\path (v40) edge [e1,dotted] (v51); \path (v42) edge [e1,dotted] (v51); \path
(v41) edge [e1] (v51);

\path (v70) edge [e1,dotted] (v61); \path (v72) edge [e1,dotted] (v61); \path
(v71) edge [e1] (v61);

\vertexLabel[left]{v00}{$a$}
\vertexLabel[left]{v01}{$b$}
\vertexLabel[right]{v10}{$c$}
\vertexLabel[right]{v11}{$d$}
\vertexLabel[left]{v20}{$b$}
\vertexLabel[left]{v21}{$a$}
\vertexLabel[right]{v30}{$d$}
\vertexLabel[right]{v31}{$c$}
\vertexLabel[left]{v40}{$a$}
\vertexLabel[left]{v41}{$b$}
\vertexLabel[left]{v42}{$c$}
\vertexLabel[right]{v51}{$d$}

\vertexLabel[right]{v70}{$a$}
\vertexLabel[right]{v71}{$b$}
\vertexLabel[right]{v72}{$c$}
\vertexLabel[left]{v61}{$d$}

    \node at (1,-1) {$\overline{R_1}$};
    \node at (5,-1) {$\overline{R_2}$};
    \node at (9,-1) {$\overline{R_4}$};
    \node at (13,-1) {$\overline{R_4'}$};

    \end{tikzpicture}
    \end{center}
    \caption{The bipartite patterns appearing in \cref{lem:patterns}, ordered from top to bottom. The solid lines denote compulsory edges and the dotted lines are compulsory non-edges in the pattern.}\label{fig:bip-pat-comp}
\end{figure}

Before stating \cref{lem:circ-arc-2-pmixed}, we need one more notion.
Let $G$ be a proper circular-arc graph.
We now describe a natural way to obtain a partition $(W,W')$ of $V(G)$, and total orders $<_{W}$ and $<_{W'}$ on $W$ and $W'$ respectively, from a representation of $G$.
So let $\mathcal{S}$ be a $C_2$-representation of $G$, with framework on vertex set $\{x_1,x_2,\dotsc,x_p\}$, where $x_1,x_2,\dotsc,x_p,x_1$ is a cycle.
Note that we do not require that the representation $\mathcal{S}$ is proper (in particular, it might be non-crossing but not proper).
    Each vertex $s_i \in V(G)$ has a representative in $\mathcal{S}$ that we denote by $S_i$.
    Let $W$ be the subset of vertices of $G$ given by $\{s_i : S_i \textrm{ contains $x_1$}\}$, and let $W'=V(G)\setminus W$. 
    For each $s_i \in W'$, we have $S_i = \{x_i,\dotsc,x_j\}$ for $1 < i \leq j < p+1$; we obtain an order $<_{W'}$ on $s_i \in W'$ by first ordering on $i$, and then ordering on $j$, where we break ties arbitrarily.
    For each $s_i \in W$, we have either $S_i = \{x_i,\dotsc,x_p,x_1,\dotsc,x_j\}$ with $j \leq i$, in which case we let $i' = i$; or $S_i = \{x_1,\dotsc,x_j\}$ for $j > 1$, in which case we let $i' = p+1$;
    we obtain an order $<_{W}$ on $s_i \in W$ by first ordering on $i'$, and then ordering on $j$, where we break ties arbitrarily.
    We call $((W,W'),<_{W},<_{W'})$ a \emph{natural bipartition ordering of $G$ from $\mathcal{S}$}.

We are now ready to prove \Cref{lem:circ-arc-2-pmixed}.

\begin{restatable}{lemma}{circarcpmixed}\label{lem:circ-arc-2-pmixed}
    Let $G$ be a proper circular-arc graph. Then $G$ is inversion-free proper $2$-mixed-thin.
    Moreover, for a natural bipartition ordering $((W,W'),<_{W},<_{W'})$ of $G$, there is a total ordering $<$ of $V(G)$ that extends both $<_{W}$ and $<_{W'}$ such that $((W,W'), \{<_{W},<,<_{W'}\}, \left(\begin{smallmatrix*}[r]1 & -1 \\ -1 & 1 \end{smallmatrix*}\right))$ is a proper $2$-mixed-thin representation of $G$ that satisfies (IN), (CO), and (SC).
\end{restatable}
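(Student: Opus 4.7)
The plan is to apply \cref{lem:patterns} in order to weave the two orderings $<_{W}$ and $<_{W'}$ into a single total order $<$ of $V(G)$ that certifies inversion-free proper $2$-mixed-thinness with partition $(W, W')$ and edge-set matrix $\left(\begin{smallmatrix*}[r]1 & -1 \\ -1 & 1 \end{smallmatrix*}\right)$. Concretely, I need to verify three things: that $<_{W}$ is a proper interval order for $G[W]$, that $<_{W'}$ is a proper interval order for $G[W']$, and that the bipartite subgraph $G[W, W']$ ordered according to $<_{W}$ and $<_{W'}$ avoids the four forbidden bipartite patterns $\overline{R_1}, \overline{R_2}, \overline{R_4}, \overline{R_4'}$ of \cref{lem:patterns}. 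Once all three hold, the lemma produces the desired $<$; since $<$ extends both $<_{W}$ and $<_{W'}$, the representation automatically satisfies (IN), while (CO) and (SC) hold by construction both within each class and across the bipartition.

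The first two checks are immediate from the definitions. Since every arc in $W$ contains $x_1$, the induced subgraph $G[W]$ is a complete graph, so every linear order of $W$ is vacuously a proper interval order. For $G[W']$, the arcs lie on the path $x_2, \dotsc, x_p$ and, because the given $C_2$-representation is proper, they form a proper interval representation of $G[W']$; the order $<_{W'}$ is precisely the canonical sweep of this representation by left endpoint (with right endpoint as tie-breaker), hence a proper interval order. Thus (CO) and (SC) hold within each of $W$ and $W'$ for the matrix entries $R_{11}=R_{22}=1$.

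The main step, and also the main obstacle, is verifying that $G[W, W']$ avoids the four forbidden patterns. The key structural observation is that for $s_a \in W$ and $s_c \in W'$ we have $s_a \not\sim s_c$ if and only if the arc $S_c$ lies entirely inside the ``gap'' of $S_a$ on the cycle --- namely $\{x_{j+1}, \dotsc, x_{i-1}\}$ in Case A of the natural bipartition ordering, or $\{x_{j+1}, \dotsc, x_p\}$ in Case B. Because the representation is proper, a short argument shows that both $i$ and $j$ evolve monotonically along $<_{W}$, so the gap of $S_a$ slides in a structured way around the cycle as $a$ increases in $<_{W}$; analogously both endpoints of $S_c$ are monotone in $<_{W'}$. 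A direct case analysis then rules each pattern out. For instance, for $\overline{R_1}$ with $s_a <_{W} s_b$ and $s_c <_{W'} s_d$, the hypotheses $s_a\sim s_c$, $s_a\not\sim s_d$, $s_b\not\sim s_c$ force $S_d$ into the gap of $S_a$ and $S_c$ into the gap of $S_b$; the monotone behaviour of the gaps combined with $s_c <_{W'} s_d$ then forces $S_c$ also into the gap of $S_a$, contradicting $s_a\sim s_c$. The arguments for $\overline{R_2}$ (symmetric, looking at the right endpoints of the $W$-arcs), and for $\overline{R_4}$ and $\overline{R_4'}$ (where the ``middle'' arc of the three in $W$ has a gap sandwiched between those of the two extreme arcs), are analogous. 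The main bookkeeping difficulty is the case distinction between Case A and Case B arcs in $W$, together with tie-breaking when several arcs share an endpoint $i'$ or $j$; each resulting subcase is then settled by a few lines of arc arithmetic, and together they yield the pattern avoidance and hence the lemma via \cref{lem:patterns}.
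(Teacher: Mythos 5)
Your proposal follows essentially the same route as the paper's proof: observe that $<_W$ and $<_{W'}$ are proper interval orders of $G[W]$ and $G[W']$ (with $G[W]$ a clique), show that $G[W,W']$ ordered by $<_W,<_{W'}$ avoids the patterns $\overline{R_1},\overline{R_2},\overline{R_4},\overline{R_4'}$ via the monotone behaviour of arc endpoints forced by properness, and invoke \cref{lem:patterns} to merge the orders; your sketched contradiction for $\overline{R_1}$ matches the paper's endpoint-inequality argument. The approach is correct.
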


\begin{proof}
    Let $((W,W'),<_{W},<_{W'})$ be a natural bipartition ordering of $G$.
    It is clear that $G[W]$ and $G[W']$ are proper interval graphs and that $<_W$ and $<_{W'}$ are proper interval orders for these graphs, respectively. 
    
    We will now show that $G[W,W']$, ordered according to $<_W$ and $<_{W'}$, avoids the bipartite patterns
    $\overline{R_1}$, $\overline{R_2}$, $\overline{R_4}$ and $\overline{R_4'}$
    of \Cref{fig:bip-pat-comp}. \Cref{lem:patterns} then implies that 
    $<_W$ and $<_{W'}$
    can be extended to a linear order $<$ of $V(G)$ such that $((W,W'), \{<_{W},<,<_{W'}\}, \left(\begin{smallmatrix*}[r]1 & -1 \\ -1 & 1 \end{smallmatrix*}\right))$ satisfies (CO) and (SC) and clearly (IN) as well. To this end, suppose to the contrary that $a,b,c,d \in V(G)$ realise in $G$ one of the bipartite patterns
    $\overline{R_1}$, $\overline{R_2}$, $\overline{R_4}$ or $\overline{R_4'}$.
    Let $S_a = \{i_a,\dotsc,j_a\}$,
    $S_b = \{i_b,\dotsc,j_b\}$,
    $S_c = \{i_c,\dotsc,j_c\}$, and
    $S_d = \{i_d,\dotsc,j_d\}$ be the corresponding sets in the $C_2$-representation.

    Suppose first that $a,b \in W$ and $c,d \in W'$ realise one of the bipartite patterns $\overline{R_1}$ or $\overline{R_2}$. Since $ad, bc \not \in E(G)$, we have $j_a < i_d \leq j_d < i_a$ and $j_b < i_c \leq j_c < i_b$. Since $ac \in E(G)$, either $i_c \leq j_a$ or $i_a \leq j_c$.
    Suppose $\overline{R_1}$ is realised. Then $j_a \leq j_b$ and $j_c \leq j_d$ by the definition of $<_W$ and since the representation is proper.
    Thus, if $i_c \leq j_a$ then $i_c \leq j_a \leq j_b$, a contradiction; whereas if $i_a \leq j_c$ then $i_a \leq j_c \leq j_d$, a contradiction.
    Now suppose $\overline{R_2}$ is realised. Then $i_b \le i_a$ and $i_d \le i_c$.
    Thus, if $i_c \leq j_a$, then $i_d \leq i_c \leq j_a$, a contradiction; otherwise, $i_b \leq i_a \leq j_c$, a contradiction as well.

Suppose now that $a,b,c \in W$ and $d \in W'$ realise the bipartite pattern $\overline{R_4}$. Since $ad, cd \not \in E(G)$, we have $j_c < i_d \leq j_d < i_a$. Since $bd \in E(G)$, either $i_d \leq j_b$ or $i_b \leq j_d$. By the definition of $<_W$, in the first case we have that $i_d \leq j_b \leq j_c$, a contradiction, and in the second case we have that $i_a \leq i_b \leq j_d$, a contradiction as well.

Suppose finally that $d \in W$ and $a,b,c \in W'$ realise the bipartite pattern $\overline{R'_4}$. Since $ad, cd \not \in E(G)$, $j_d < i_a \leq j_c < i_d$. Since $bd \in E(G)$, either $i_d \leq j_b$ or $i_b \leq j_d$. By the definition of $<_{W'}$, in the first case we have that $i_d \leq j_b \leq j_c$, a contradiction, and in the second case we have that $i_a \leq i_b \leq j_d$, a contradiction as well.
\end{proof}

\Cref{lem:circ-arc-2-pmixed} might be of independent interest. Indeed, Balab\'an, Hlinen\'y and Jedelsk\'y~\cite{BHJ-DM} showed that the class of inversion-free proper $k$-mixed-thin graphs is a transduction of the class of posets of width at most $f(k)$, for some quadratic function $f$, and that the classes of trees and $d$-dimensional grids (for fixed $d$) have bounded inversion-free proper mixed-thinness. 

We are finally ready to show \Cref{thm:top-prop}, which we restate below for convenience. The proof is inspired by the proof of \Cref{thm:BHJ-DM} by Balab{\'a}n, Hlinen{\'{y}} and Jedelsk{\'{y}}~\cite{BHJ-DM}. Their approach was the following.
Suppose that $G$ is the intersection graph of a proper family of paths in a subdivision $H'$ of $H$.  First, partition the vertices of $G$ according to the vertices and (subdivided) edges of $H$ involved in their representatives.
We can assume, by possibly choosing a ``finer'' subdivision $H'$, that no end of a representative in $H'$ is a vertex of $H$, i.e., that the ends of all paths ``lie inside'' the subdivided edges of $H$. Then, refine the partition according to the (at most two) edges of $H$ whose subdivisions contain the endpoints of the representative path. Let us say that these are the ``special'' edges of $H$ for the class. Finally, we can order each class in a strongly consistent way, and show that for every pair of classes, the respective orders can be extended to a strongly consistent common total order (by possibly complementing the edges between the classes). This last part is heavily based on the fact that the family of paths is proper, which allows a vertex order within each class to be defined using the order of the corresponding paths endpoints on each of the (at most two) special edges of $H$ for that class.

In our proof, we start in a similar way: based on a non-crossing $H$-representation of a graph $G$ with framework $H'$, we first partition the vertices of $G$ according to the vertices and (subdivided) edges of $H$ involved in their representatives. Note that, unlike for the endpoints of the paths, there can be more than two degree-$1$ vertices of representatives, so it is not straightforward how to define an order with the desired properties.
Our approach is to fix an ``ambassador'' of each class, and prove that, thanks to the non-crossing property, for every other vertex in the class, its representative and the representative of the ambassador differ on at most two (subdivided) edges of $H$. We refine the partition classes according to edges, and prove that, again thanks to the non-crossing property, we are able to order these refined classes in a strongly consistent way, and such that, for every pair of classes, the respective orders can be extended to a strongly consistent common total order (after possibly complementing the edges between the classes). For this last part, we simplify our arguments by using the bipartite pattern characterization of 2-thin graphs in \Cref{lem:patterns}.

In the proof of \Cref{thm:top-prop}, we will repeatedly make use of the following two observations. First, the class of non-crossing $P_2$-graphs coincides with the class of proper interval graphs. Second, the class of non-crossing $C_2$-graphs where no vertex is represented by every vertex of the subdivision of $C_2$ coincides with the class of proper circular-arc graphs.

\thmtopprop*

\begin{proof}
    Suppose that $G$ has a non-crossing $H$-representation $\mathcal{X}$ on framework $H'$, where each $v \in V(G)$ has a representative $X_v$ in $\mathcal{X}$ (so $X_v$ is a connected vertex subset of $H'$).
For each vertex $v \in V(G)$, let $E^1_v \subseteq E(H)$ be the set of edges $e$ of $H$ such that both endpoints of $e$ and all the internal vertices of $e$ in $H'$ belong to $X_v$;
let $E^2_v \subseteq E(H)$ be the set of edges $e$ of $H$ such that both endpoints of $e$ belong to $X_v$ but not all the internal vertices of $e$ in $H'$ belong to $X_v$;
let $E^3_v \subseteq E(H)$ be the set of edges $e$ of $H$ such that $X_v$ contains exactly one endpoint of $e$ (and possibly some internal vertices of $e$ in $H'$); let $E^4_v \subseteq E(H)$ be the set of edges $e$ of $H$ such that none of the endpoints of $e$ belongs to $X_v$ but at least one internal vertex of $e$ in $H'$ belongs to $X_v$; finally, let $F^3_v \subseteq E(H)\times V$ be the set of pairs $(e,x)$ such that $e \in E^3_v$ and $x$ is the endpoint of $e$ that belongs to $X_v$. We call $P_v = (E^1_v, E^2_v, F^3_v, E^4_v)$ the \emph{profile} of $v$. See Figure~\ref{fig:profiles} for an example.

\begin{figure}[t]
\begin{center}
    \begin{tikzpicture}[scale=0.5]

    \vertex{-3}{5}{a};
    \vertex{-3}{7}{b};
    \vertex{-5}{7}{c};
    \vertex{-7}{7}{d};
    \vertex{-7}{5}{e};
    \vertex{-5}{5}{f};

\vertexLabel[below right]{a}{$a$}
\vertexLabel[above right]{b}{$b$}
\vertexLabel[above]{c}{$c$}
\vertexLabel[above left]{d}{$d$}
\vertexLabel[below left]{e}{$e$}
\vertexLabel[below]{f}{$f$}

\node[label=above:{$H$}] at (-5,2.5) {};
\node[label=above:{$H'$}] at (3,2.5) {};

\path (a) edge [e1] (b);
\path (a) edge [e1] (e);
\path (f) edge [e1] (c);
\path (e) edge [e1] (d);
\path (b) edge [e1] (d);

    \vertex{6}{5}{aa};
    \vertex{6}{8}{bb};
    \vertex{3}{8}{cc};
    \vertex{0}{8}{dd};
    \vertex{0}{5}{ee};
    \vertex{3}{5}{ff};

\vertexLabel[below right]{aa}{$a$}
\vertexLabel[above right]{bb}{$b$}
\vertexLabel[above]{cc}{$c$}
\vertexLabel[above left]{dd}{$d$}
\vertexLabel[below left]{ee}{$e$}
\vertexLabel[below]{ff}{$f$}

\path (aa) edge [e1] (bb);
\path (aa) edge [e1] (ee);
\path (ff) edge [e1] (cc);
\path (ee) edge [e1] (dd);
\path (bb) edge [e1] (dd);

\foreach \x in {1,2,3} {
    \vertex{0.5*\x}{5}{ef\x};
    \vertex{\x}{8}{ec\x};
}

\foreach \x in {0,1,2,3,4} {
    \vertex[red]{0.75*\x+3}{8}{bc\x};
}

\vertex{3}{6.5}{fc1};

\foreach \x in {4,5,6,8,10,12} {
    \vertex[red]{0.5*\x}{5}{af\x};
}

\foreach \x in {1,2,3} {
    \vertex[blue]{0}{0.75*\x+5}{ed\x};
}

\end{tikzpicture}
\end{center}
\caption{A graph $H$ and a subdivision $H'$ of $H$. The profiles of vertices $v$ (red) and $w$ (blue), as defined in the proof of \Cref{thm:top-prop}, are:
$E^1_v = \{ab, bc, af\}, E^2_v = \{cf\}, E^3_v=\{cd,ef\}, F^3_v=\{(cd,c),(ef,f)\}$, $E^4_v = \varnothing$; $E^1_w = E^2_w = E^3_w = F^3_w = \varnothing$, and $E^4_w=\{de\}$.}\label{fig:profiles}
\end{figure}

Note that, since every $X_v$ is a connected vertex subset of $H'$, either $E^4_v = \varnothing$ or we have $|E^4_v|=1$ and $E^1_v = E^2_v = E^3_v = F^3_v = \varnothing$.
There are $m$ possible profiles for which $|E^4|=1$.
When $E^4 = \varnothing$, there are $4^m-1$ ways of distributing the edges of $H$ into $E^1,E^2,E^3$ and $E(H)\setminus (E^1 \cup E^2 \cup E^3)$ in such a way that $E^1 \cup E^2 \cup E^3$ is non-empty; for each such way, there are at most $2^{|E^3|} \leq 2^m$ possibilities for choosing the endpoint of each edge in $F^3$.
Therefore, there are at most $(4^m-1)2^m+m$ different profiles.
Consider now the partition $\mathcal{P} = (V[P_u] : u \in V(G)) = (V_1, V_2, \dots , V_k)$ of $V(G)$, where $V[P_u] = \{v\in V(G) : P_v = P_u\}$.

\begin{claim}
\label{claim-ncprop}
Let $v$ and $w$ be distinct vertices in $V_i$, for some $i \in \{1,2,\dotsc,k\}$.
Then $X_v \setminus X_w$ is either empty or a non-empty connected subset of the internal vertices of some edge $e_1$ of $E(H)$, and $X_w \setminus X_v$ is either empty or a non-empty connected subset of the internal vertices of some edge $e_2$ of $E(H)$, where neither $e_1$ nor $e_2$ belongs to $E^1_v =E^1_w$.
\end{claim}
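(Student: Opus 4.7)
The plan is to exploit two things: the non-crossing property (which says $X_v\setminus X_w$ and $X_w\setminus X_v$ are each connected in $H'$) together with the fact that $P_v=P_w$ forces $X_v$ and $X_w$ to agree on all vertices of $H$ itself. Once we know these two things, $X_v\setminus X_w$ is a connected subset of $H'$ which avoids every vertex of $H$, so it is confined to the internal vertices of a single subdivided edge, after which ruling out edges in $E^1_v$ is immediate.

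First I would prove the key intermediate step: for every vertex $x\in V(H)$, one has $x\in X_v$ if and only if $x\in X_w$. The observation is that $x\in X_v$ can be read off from the profile $P_v$. Indeed, looking at any edge $e\in E(H)$ incident to $x$ (so $x$ is an endpoint of $e$ in $H$), the definitions of $E^1,E^2,E^3,E^4$ and ``none'' are exhaustive, and from them it follows that $x\in X_v$ holds exactly when there exists an edge $e$ incident to $x$ with $e\in E^1_v\cup E^2_v$ or with $(e,x)\in F^3_v$. (If all incident edges fall in $E^3_v$ without $(e,x)\in F^3_v$, in $E^4_v$, or in the ``none'' category, then none of them places $x$ in $X_v$.) Since the profiles coincide, the same condition determines $x\in X_w$, and thus $X_v$ and $X_w$ contain exactly the same vertices of $H$.

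Now I turn to the two symmetric differences. By the non-crossing assumption, $X_v\setminus X_w$ is a connected subset of $H'$; by the previous paragraph, it contains no vertex of $H$. In $H'$, every vertex outside $V(H)$ is an internal vertex of the subdivision path of a unique edge $e\in E(H)$, and the only way in $H'$ to pass from the interior of one subdivided edge to the interior of another is through a vertex of $H$. Hence if $X_v\setminus X_w$ is non-empty, it is entirely contained in the internal vertices of some single edge $e_1\in E(H)$, and moreover is a connected sub-path there. It remains to exclude $e_1\in E^1_v=E^1_w$: but if $e_1\in E^1_v$, then every internal vertex of $e_1$ is in $X_v$ and, since $e_1\in E^1_w$, also in $X_w$, contradicting the non-emptiness of $X_v\setminus X_w$ on~$e_1$. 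The same argument applied to $X_w\setminus X_v$ yields the edge $e_2$ with $e_2\notin E^1_w=E^1_v$, completing the proof. The only step I expect to require any care is the first paragraph's case analysis (in particular handling multi-edges and self-loops of $H$), but this is routine once one unpacks the definitions of the four profile sets.
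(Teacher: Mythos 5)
Your proposal is correct and follows essentially the same route as the paper's proof: observe that equal profiles force $V(H)\cap X_v = V(H)\cap X_w$, so each difference set avoids $V(H)$, and then combine connectedness (from the non-crossing property) with the fact that the interiors of distinct subdivided edges are only linked through vertices of $H$; the exclusion of $E^1_v$ is handled identically. The only difference is that you spell out in detail why membership of a vertex of $H$ in $X_v$ is determined by the profile, a step the paper states as an immediate observation.
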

\begin{claimproof}
Observe that $V(H) \cap X_v = V(H) \cap X_w$.
This implies that $X_v \setminus X_w$ and $X_w \setminus X_v$ are either empty or a subset of vertices added during the subdivision process.
Since the family is non-crossing, $X_v \setminus X_w$ is either empty or a non-empty connected subset of the internal vertices of some edge $e_1$ of $E(H)$, and $X_w \setminus X_v$ is either empty or a non-empty connected subset of the internal vertices of some edge $e_2$ of $E(H)$, where neither $e_1$ nor $e_2$ belongs to $E^1_v =E^1_w$, as required.
\end{claimproof}

\noindent
We use \cref{claim-ncprop} to further refine the partition classes of $\mathcal{P}$ corresponding to profiles where $E^4 = \varnothing$.
Let $i \in \{1,\dotsc,k\}$ such that $V_i$ is such a partition class in $\mathcal{P}$.
We fix a representative $v_i$ in $V_i$ and let $V_i' = \{w\in V(G) : X_w = X_{v_i}\}$.
In particular, $v_i \in V_i'$.
For $e \in E(H)$, we define $W_i(e)$ as the set of vertices $w \in V_i$ such that at least one of $X_{v_i} \setminus X_w$ and $X_w \setminus X_{v_i}$ is non-empty and both are subsets of the internal vertices of $e$.
For $e_1,e_2 \in E(H)$ with $e_1 \neq e_2$, we define $Z_i(e_1,e_2)$ as the set of vertices $w \in V_i$ such that $X_{v_i} \setminus X_w$ is a non-empty subset of the internal vertices of $e_1$ and $X_w \setminus X_{v_i}$ is a non-empty subset of the internal vertices of $e_2$.
Now, by \cref{claim-ncprop}, there is a partition of $V_i$ given by $\mathcal{V}_i = \{V_i'\} \cup \{W_i(e) : e \in E(H) \textrm{ and } W_i(e) \neq \varnothing\} \cup \{Z_i(e_1,e_2) : e_1,e_2 \in E(H) \textrm{, } e_1 \neq e_2 \textrm{ and } Z_i(e_1,e_2) \neq \varnothing \}$.
We refine the partition $\mathcal{P}$ to a partition $\mathcal{P}'$ of $V(G)$ by replacing each class $V_i=V[P_u]$ of $\mathcal{P}$ for which $E^4_u = \varnothing$ with $\mathcal{V}_i$.
Observe that the partition $\mathcal{P}'$ has at most $((4^m-1)2^m)(m(m-1)+m+1)+m=2^m(4^m-1)(m^2+1)+m$ partition classes.

For each pair $W, W'$ of partition classes in $\mathcal{P}'$, we now explain how to choose a linear order on $W \cup W'$ and an appropriate $E_{W,W'}$ satisfying (AL), (CO) and (SC).
For each partition class $W \in \mathcal{P}'$, we write $E^1_W$, $E^2_W$, $F^3_W$, and $E^4_W$ to refer to $E^1_w$, $E^2_w$, $F^3_w$, and $E^4_w$ (respectively) for any $w \in W$.
Moreover, we say that $e \in E(H)$ is a \emph{special edge} for $W$ if either $W = V_i$ and $E^4_W = \{e\}$, or $W = W_i(e)$, or $W=Z_i(e_1,e_2)$ and $e\in\{e_1,e_2\}$.
So, each class has at most two special edges.
Note that, for any $W \in \mathcal{P}'$, $G[W]$ is a proper interval graph and, in particular, $G[W]$ is complete when $E^4_W=\varnothing$.
Since every partition class induces a proper interval graph in $G$, we choose $E_{W,W} = E(G[W])$ for each part $W \in \mathcal{P}'$.
In particular, true twins in a class can be placed consecutively in any order and behave like a single vertex.

We will first define an order on each class $W$ of $\mathcal{P}'$, having properties as given by the following claim.

\begin{claim}\label{claim-order} For each class $W$ of $\mathcal{P}'$, there is a linear order satisfying the following conditions:
\begin{enumerate}
\item\label{it:1} If $E^4_W = \{xy\}$, then either the order or its reverse is an extension of the partial order of $W$ defined by the order of
the left endpoints of paths in $\{X_v \cap S(xy)\}_{v \in W}$, breaking ties if possible by the order of
their right endpoints. 
Moreover, this partial order is the same as the one defined by exchanging ``left'' and ``right'' above, and the reverse of the one we get by considering $S(yx)$ instead of $S(xy)$.
\item\label{it:2} For each $(xy,x) \in F^3_W$, either the order or its reverse is an extension of the partial order of $W$ defined by the order of
the right endpoints of paths in $\{X_v \cap S(xy)\}_{v \in W}$.
Moreover, if $W$ has two special edges in $E^3_W$, then for one of the edges this condition holds for the order, and for the other it holds for the reverse.
\item\label{it:4} For each $xy \in E^2_W$, either the order or its reverse is an extension of the partial order of $W$ defined by the order of
the right endpoints of paths in $\{X_v \cap S(xy)\}_{v \in W}$ that contain $x$, breaking ties if possible by the order of
the left endpoints of paths in $\{X_v \cap S(xy)\}_{v \in W}$ that contain $y$.
Moreover,
this partial order is the reverse of the one defined by exchanging $x$ and $y$ above. 
\end{enumerate}
\end{claim}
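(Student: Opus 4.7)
The plan is to handle each type of class $W \in \mathcal{P}'$ separately, the key tool in every case being that the non-crossing property of $\mathcal{X}$, restricted to the subpath $S(e)$ of a special edge $e$ of $W$, precludes the ``strictly nested'' configuration in which one representative lies strictly in the interior of another. This will let me read off, from each special edge, a partial order on $W$ that is invariant under swapping the two endpoints of the edge, and then combine these partial orders when $W$ has two special edges.

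For case~(\ref{it:1}), every $X_v$ with $v \in W$ is a connected subpath of $S(xy)$ inside its internal vertices. Non-crossing forces that if $X_v \subsetneq X_w$, then $X_w \setminus X_v$ is connected, so $X_v$ sits at one end of $X_w$; in particular, $X_v$'s left endpoint cannot be strictly right of $X_w$'s left while $X_v$'s right endpoint is strictly left of $X_w$'s right. A direct check then shows that the partial order by ``left endpoint, ties broken by right endpoint'' coincides with the one by ``right endpoint, ties broken by left endpoint,'' and that reversing $S(xy)$ to $S(yx)$ swaps the roles of left and right, hence reverses the partial order. Any linear extension yields the required ordering.

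For cases~(\ref{it:2}) and~(\ref{it:4}) applied to a class $W$ with a single special edge $xy$, $X_v \cap S(xy)$ has a constrained shape: when $(xy,x) \in F^3_W$, it is a prefix of $S(xy)$ starting at $x$; when $xy \in E^2_W$, it is the disjoint union of a prefix at $x$ and a prefix at $y$, possibly with a gap of internal vertices between them. Two prefixes from the same endpoint are always nested, so the ``right endpoint'' partial order is well-defined. In the $E^2$ case, the same connectedness-of-difference argument shows that the $x$-prefix of $X_v$ is strictly shorter than that of $X_w$ iff the $y$-prefix of $X_v$ is strictly longer than that of $X_w$, which gives both well-definedness of the ``$x$-right endpoint, then $y$-left endpoint'' order and its invariance when $x$ and $y$ are swapped. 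For edges of $F^3_W$ or $E^2_W$ that are not special edges of $W$, every $v \in W$ agrees on $S(xy)$, so the induced partial order is trivial; this also covers the class $V_i'$.

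The main obstacle is the case $W = Z_i(e_1,e_2)$, where two special edges are present and a single linear order must be compatible with both induced partial orders. The key observation is that every $v \in W$ shares the profile $P_{v_i}$, so $V(H) \cap X_v = V(H) \cap X_{v_i}$ for every $v \in W$, and hence no vertex of $H$ lies in $X_v \triangle X_w$ for $v, w \in W$. Combined with the connectedness of $X_v \setminus X_w$ and $X_w \setminus X_v$, this forces each of these two differences to lie entirely inside the internal vertices of a single edge of $H$, which by the definition of $Z_i(e_1,e_2)$ must belong to $\{e_1, e_2\}$. Consequently, whenever $v$ precedes $w$ in the $e_1$-partial order, it follows $w$ in the $e_2$-partial order, so the two partial orders on $W$ are reverses of one another. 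Taking any linear extension of one then automatically satisfies the claim's conditions for both edges, and the mixed $E^2$/$E^3$ sub-cases are handled identically.
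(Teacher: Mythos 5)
Your starting point --- that for $v,w$ in the same class of $\mathcal{P}'$ the non-crossing property confines each of $X_v\setminus X_w$ and $X_w\setminus X_v$ to the internal vertices of a single special edge --- is correct and matches the paper, and your handling of the classes with at most one special edge and of $Z_i(e_1,e_2)$ with both edges in $F^3_W$ is essentially sound (modulo the point that when one partial order ties a pair that the other orders strictly, ``any linear extension of one'' is not enough; you need a common extension of one order and the reverse of the other). The genuine gap is your central claim for the two-special-edge classes: that the partial orders induced by $e_1$ and by $e_2$ are mutual reverses. This fails as soon as an $E^2$-edge is involved, because the trace of a representative on such an edge consists of \emph{two} sub-paths, one attached to each endpoint, and the difference between two representatives may sit at either of the two inner ends; so ``growing on $e_1$'' does not force ``shrinking on $e_2$''. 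Concretely, let $H$ have two vertices $a,b$ joined by three parallel edges $e_1,e_2,e_3$, let $S(e_1)=a=x_0,\dotsc,x_{10}=b$ and $S(e_2)=a=y_0,\dotsc,y_{10}=b$, and set $X_{v_i}=S(e_3)\cup\{x_0,\dotsc,x_5\}\cup\{x_8,\dotsc,x_{10}\}\cup\{y_0,\dotsc,y_3\}\cup\{y_7,\dotsc,y_{10}\}$; let $X_v$ be obtained by replacing $\{x_0,\dotsc,x_5\}$ with $\{x_0,\dotsc,x_2\}$ and $\{y_7,\dotsc,y_{10}\}$ with $\{y_5,\dotsc,y_{10}\}$, and $X_w$ by replacing them with $\{x_0,\dotsc,x_4\}$ and $\{y_6,\dotsc,y_{10}\}$. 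All three sets are connected, all pairwise differences are connected and lie inside a single edge, so the family is non-crossing and $v,w\in Z_i(e_1,e_2)$; yet $v$ strictly precedes $w$ in the $e_1$-partial order (right endpoints $x_2$ versus $x_4$) \emph{and} in the $e_2$-partial order ($j_3$ tied, left endpoints $y_5$ versus $y_6$). The two partial orders are not reverses, so ``take a linear extension of one'' does not discharge the claim.

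This is exactly where the paper's proof does its real work. In its Cases 6--8 it first shows that each trace differs from $X_{v_i}$ on only one side of each special edge, then uses a three-vertex argument to derive a trichotomy (either all $j_1(v)$ coincide, or all $j_3(v)=j_3$, or all $j_4(v)=j_4$), and in each branch constructs a specific order together with a verification of which of the two induced partial orders the order extends and which it reverses --- and in some branches it necessarily extends \emph{both} in the same direction, as the example above forces. Your proposal replaces this entire analysis with a one-line assertion that is false, so the $Z_i(e_1,e_2)$ cases involving $E^2$-edges (the paper's Cases 6, 7 and 8) remain unproved.
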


\begin{claimproof}
We distinguish cases according to the type of the partition class $W$.

\medskip
\noindent
\textbf{Case 1.} The class $W$ is a set $V_i$.

In this case, the sets $\{X_v\}_{v\in V_i}$ are non-crossing connected subsets of the internal vertices of an edge $e=xy$ of $H$.
In particular, $V_i$ induces a proper interval subgraph of $G$. We order the vertices by the order of the left endpoints of paths in $\{X_v \cap S(xy)\}_{v \in V_i}$, breaking ties, if possible, by the order of their right endpoints, and arbitrarily, otherwise.

\medskip
\noindent \textbf{Case 2.} The class $W$ is a set $V_i'$.

In this case, the sets $\{X_v\}_{v\in V_i'}$ are all equal, and we arbitrarily order the vertices.

\medskip
\noindent \textbf{Case 3.} The class $W$ is a set $W_i(e)$ with $(e,x) \in F^3_{v_i}$.

We order the vertices by the order of the right endpoints of paths in $\{X_v \cap S(xy)\}_{v \in W}$, breaking ties arbitrarily, since vertices that are indistinguishable by their right endpoints are true twins.

\medskip
\noindent \textbf{Case 4.} The class $W$ is a set $W_i(e)$ with $e = xy \in E^2_{v_i}$.

We order the vertices by the order of the right endpoints of paths in $\{X_v \cap S(xy)\}_{v \in W}$ that contain $x$, breaking ties, if possible, by the order of the left endpoints of paths in $\{X_v \cap S(xy)\}_{v \in W}$ that contain $y$, and arbitrarily, otherwise.

\medskip
\noindent \textbf{Case 5.} The class $W$ is a set $Z_i(e_1,e_2)$ with $(e_1,x) \in F^3_{v_i}$ and $(e_2,x') \in F^3_{v_i}$, where $e_1=xy$ and $e_2=x'y'$.

We order the vertices by the order of the right endpoints of paths starting at $x$ in $X_{v} \cap S(xy)$. If there are no ties, this is exactly the reverse of the order obtained by ordering according to the right endpoints of paths starting at $x'$ in $X_{v} \cap S(x'y')$, because the family is non-crossing. {So we break ties in the first order in a way that preserves this property (if there are still ties, then $X_{v} = X_{v'}$ for some $v \neq v'$ that are true twins in $G$, and so we break the tie arbitrarily)}.

\medskip
\noindent \textbf{Case 6.} The class $W$ is a set $Z_i(e_1,e_2)$ with $e_1 \in E^2_{v_i}$ and $(e_2,x_0) \in F^3_{v_i}$.

Let $x_0, x_1, \dots, x_p$ be the path $S(e_2)$ and let $y_0, y_1, \dots, y_{p'}$ be the path $S(e_1)$, so $e_2=x_0x_p$ and $e_1=y_0y_{p'}$.
Then $X_{v_i} \cap \{x_0, \ldots, x_p, y_0, \ldots, y_{p'}\} = \{x_0, \ldots, x_j,$ $y_0, \ldots, y_{j_1}, y_{j_2}, \dots, y_{p'}\}$, where $0 \leq j < p$ and $j_1 +1 < j_2$. By the definitions of $Z_i$, $E^2$, and $F^3$, and since the family is non-crossing, for every vertex $v \in W$, we have that $X_{v} \cap \{x_0, \ldots, x_p,$ $y_0, \ldots, y_{p'}\} = \{x_0, \ldots, x_{j(v)},$ $y_0, \ldots, y_{j_1(v)}, y_{j_2(v)}, \ldots, y_{p'}\}$, where $j < j(v) < p$, $j_1(v) +1 < j_2(v)$ and either $j_1(v) < j_1$ and $j_2(v) = j_2$, or $j_1(v) = j_1$ and $j_2(v) > j_2$.

Suppose now that there exist $v,w \in W$ such that $j(v) < j(w)$.  By symmetry we assume that $j_1(v) < j_1$ and $j_2(v) = j_2$. Then, since the family is non-crossing, $j_1(w) \leq j_1(v) < j_1$, from which $j_2(w) = j_2$. Suppose further that there exists a third vertex $z \in W$ such that $j_1(z) = j_1$ and $j_2(z) > j_2$. Then $X_z \setminus X_v$, $X_v \setminus X_z$, $X_z \setminus X_w$ and $X_w \setminus X_z$ all have non-empty intersection with the internal vertices of $e_1$. Since the family is non-crossing, none of these sets has non-empty intersection with the internal vertices of $e_2$, from which $j(z)=j(v)$ and $j(z)=j(w)$, contradicting the fact that $j(v) < j(w)$.

From the previous paragraph, we obtain that either $j(v) = j(w)$ for every $v, w \in W$, or $j_2(v) = j_2$ for every $v \in W$, or $j_1(v) = j_1$ for every $v \in W$. In the first case, we order the vertices with $j_2(v) = j_2$ by the order of $j_1(v)$, breaking ties arbitrarily, and then the vertices with $j_1(v) = j_1$ by the order of $j_2(v)$, breaking ties arbitrarily. In the second case, we order the vertices by the order of $j(v)$, breaking ties by the inverse of the order of $j_1(v)$, and arbitrarily if there still is a tie. In the third case, we order the vertices by the order of $j(v)$, breaking ties by the order of $j_2(v)$, and arbitrarily if there still is a tie. Note that, if $j(v) < j(w)$, then $j_1(v) \geq j_1(w)$ and $j_2(v) \leq j_2(w)$, and if $j_1(v) > j_1(w)$ or $j_2(v) < j_2(w)$, then $j(v) \leq j(w)$, as the family is non-crossing. Therefore, the order obtained extends the order of $j(v)$, and the inverse of the order of $j_1(v)$ or the order of $j_2(v)$, respectively.

\medskip
\noindent \textbf{Case 7.}
The class $W$ is a set $Z_i(e_1,e_2)$ with $e_2 \in E^2_{v_i}$ and $(e_1,x_0) \in F^3_{v_i}$.

The analysis of this case is similar to that of Case 6. Let $x_0, x_1, \dots, x_p$ be the path $S(e_1)$ in $H'$, and let $y_0, y_1, \dots, y_{p'}$ be the path $S(e_2)$ in $H'$ (where $e_1=x_0x_p$ and $e_2=y_0,y_{p'}$ are edges in $H$). Then $X_{v_i} \cap \{x_0, \ldots, x_p, y_0, \ldots, y_{p'}\} = \{x_0, \ldots, x_j, y_0, \ldots, y_{j_1},$ $y_{j_2}, \ldots, y_{p'}\}$, where $0 \leq j < p$ and $j_1 +1 < j_2$.
By the definitions of $Z_i$, $E^2$ and $F^3$, and since the family is non-crossing, for every vertex $v \in W$, we have that $X_{v} \cap \{x_0, \ldots, x_p, y_0, \ldots, y_{p'}\} = \{x_0, \ldots, x_{j(v)}, y_0, \ldots, y_{j_1(v)},$ $y_{j_2(v)}, \ldots, y_{p'}\}$, where $0 \leq j(v) < j$,
$j_1 +1 \leq j_1(v) +1 < j_2(v) \leq j_2$
and either $j_1(v) > j_1$ and $j_2(v) = j_2$, or $j_1(v) = j_1$ and $j_2(v) < j_2$,

Suppose that there exist $v,w \in W$ such that $j(v) < j(w)$.  We assume, by symmetry, that $j_1(w) > j_1$ and $j_2(w) = j_2$.
Then, since the family is non-crossing, $j_1(v) \geq j_1(w) > j_1$, from which $j_2(v) = j_2$. Suppose now that there exists a third vertex $z \in W$ such that $j_1(z) = j_1$ and $j_2(z) < j_2$. Then $X_z \setminus X_v$, $X_v \setminus X_z$, $X_z \setminus X_w$ and $X_w \setminus X_z$ all have non-empty intersection with the internal vertices of $e_1$. Since the family is non-crossing, none of these sets has non-empty intersection with the internal vertices of $e_2$, from which $j(z)=j(v)$ and $j(z)=j(w)$, contradicting the fact that $j(v) < j(w)$.

From the previous paragraph, we obtain that either $j(v) = j(w)$ for every $v, w \in W$, or $j_2(v) = j_2$ for every $v \in W$, or $j_1(v) = j_1$ for every $v \in W$. In the first case, we order the vertices with $j_1(v) = j_1$ by the order of $j_2(v)$, breaking ties arbitrarily, and then the vertices with $j_2(v) = j_2$ by the order of $j_1(v)$, breaking ties arbitrarily. In the second case, we order the vertices by the order of $j(v)$, breaking ties by the inverse of the order of $j_1(v)$, and arbitrarily if there still is a tie. In the third case, we order the vertices by the order of $j(v)$, breaking ties by the order of $j_2(v)$, and arbitrarily if there still is a tie. Note that, if $j(v) < j(w)$, then $j_1(v) \geq j_1(w)$ and $j_2(v) \leq j_2(w)$, and if $j_1(v) > j_1(w)$ or $j_2(v) < j_2(w)$, then $j(v) \leq j(w)$, as the family is non-crossing. Therefore, the order obtained extends the order of $j(v)$ and the inverse of the order of $j_1(v)$ or the order of $j_2(v)$, respectively.

\medskip
\noindent \textbf{Case 8.} The class $W$ is a set $Z_i(e_1,e_2)$ with $e_1, e_2 \in E^2_{v_i}$.

Let $x_0, x_1, \dots, x_p$ be $S(e_1)$ in $H'$ (where $x_0, x_p \in V(H)$) and let $y_0, y_1, \dots, y_{p'}$ be $S(e_2)$ in $H'$ (where $y_0, y_{p'} \in V(H)$). Then
$X_{v_i} \cap \{x_0, \dots, x_p, y_0, \ldots, y_{p'}\} = \{x_0, \ldots, x_{j_1},x_{j_2}, \ldots, x_p,$ $y_0, \ldots, y_{j_3}, y_{j_4}, \ldots, y_{p'}\}$, where $j_1 + 1 < j_2$ and $j_3 + 1 < j_4$.
Observe now that, by the definitions of $Z_i$ and $E^2$ and since the family is non-crossing, the following property holds: for every vertex $v\in W$, $X_{v} \cap \{x_0, \ldots, x_p, y_0, \ldots, y_{p'}\} = \{x_0, \ldots, x_{j_1(v)}, x_{j_2(v)}, \ldots, x_p,$ $y_0, \ldots, y_{j_3(v)},$ $y_{j_4(v)},$ $\dots, y_{p'}\}$,
where $0 \leq j_1(v) \leq j_1$, $j_2 \leq j_2(v) \leq p$,
$j_3 < j_3(v)+1 < j_4(v) \leq j_4$, either $j_1(v) < j_1$ and $j_2(v) = j_2$, or $j_1(v) = j_1$ and $j_2(v) > j_2$,
and either $j_3(v) > j_3$ and $j_4(v) = j_4$, or $j_3(v) = j_3$ and $j_4(v) < j_4$.

Suppose that there exist $v,w \in W$ such that $j_1(v) < j_1(w)$.  We assume, by symmetry, that $j_3(w) > j_3$ and $j_4(w) = j_4$. Then, since the family is non-crossing, $j_3(v) \geq j_3(w) > j_3$, from which $j_4(v) = j_4$. Suppose now that there exists a third vertex $z \in W$ such that $j_3(z) = j_3$ and $j_4(z) < j_4$. Then $X_z \setminus X_v$, $X_v \setminus X_z$, $X_z \setminus X_w$ and $X_w \setminus X_z$ all have non-empty intersection with the internal vertices of $e_2$. Since the family is non-crossing, none of these sets has non-empty intersection with the internal vertices of $e_1$, from which $j_1(z)=j_1(v)$ and $j_1(z)=j_1(w)$, contradicting the fact that $j_1(v) < j_1(w)$.

From the previous paragraph, we obtain that either $j_1(v) = j_1(w)$ for every $v, w \in W$, or $j_3(v) = j_3$ for every $v\in W$, or $j_4(v) = j_4$ for every $v\in W$. We now analyse these three cases separately by splitting each of them into a number of subcases.

Suppose that $j_1(v) = j_1(w)$ for every $v, w \in W$. Then either $j_1(v) < j_1$ for every $v \in W$, or $j_1(v) = j_1$ for every $v \in W$. Suppose first that $j_1(v) < j_1$ for every $v \in W$. Then $j_2(v) = j_2$ for every $v \in W$. We thus order the vertices with $j_3(v) = j_3$ by the order of $j_4(v)$, breaking ties arbitrarily, and then the
vertices with $j_4(v) = j_4$ by the order of $j_3(v)$, breaking ties arbitrarily. Suppose finally that $j_1(v) = j_1$ for every $v \in W$. By repeating the argument from the two previous paragraphs, we obtain that either $j_2(v) = j_2(w)$ for every $v, w \in W$ (where the common value is strictly greater than $j_2$), or $j_3(v) = j_3$ for every $v \in W$, or $j_4(v) = j_4$ for every $v \in W$. If the former holds, we again order the vertices with $j_3(v)=j_3$ by the order of $j_4(v)$, breaking ties arbitrarily, and then the vertices with $j_4(v) = j_4$ by the order of $j_3(v)$, breaking ties arbitrarily. The remaining two cases are addressed in the following paragraphs.

Suppose that $j_3(v) = j_3$ for every $v\in W$. Similar to the previous paragraphs, we have that either $j_4(v) = j_4(w)$ for every $v, w \in W$ (where the common value is strictly smaller than $j_4$), or $j_1(v) = j_1$ for every $v\in W$, or $j_2(v) = j_2$ for every $v\in W$.
If the former holds, we order the vertices with $j_2(v)=j_2$ by the order of $j_1(v)$, breaking ties arbitrarily, and then the vertices with $j_1(v) = j_1$ by the order of $j_2(v)$, breaking ties arbitrarily. If $j_1(v) = j_1$ for every $v\in W$, we order the vertices by the order of $j_2(v)$, breaking ties by the inverse of the order of $j_4(v)$, and arbitrarily if there still is a tie. If $j_2(v) = j_2$ for every $v\in W$, we order the vertices by the order of $j_1(v)$, breaking ties by the order of $j_4(v)$, and arbitrarily if there still is a tie. Note that, if $j_2(v) < j_2(w)$ then $j_4(v) \geq j_4(w)$, if $j_1(v) < j_1(w)$ then $j_4(v) \leq j_4(w)$,
and if $j_4(v) > j_4(w)$ then $j_2(v) \leq j_2(w)$ and $j_1(v) \geq j_1(w)$,
as the family is non-crossing. Therefore, in the last two cases, the order obtained extends the order of $j_2(v)$ and the inverse of the order of $j_4(v)$, and the order of $j_1(v)$ and the order of $j_4(v)$, respectively.

Suppose finally that $j_4(v) = j_4$ for every $v \in W$. Either $j_3(v) = j_3(w)$ for every $v, w \in W$ (where the common value is strictly greater than $j_3$), or $j_1(v) = j_1$ for every $v \in W$, or $j_2(v) = j_2$ for every $v \in W$.
If the former holds, we order the vertices with $j_2(v)=j_2$ by the order of $j_1(v)$, breaking ties arbitrarily, and then the vertices with $j_1(v) = j_1$ by the order of $j_2(v)$, breaking ties arbitrarily.
If $j_1(v) = j_1$ for every $v \in W$, we order the vertices by the order of $j_2(v)$, breaking ties by the order of $j_3(v)$, and arbitrarily if there still is a tie.
If $j_2(v) = j_2$ for every $v\in W$, we order the vertices by the order of $j_1(v)$, breaking ties by the inverse of the order of $j_3(v)$, and arbitrarily if there still is a tie. Note that, if $j_2(v) < j_2(w)$ then $j_3(v) \leq j_3(w)$, if $j_1(v) < j_1(w)$ then $j_3(v) \geq j_3(w)$,
and if $j_3(v) < j_3(w)$ then $j_2(v) \leq j_2(w)$ and $j_1(v) \geq j_1(w)$, as the family is non-crossing. Therefore, in the last two cases, the order obtained extends the order of $j_2(v)$ and the order of $j_3(v)$, and the order of $j_1(v)$ and the inverse of the order of $j_3(v)$, respectively.
\end{claimproof}

\noindent
{Observe that the orders on each class $W$ of $\mathcal{P}'$ defined according to \Cref{claim-order} together with the choice $E_{W,W} = E(G[W])$ satisfy (AL), (CO) and (SC), as each such $W$ induces a proper interval graph in $G$. Now, for each pair $W, W'$ of distinct classes of $\mathcal{P}'$, we explain how to combine the defined order or its reverse on each of $W$ and $W'$ into a linear order on $W \cup W'$ which, together with an appropriate choice of $E_{W, W'}$, satisfies (AL), (CO) and (SC). Clearly, we may assume that $W$ and $W'$ are neither complete nor anticomplete to each other, or else combining the orders is trivial.}

We first require some extra terminology.
For $v, w \in V(G)$ and $e \in E(H)$, we say that $v$ and $w$ \emph{intersect on $e$} if $X_v \cap X_w \cap S(e) \neq \varnothing$, and that they \emph{disagree on $e$} if either $X_v \setminus X_w$ or $X_w \setminus X_v$ is a non-empty subset of the internal vertices of $e$. For any two distinct classes $W, W'$ of $\mathcal{P}'$ and $e \in E(H)$, we say that $W$ and $W'$ \emph{intersect on $e$} if $w$ and $w'$ intersect on $e$ for some $w \in W$ and $w' \in W'$, and that they have \emph{non-trivial interaction} on $e$ if they intersect on $e$ and $e$ is a special edge for $W$ or $W'$.

Let $W$ and $W'$ be two distinct classes of $\mathcal{P}'$ which are neither complete nor anticomplete to each other. Then they intersect only on edges of $H$ for which they have non-trivial interaction. Moreover, $W$ and $W'$ can have non-trivial interaction on at most four edges. We distinguish two cases: firstly when one of $E^4_W$ and $E^4_{W'}$ is non-empty, and secondly when $E^4_W = E^4_{W'} = \varnothing$.

Suppose first that one of $E^4_W$ and $E^4_{W'}$ is non-empty, say without loss of generality $E^4_W = \{e\}$. Then $E^4_{W'} = \varnothing$, for otherwise either $W = W'$ or $W$ is anticomplete to $W'$. Now, $e \in E^1_{W'} \cup E^2_{W'} \cup E^3_{W'}$. If $e \in E^1_{W'}$, the classes $W$ and $W'$ are complete to each other. If $e \in E^3_{W'}$, then $W \cup W'$ induces a proper interval graph in $G$, and the concatenation of a suitable combination of the order of $W'$ or its reverse and the order of $W$ or its reverse (chosen according to \Cref{claim-order}) gives a proper interval order of $W \cup W'$. If $e \in E^2_{W'}$, then $W \cup W'$ induces a proper circular-arc graph in $G$, where the subdivided cycle is obtained by identifying the endpoints of the subdivision of $e$. In this case, we choose $E_{W,W'} = E(\overline{G}[W,W'])$ since, by \Cref{claim-order} (items~\ref{it:1} and~\ref{it:4}), we have a natural bipartition ordering of $G[W \cup W']$, so we can apply \Cref{lem:circ-arc-2-pmixed}. This concludes the proof for the case where one of $E^4_W$ and $E^4_{W'}$ is non-empty.

Suppose finally that $E^4_W = E^4_{W'} = \varnothing$. If the classes intersect on an edge of $E^1_W \cup E^2_W \cup E^1_{W'} \cup E^2_{W'}$, then they are complete to each other. The same holds if they intersect on an edge $e \in E^3_W \cup E^3_{W'}$ such that either $(e,x) \in F^3_W \cap F^3_{W'}$ for some $x \in V(G)$, or $e$ is
neither special for $W$ nor for $W'$.
Therefore, we assume that every edge $e$ on which $W$ and $W'$ intersect is such that $(e,x) \in E^3_W$, $(e,y) \in E^3_{W'}$ (where $e = xy$), and $e$ is special for at least one of $W$ and $W'$.

If $W$ and $W'$ intersect on exactly one edge, then $W \cup W'$ induces a co-bipartite proper interval graph in $G$. Moreover, the concatenation of the order of $W$ and the reverse of the order of $W'$ (chosen according to \Cref{claim-order}) results in a proper interval order of $W \cup W'$.

If $W$ and $W'$ intersect on exactly two edges $e = xy$ and $e'=x'y'$ such that $\{(e,x), (e',x')\} \subseteq F^3_W$, and $\{(e,y), (e',y')\} \subseteq F^3_{W'}$, then $W \cup W'$ induces a co-bipartite proper circular-arc graph in $G$, where the subdivided cycle is obtained by identifying $x$ with $x'$ and $y$ with $y'$ in the subdivisions of $e$ and $e'$. In this case, we make the choice $E_{W,W'} = E(\overline{G}[W,W'])$ since, by \Cref{claim-order} (item~\ref{it:2}), we have a natural bipartition ordering of $G[W \cup W']$, so we can apply \Cref{lem:circ-arc-2-pmixed}.

If $W$ and $W'$ intersect on exactly three edges then, since the three edges are special for either $W$ or $W'$ and each class has at most two special edges, we may assume that $e_2$ is special for $W$ and not for $W'$, $e_3$ is special for $W'$ and not for $W$, and $e_1$ can be special for both $W$ and $W'$ (it is special for at least one of them). We make the choices in such a way that the conditions of \Cref{claim-order} (item~\ref{it:2}) are satisfied in $e_1$ for the selected order of $W$ and for the reverse of the selected order of $W'$. So, by \Cref{claim-order} (item~\ref{it:2}), they are satisfied in $e_2$ for the reverse of the selected order of $W$, and in $e_3$ for the selected order of $W'$. In this case, $W \cup W'$ induces a co-bipartite proper circular-arc graph $G$, where a representation $\mathcal{S}$ as a non-crossing $C_2$-graph can be obtained in the following way.
Let $S(e_1)$ be the path $x_1, \dots, x_p$.  We let $D$ be a subdivision of $C_2$ with a cycle on $x_1,\dotsc,x_p,w',w,x_1$.
We obtain representatives for vertices as follows.
For a vertex $v \in W$ such that $X_v \cap S(e_1) = [x_1,x_i]$, the representative is $\{w',w,x_1, \dots, x_i\}$ if $X_v$ intersects $X_{v'}$ on $e_2$ for some (and thus every) $v'$ in $W'$, and $\{w,x_1, \dots, x_i\}$ otherwise. For a vertex $v' \in W'$ such that $X_{v'} \cap S(e_1) = [x_i,x_p]$, the representative is $\{x_i, \dots, x_p,w',w\}$ if $X_{v'}$ intersects $X_v$ on $e_3$ for some (and thus every) $v$ in $W$, and the path $x_i, \dots, x_p,w'$ otherwise.
It is easy to see that this gives a $C_2$-representation of $G[W \cup W']$ with framework $D$, and that it is non-crossing, since the original representation $\mathcal{X}$ was. In this case, we make the choice $E_{W, W'} = E(\overline{G}[W,W'])$ since the chosen orders give us a natural bipartition ordering, so we are under the hypothesis of \Cref{lem:circ-arc-2-pmixed}.

If $W$ and $W'$ intersect on exactly four edges, then each such edge is special for exactly one of $W$ and $W'$. For any possible orders $w_1, \dots, w_t$ of $W$ and $w'_1, \dots, w'_{t'}$ of $W'$ obtained according to \Cref{claim-order}, there exist $j_1, j_2, j_1', j_2'$ such that the only non-edges between $W$ and $W'$ are those between $\{w_{j_1}, \dots, w_{j_2}\}$ and $\{w'_{j_1'}, \dots, w'_{j_2'}\}$. We then make the choice $E_{W, W'} = E(\overline{G}[W,W'])$, since it is clear that $G[W, W']$ avoids the bipartite patterns of \Cref{fig:bip-pat-comp}, and so \Cref{lem:patterns} completes the proof.
\end{proof}

\section{A Thinness Dichotomy}\label{s-notthecase}

In this section we show that non-crossing $H$-graphs and also proper $H$-graphs have unbounded thinness if and only if $H$ contains a cycle $C_s$ for some $s\geq 2$. As having unbounded thinness implies having unbounded proper thinness by definition, 
we cannot strengthen \Cref{thm:top-prop} from proper mixed-thinness to proper thinness.

We first prove a new bound on the thinness of $H$-graphs if $H$ is tree.

\begin{restatable}{theorem}{thmtrees}
\label{thm:trees}
Let $G$ be an $H$-graph for some tree~$H$ that has exactly $\ell$ leaves, for some integer $\ell \geq 1$. Then $G$ has thinness at most $\max\{1,\ell-1\}$. Moreover, given
an $H$-representation of $G$, a vertex ordering with a consistent partition of size $\max\{1,\ell-1\}$ can be computed in $O(|V(G)|\cdot|V(H)|)$ time.
\end{restatable}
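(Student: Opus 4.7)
I plan to induct on the number $\ell$ of leaves of $H$, peeling off one leaf together with its pendant path up to the nearest branching vertex at each step. The base cases are routine: when $\ell = 1$, $H = K_1$, every $S_v$ equals the unique vertex of $H'$, so $G$ is a clique with trivially $1$-thin representation; when $\ell = 2$, $H$ is a path, $G$ is an interval graph, and ordering $V(G)$ by the right-endpoint of $S_v$ in $H'$ yields a $1$-thin representation.

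For the inductive step, I would fix $\ell \geq 3$, pick an arbitrary leaf $L$ of $H$, and let $p$ be the unique vertex of $H$ closest to $L$ with $\deg_H(p) \geq 3$; let $P$ denote the path from $L$ to $p$ in $H'$. Using the given $H$-representation, partition $V(G)$ into $B = \{v : S_v \cap V(P) \neq \varnothing \text{ and } p \notin S_v\}$, $C = \{v : p \in S_v\}$, and $A = V(G) \setminus (B \cup C)$. Let $H^-$ be $H$ with $L$ and the internal vertices of the $L$-to-$p$ path deleted, a tree with exactly $\ell - 1$ leaves whose natural subdivision $H^{-\prime}$ embeds inside $H'$. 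For $v \in A \cup C$, set $S_v^- := S_v \cap V(H^{-\prime})$; a short connectivity check (using that $P$ attaches to the rest of $H'$ only through $p$, and that for $v \in C$ the set $S_v^-$ retains $p$) shows that $\{S_v^-\}_{v \in A \cup C}$ is an $H^-$-representation of $G[A \cup C]$. Applying the inductive hypothesis yields an ordering $\sigma^-$ of $A \cup C$ with a consistent partition into at most $\ell - 2$ classes.

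To build the thinness representation of $G$, I would concatenate two blocks: first list $B$ in order of the right-endpoint of $S_v$ along $P$ (the endpoint closer to $p$ coming later), then list $A \cup C$ in the order $\sigma^-$; the partition consists of $\{B\}$ together with the classes of $A \cup C$ from $\sigma^-$, totalling at most $\ell - 1$ classes. Four types of same-class triples $(r, s, t)$ with $r < s < t$ need to be verified. Triples entirely inside $B$ are handled because $G[B]$ is an interval graph on the path $P \setminus \{p\}$ on which the right-endpoint order is $1$-thin. Triples with $v_r, v_s$ in a class of $A \cup C$ and $v_t \in A \cup C$ are handled by the inductive hypothesis. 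Triples with $v_r, v_s$ in a class of $A \cup C$ and $v_t \in B$ cannot satisfy $r < s < t$ since $B$ precedes $A \cup C$. The one substantial case is $v_r, v_s \in B$ with $v_t \in A \cup C$: if $v_t \in A$, then $S_{v_t}$ is disjoint from $V(P)$ and $v_r v_t \notin E$ holds vacuously; if $v_t \in C$, then $v_r v_t \in E$ iff the path $S_{v_t} \cap V(P)$, which extends from $p$ toward $L$, reaches $S_{v_r}$, and since $v_s$'s right-endpoint is at least as close to $p$ as $v_r$'s, it reaches $S_{v_s}$ as well, so $v_s v_t \in E$.

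For the running time, at each of the at most $\ell - 1$ recursive levels, identifying $L$, $p$, and $P$ takes $O(|V(H)|)$ time, and classifying each vertex $v$ of $G$ costs $O(|S_v \cap V(P)| + 1)$; since the peeled paths $P^{(1)}, P^{(2)}, \dots$ partition a subset of $V(H')$, each vertex contributes at most $O(|V(H)|)$ to the classification work across all levels, yielding the claimed $O(|V(G)| \cdot |V(H)|)$ bound. The main obstacle is the interaction between the peeled block $B$ and the $C$-vertices, which live inside the recursive subproblem on $H^-$ but still carry a representative extending into $P$ and therefore have residual adjacencies to $B$; the crux is recognising that the unique natural $1$-thin order of $B$, namely the right-endpoint order along $P$, simultaneously encodes the monotonicity required to validate the $v_t \in C$ case, so a one-line concatenation of the two orderings works without any modification of $\sigma^-$ or of the interval order of $B$.
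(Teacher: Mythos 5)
Your proof is correct, and it takes a genuinely different route from the paper's. The paper argues non-inductively: it roots the framework $H'$ at an arbitrary leaf, orders $V(H')$ by a postorder traversal, and partitions $V(H')$ into $\ell-1$ classes by letting the first child of each node inherit its parent's class while every other child starts a new class; vertices of $G$ are then ordered and partitioned according to the roots of their representative subtrees, and consistency is checked in one global step via the observation that the root of $H_s$ lies on the tree path between the roots of $H_r$ and $H_t$ and hence belongs to $H_t$. Your leaf-peeling induction builds what is morally the same object --- a decomposition of the tree into $\ell-1$ paths, each carrying one thinness class ordered by the endpoint nearest the branch vertex --- but the verification is localized: the only non-trivial cross-block interaction is between the peeled block $B$ and the vertices of $C$ whose representatives protrude into the peeled path, and your monotonicity argument there is sound because $S_{v_t}\cap V(P)$ is a subpath of $P$ containing $p$ while each $S_{v_r}$ with $v_r\in B$ is an interval of $P\setminus\{p\}$. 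The paper's version avoids the induction and the case analysis at the cost of a slightly more delicate ancestor argument; yours is more modular and makes the role of the $\ell-1$ leaf paths explicit. One small inaccuracy in your complexity accounting: the peeled paths partition a subset of $V(H')$, not of $V(H)$, and $|V(H')|$ may be as large as $\Theta(|V(G)|\cdot|V(H)|)$, so the statement that each vertex of $G$ contributes only $O(|V(H)|)$ classification work does not follow from that partition if you enumerate $S_v\cap V(P)$ explicitly; the claimed bound still holds provided each $S_v$ is handled via a constant-size summary per edge of $H$ (e.g., which endpoints of the edge it contains and the extremal subdivision vertices it reaches), which is the same implicit assumption the paper makes.
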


\begin{proof} If $H$ is trivial, then $G$ is complete, and the result easily follows. Otherwise, let $H$ be a tree with $\ell$ leaves, $\ell \geq 2$, and let $G$ be an $H$-graph. Let $H'$ be a subdivision of $H$ such that
each vertex of $G$ corresponds to a subtree of $H'$. Notice that $H'$ has $\ell$ leaves too, and
it can be seen that there exists such $H'$ with at most $|V(H))| + 4|V(G)||E(H)|$ vertices~\cite[page 3287]{ChaplickTVZ21}.

We will first order and partition $V(H')$ into $\ell-1$ classes. To do that, we
root $H'$ at an arbitrary leaf, order $V(H')$ by a postorder traversal
(i.e., concatenating the recursive postorder of the subtrees formed by each of the children of the root and their descendants, and adding at the end the root itself), and
assign the root to the first class. If a node has one child, then
the only child is assigned to the same class of its parent. If a
node has more than one child, then the child whose subtree is the first in the recursive postorder concatenation is assigned to the
same class of its parent, and each of the other children starts a
new class. Thus, we obtain $\ell-1$ classes.

We now order and partition the vertices of $G$
according to the previously defined order and partition of the
roots of their corresponding subtrees of $H'$, breaking ties
arbitrarily (two vertices of $G$ may correspond to subtrees
of $H'$ with the same root).

We now show that the order and the partition of $V(G)$ thus obtained are consistent.
Let $r < s < t$ be vertices of $G$, with $r$ and $s$ in the same
class and $rt \in E(G)$. Let $H_r$, $H_s$, $H_t$ be the
corresponding subtrees of $H'$. Since $H_r$ and $H_t$ intersect,
their union is a subtree $T$ of $H'$. The root of $T$ is
necessarily the root of at least one of $H_r$ and $H_t$, and since
we ordered the trees by postorder of their roots, it must be the
root of $H_t$. Since $H_r$ and $H_t$ intersect, the root of $H_r$
belongs to $H_t$ and so either the roots of $H_r$
and $H_t$ are the same or the root of $H_t$ is an ancestor of the
root of $H_r$.

Since $r$ and $s$ are in the same class and $r < s$, the way of defining
the order and partition of $V(G)$ implies that either the roots of $H_r$ and $H_s$ are the same or the
root of $H_s$ is an ancestor of the root of $H_r$. Since $s < t$, the definition of the order of $V(G)$ implies that the root of $H_s$ belongs to the
path joining the root of $H_r$ and the root of $H_t$, and so it
belongs to $H_t$. By the definition of $G$, we then have that $st \in E(G)$, as required.
\end{proof}

\begin{remark}
Mim-width and linear mim-width are well-known width parameters that we did not discuss in our paper. However, \Cref{thm:trees} has the following implication for linear mim-width. Bonomo and de Estrada~\cite{BE19} proved that for every graph $G$, the linear mim-width of $G$ is at most its thinness.  Consequently, \Cref{thm:trees} implies that the linear mim-width is at most $\max\{1,\ell-1\}$ if $G$ is an $H$-graph for some tree $H$ with exactly $\ell$ leaves. Fomin et al.~\cite{Algo-H-graphs} proved that every $H$-graph has linear mim-width at most
$\max\{1,2\Vert H\Vert|\}$.
Hence, \Cref{thm:trees} improves upon this result
for the special case where $H$ is a tree.
\end{remark}

We can now prove the following dichotomy.

\begin{restatable}{theorem}{tnobound}
\label{t-nobound}
For a multigraph $H$, the classes of proper $H$-graphs and non-crossing $H$-graphs have bounded thinness if and only if $H$ is a forest.
\end{restatable}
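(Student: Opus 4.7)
Plan for \Cref{t-nobound}. The forward direction is immediate from \Cref{thm:trees}: if $H$ is a forest with connected components $T_1, \ldots, T_c$, then every $H$-graph decomposes as the disjoint union of $T_i$-graphs, since in any $H$-representation two representatives lying on different components of the subdivision $H'$ are disjoint. The thinness of a disjoint union is at most the maximum thinness of its summands, so \Cref{thm:trees} yields $\thin(G) \leq \max_i \max\{1, \ell(T_i)-1\}$ for every $H$-graph $G$, where $\ell(T_i)$ denotes the number of leaves of $T_i$; in particular both proper and non-crossing $H$-graphs have bounded thinness.

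For the reverse direction, suppose $H$ is not a forest, so that $H$ has a self-loop, parallel edges, or a cycle of length at least $3$; in any case some subdivision of $H$ contains a cycle. The plan is to show that both the class of proper $H$-graphs and the class of non-crossing $H$-graphs contain every proper circular-arc graph. For the first inclusion, given any proper $C_2$-representation I subdivide the cycle in $H$ to a sufficiently long cycle and place the arcs there, obtaining a proper $H$-representation on a suitable subdivision $H'$ of $H$. For the second inclusion, I observe that a proper family of arcs on a circle is automatically non-crossing: for any two proper arcs $A, B$ with $A \cap B \neq \varnothing$, a short case analysis (on whether $A \cup B$ is the whole circle) shows that each of $A \setminus B$ and $B \setminus A$ is either empty or a single sub-arc. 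Hence it suffices to exhibit a family of proper circular-arc graphs of unbounded thinness.

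To this end, I take the cocktail party graph $K_{n \times 2} = K_{2n} - M$, where $M = \{u_i v_i : 1 \leq i \leq n\}$ is a perfect matching. A proper circular-arc representation on a circle of circumference $4n$ is obtained by placing $u_i$ at the arc $[i-1, i+2n-1)$ and $v_i$ at the arc $[i+2n-1, i+4n-1)$, indices taken modulo $4n$: all arcs have the same length $2n$ so the family is proper, and one checks that $A_{u_i}$ and $A_{v_i}$ are antipodal (hence disjoint) while every other pair of arcs overlaps. The upper bound $\thin(K_{n \times 2}) \leq n$ is witnessed by the ordering $u_1, v_1, u_2, v_2, \ldots, u_n, v_n$ with partition $\bigl\{\{u_i, v_i\} : 1 \leq i \leq n\bigr\}$: the consistency condition is vacuous within each class of size~$2$, and for any triple $(u_i, v_i, t)$ with $t$ later in the order, $t$ is adjacent to both $u_i$ and $v_i$.

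The main obstacle is the matching lower bound $\thin(K_{n \times 2}) \geq n$. For any ordering $\sigma$ and any partition into classes, let $y_i = \min_\sigma\{u_i, v_i\}$ and $z_i = \max_\sigma\{u_i, v_i\}$. The thinness condition, applied to triples $(r, y_i, z_i)$ with $r$ in the class of $y_i$ and $r <_\sigma y_i$, forces $y_i$ to be the earliest vertex of its class among those preceding $z_i$: any such $r$ would satisfy $r z_i \in E$ (since the only non-neighbour of $z_i$ in the graph is $y_i$) while $y_i z_i \notin E$, contradicting consistency. It follows that the $n$ vertices $y_1, \ldots, y_n$ lie in pairwise distinct classes: if $y_i$ and $y_j$ with $y_i <_\sigma y_j$ were in the same class, then $y_i <_\sigma y_j <_\sigma z_j$ would violate the earliest-in-class-before-partner property for $y_j$. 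Hence every valid partition uses at least $n$ classes, which gives unbounded thinness as $n \to \infty$.
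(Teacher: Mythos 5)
Your proposal is correct and takes essentially the same route as the paper: the forward direction combines \Cref{thm:trees} with the disjoint-union decomposition over the components of $H$, and the reverse direction exhibits the complements of induced matchings (your cocktail party graphs $K_{n\times 2}$ are exactly these) as simultaneously proper and non-crossing $C_s$-graphs via antipodal half-length arcs on a subdivided cycle, which is the paper's construction up to notation. The only difference is that you prove the lower bound $\thin(K_{n\times 2})\geq n$ from scratch (correctly), whereas the paper cites this known fact from the literature.
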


\begin{proof}
Let $H$ be a multigraph. Suppose that $H$ is a forest. Every $H$-graph $G$ is the disjoint union of connected graphs $G_1, \ldots, G_p$, each of which is an $H'$-graph for some connected component $H'$ of $H$. Moreover, $\thin(G) \leq \max_{i\in \{1,\ldots,p\}}\thin(G_i)$ (see, e.g., \cite{BE19}). Hence, we can apply \Cref{thm:trees} to each connected component of $G$. 

Suppose instead that $H$ is not a forest. Hence, $H$ has a cycle $C_s$ for some $s\geq 2$.
It is known that, for every $t \geq 1$, the complement of an
induced matching on $t$ edges has thinness $t$~\cite{BGOSS-thin-oper,C-M-O-thinness-man}. We now show that
these graphs are $C_k$-graphs, for any $k \geq 2$. Let $t \geq 1$, and subdivide the
edges of $C_k$ in order to obtain $C_{2k'}$ with $k' \geq t$. It is easy to see that
the intersection graph of all the distinct paths on $k'$ vertices
of $C_{2k'}$ is the complement of an induced matching on $k'$ edges,
since each path on $k'$ vertices of $C_{2k'}$ is disjoint from
exactly one other path on $k'$ vertices of $C_{2k'}$.
Moreover, the family is both proper and non-crossing. This shows that for every $t \geq 1$ and every $k \geq 2$, the complement of an induced matching on $t$ edges is a proper and non-crossing $C_k$-graph.
\end{proof}

\section{The Proof of Theorem~\ref{hardness}}\label{s-hard}

In this section we prove our $\mathsf{W}[1]$-hardness result for {\sc Independent Set} on proper $H$-graphs parameterized by $\Vert H\Vert + k$. To put our result in perspective, we start with a small remark.

\begin{remark}
It is not true that for every $H$-graph $G$ there exists a function $f$ such that $G$ is a proper $H'$-graph for some graph $H'$ with $\Vert H' \Vert = f(\Vert H \Vert)$, even in the case $H=P_2$.
Indeed, as the claw is not a proper interval graph, every proper $H'$-representation of a claw has at least one vertex subset containing an original vertex of $H'$. It is then enough to consider the
$P_2$-graph (interval graph) that is the disjoint union of an arbitrarily large number of claws.
\end{remark}

\noindent
The following result is crucial for the proof.

\begin{proposition}\label{thm:proper-H}
Given an $H$-graph $G$ together with an $H$-representation of $G$, it is possible to find, in time polynomial in $|V(G)| + |V(H)|$, a parameter $k_1$ such that $k_1 \leq \alpha(G)$, a graph $H'$ with
$|V(H')| = 2|V(H)|+4k_1$, and a proper $H'$-representation of $G$.
\end{proposition}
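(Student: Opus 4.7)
The plan is to compute $k_1$ as the size of a maximal independent set $I = \{u_1, \ldots, u_{k_1}\}$ of $G$, constructed greedily from the given $H$-representation $\mathcal{S}$ with framework $H^*$. The bound $k_1 \leq \alpha(G)$ is then immediate since $I$ is independent, and the representatives $S_{u_1},\ldots,S_{u_{k_1}}$ are pairwise disjoint connected vertex subsets of $H^*$. Maximality of $I$ will play a crucial role: every vertex $v \notin I$ is adjacent in $G$ to some $u_i \in I$, so $S_v \cap S_{u_i} \neq \varnothing$. This will let me localize the proper-ization of $\mathcal{S}$ around the disjoint ``zones'' $S_{u_1},\ldots,S_{u_{k_1}}$.

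I would then build $H'$ by two operations. First, attach a pendant $x'$ to each original vertex $x \in V(H)$; this accounts for $|V(H)|$ new vertices, giving $2|V(H)|$ vertices in total. Second, for each $u_i \in I$ attach a small gadget of exactly four new vertices (for instance a short path or a tiny ``fan'') at some canonical anchor $a_i$ lying in $S_{u_i}$; this accounts for the remaining $4k_1$ vertices. The role of the pendants is to give each original vertex of $H$ a private ``outside'' into which nearby representatives can be pushed; the role of the four gadget vertices at $u_i$ is to provide enough private positions so that, within the neighbourhood of $u_i$ in $G$, any containment relation between two representatives can be broken.

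Next I would define the new representation. For every $v \in V(G)$, set $S_v'$ to be $S_v$ together with a carefully chosen subset of the pendants and of the gadget vertices at each $u_i$ with $S_v \cap S_{u_i} \neq \varnothing$. The pendant choices are determined by whether $v$ is ``endpoint-like'' at the corresponding vertex of $H$, ensuring that whenever $S_v$ uses vertex $x$ as a boundary point inside some larger $S_w$, the pendant $x'$ lands in $S_v'$ but not in $S_w'$. The gadget choices are determined by a canonical ordering of the sets intersecting $S_{u_i}$: the four gadget vertices are used to encode, for each such set, a distinguishing element breaking any containment between two sets both intersecting~$S_{u_i}$. By maximality of $I$, every vertex of $G$ falls into the neighbourhood of some $u_i$, so the gadgets collectively handle all remaining containments.

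The main obstacle is the combined verification that the resulting family $\mathcal{S}' = \{S_v' : v \in V(G)\}$ is \emph{simultaneously} a faithful representation of $G$ (the intersection pattern is unchanged) and \emph{proper} (no set is contained in another). Faithfulness requires that the extensions into the new pendants and gadgets only introduce intersections between representatives that were already intersecting; this is achieved by making the extensions local to the disjoint zones $S_{u_i}$ and to the neighbourhoods of the vertices of $H$. Properness requires that for every original pair $S_v \subseteq S_w$, at least one new pendant or gadget vertex lies in $S_v' \setminus S_w'$; this is where the four gadget vertices per $u_i$ are essential, giving just enough private positions to serve as containment breakers for all comparabilities inside each neighbourhood. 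The entire construction is local and combinatorial, so producing it in time polynomial in $|V(G)|+|V(H)|$ is straightforward once the bookkeeping is laid out.
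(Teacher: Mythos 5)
Your high-level plan (designate anchor points, attach pendant structure to $H$, extend representatives onto the pendants in reverse order of inclusion to kill containments) is the right one and matches the paper in spirit, but the choice of anchors via a maximal independent set has a genuine gap. Maximality only gives you that every $v\notin I$ satisfies $S_v\cap S_{u_i}\neq\varnothing$ for some $i$, i.e.\ $S_v$ meets the \emph{zone} $S_{u_i}$ somewhere. To extend $S_v'$ into a gadget sitting at a fixed anchor $a_i\in S_{u_i}$ while keeping $S_v'$ connected, you must route $S_v'$ through part of $S_{u_i}$ to reach $a_i$; two neighbours of $u_i$ whose representatives meet $S_{u_i}$ at different places are not necessarily adjacent in $G$, so this routing creates spurious intersections and destroys faithfulness. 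Conversely, if you only admit into the gadget those representatives that already contain $a_i$, then a containment pair $S_v\subseteq S_w$ lying, say, entirely inside the subdivision of one edge and missing every anchor is never broken. The pendants at $V(H)$ do not rescue this case either, since such a pair contains no vertex of $H$. The ``four private positions'' justification is also not the real mechanism: one pendant leaf per designated point suffices because the pendant edge can be subdivided arbitrarily finely in the representation; what matters is \emph{which} representatives may legally enter it.

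The paper's proof fixes exactly this point. It does not take a maximal independent set of $G$; instead, for each edge of $H$ it runs the classical earliest-deadline greedy on the intervals lying strictly inside that subdivided edge. The picked intervals form an independent set $I$ with $|I|=k_1\le\alpha(G)$, and --- this is the key extra property --- every unpicked interval \emph{contains an endpoint} of some picked interval. Taking $W$ to be $V(H)$ together with the $2k_1$ endpoints of the picked intervals, every representative contains a vertex of $W$, and hence any containment pair shares a common vertex $w\in W$. All representatives extended onto the pendant edge $ww'$ already pairwise intersect (they share $w$), so faithfulness is automatic, and extending them in reverse inclusion order breaks every containment. The count $|V(H')|=2(|V(H)|+2k_1)$ then falls out of doubling each vertex of $W$ with its pendant leaf. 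If you replace your maximal-independent-set step with this per-edge greedy and anchor the pendants at the endpoints of the picked intervals (rather than at four-vertex gadgets on arbitrary anchors), your argument goes through.
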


\begin{proof} Without loss of generality, we will consider representations in which no two vertices are represented by exactly the same set. Otherwise, we can keep only one such vertex, say $v$, and add afterwards the necessary copies of $v$, all of them being represented by the new representative set of $v$.

Let $\mathcal{S}$ be an $H$-representation of $G$ with framework $H''$, where the representative of each $v \in V(G)$ is $S_v$. If we remove from $G$ the vertices whose representatives contain a vertex of $H$, the remaining vertices are exactly those whose representatives are intervals of the subdivision of an edge of $H$ not containing any vertex of $H$. For each of these edges of $H$, we traverse the edge in some arbitrary direction and compute a maximum set of pairwise disjoint intervals using the following greedy algorithm: Start with the empty set and iteratively add, among those intervals that do not intersect the current set, the one that ends first, breaking ties with the one that starts last. Observe that, by construction, every unpicked interval contains at least one of the endpoints of some picked interval. The union $I$ over the edges of $H$ of these sets of pairwise disjoint intervals clearly corresponds to an independent set of $G$ and so, if $|I| = k_1$, then $k_1 \leq \alpha(G)$.

Let now $W$ be the union of $V(H)$ and the $2k_1$ vertices of $H''$ corresponding to the endpoints of the intervals in $I$. Then, each representative of a vertex of $G$ in the original representation $\mathcal{S}$ contains at least one vertex of $W$. Moreover, if in the representation $\mathcal{S}$ the representative of the vertex $v$ is properly contained in that of $v'$, then both contain a common vertex in $W$ (one of the vertices in $W$ of the representative of $v$).

We now construct $H'$ and a proper $H'$-representation of $G$. First, take $H$ and subdivide its edges by adding precisely the $2k_1$ vertices corresponding to the endpoints of the intervals in $I$.
That is, we initially have that $V(H') = W$ and $H''$ is a subdivision of $H'$. Now, for each vertex $w \in W$, we consider the set of vertices of $G$ whose representatives contain $w$.
The inclusion of representatives defines a partial order on this set, and we can order them by extending such partial order. In $H'$ we add a leaf $w'$ attached to $w$, and extend the representatives of the vertices of $G$ that contain $w$ onto the edge $ww'$ (via a subdivision of $ww'$) by making them end up in reverse order to that of inclusion; in this way we avoid representatives properly included in others among the vertices whose representatives contain $w$. Clearly, $|V(H')| = 2|V(H)|+4k_1$. Moreover, as observed earlier, if the representative of a vertex $v$ is properly included in that of $v'$ in the original representation $\mathcal{S}$, then both contain a common vertex in $W$, and so the new representation of $G$ as an $H'$-graph is indeed proper. (Recall that we assumed that no two vertices are originally represented by the same set.)
\end{proof}

\noindent
We are now ready to prove \Cref{hardness}, which we restate below.

\hardness*

\begin{proof}
We provide a parameterized reduction from \textsc{Independent Set} on $H$-graphs, which is
$\mathsf{W}[1]$-hard parameterized by $\Vert H\Vert + k$, even if a representation of $G$ as an $H$-graph is given~\cite{Algo-H-graphs}. Let $(G, k)$ be an instance of this problem, where $G$ is given together with an $H$-representation and $k \in \mathbb{N}$. Recall that the question is whether $G$ has an independent set of size at least $k$.

By \Cref{thm:proper-H}, we find, in time polynomial in $|V(G)| + |V(H)|$, a parameter $k_1$ such that $k_1 \leq \alpha(G)$, a graph $H'$ with $|V(H')| = 2|V(H)|+4k_1$, and a proper $H'$-representation of $G$. If $k_1 \geq k$, then $(G, k)$ is a yes-instance. Otherwise, we have a representation of $G$ as a proper $H'$-graph with $|V(H')| < 2|V(H)|+4k$, and we ask whether $G$ has an independent set of size at least $k$. The conclusion follows from the fact that $|E(H')| + k_1 = (|E(H)| + 2|V(H)| + 4k_1) + k_1 < |E(H)| + 2|V(H)| + 5k \leq g(|E(H)| + k)$, for some function~$g$.
\end{proof}

\section{Conclusions}\label{s-conclusion}

We first proved that for every (simple) forest~$H$, the class of $H$-graphs is delineated. This generalizes a known result for interval graphs~\cite{BonnetC0K0T22}. As a consequence, we find that for
every hereditary subclass $\mathcal{D}$ of $H$-graphs, FO Model Checking is in \FPT\ if ${\mathcal D}$ has bounded twin-width and $\AW$-hard otherwise.
Hence, in particular, FO Model Checking is $\AW$-hard for proper $K_{1,3}$-graphs. We showed that this is in contrast to the situation for non-crossing $H$-graphs. Namely, we proved that even for every multigraph~$H$, the FO Model Checking problem is in $\mathsf{FPT}$ for non-crossing $H$-graphs when parameterized by $\Vert H \Vert+\ell$, where $\ell$ is the size of a formula. We did this by proving that for every 
multigraph~$H$, the class of non-crossing $H$-graphs has bounded proper mixed-thinness and thus bounded twin-width. 
Moreover, we strengthened a result of Fomin, Golovach and Raymond~\cite{Algo-H-graphs} by proving that a special case of the FO Model Checking problem, namely {\sc Independent Set}, is $\mathsf{W}[1]$-hard even on proper $H$-graphs when parameterized by $\Vert H \Vert +k$, where $k$ is the solution size. Hence, we answered, in two different ways, a recent question of Chaplick~\cite{Chap-nc-paths} about the difference between proper $H$-graphs and non-crossing $H$-graphs. We conclude that, 
among these two generalizations of proper interval graphs, the
non-crossing $H$-graphs have computational advantages over the proper $H$-graphs.

We finish our paper with a number of open problems resulting from our work. First, it would be good to increase our insights in recognizing non-crossing $H$-graphs. Recall that recognizing $H$-graphs or proper $H$-graphs is \NP-complete for certain graphs $H$. However, we do not know any graph $H$ for which recognizing non-crossing $H$-graphs is \NP-complete and whether boundedness of proper mixed-thinness would be of help.

\begin{open}
Is recognizing non-crossing $H$-graphs \NP-complete for some graph $H$?
\end{open}

\noindent
Note that, unless $\mathsf{P} = \mathsf{PSPACE}$, FO Model Checking is not in $\mathsf{FPT}$ for non-crossing $H$-graphs when parameterized by $\Vert H\Vert$ only, as it is $\mathsf{PSPACE}$-complete on any class of structures that contains at least one structure with at least two elements (see, e.g., \cite{FG}).
However, we leave as an open problem to determine whether {\sc Independent Set} and {\sc Clique} are in $\mathsf{FPT}$ for non-crossing $H$-graphs even when parameterized by $\Vert H\Vert$ only. Recall that Chaplick et al.~\cite{ChaplickTVZ21} proved that {\sc Clique} is para-$\mathsf{NP}$-hard on $H$-graphs when parameterized by $\Vert H\Vert$, while the problem is also still open for proper $H$-graphs.

\begin{open}
Is {\sc Independent Set} in $\mathsf{FPT}$ for non-crossing $H$-graphs when parameterized by $\Vert H \Vert$ only?
\end{open}

\begin{open}
Determine the complexity of {\sc Clique} for proper $H$-graphs and non-crossing $H$-graphs when parameterized by $\Vert H\Vert$.
\end{open}

\noindent
We also wonder whether there are other problems expressible in first-order logic that exhibit the same behaviour as {\sc Independent Set} on proper $H$-graphs.
Note that {\sc Dominating Set} is captured by the FO Model Checking framework. Hence, it is in $\mathsf{FPT}$ for non-crossing $H$-graphs when parameterized by $\Vert H \Vert+k$ due to Corollary~\ref{t-fpt}.
However, our technique for proving $\mathsf{W}[1]$-hardness for {\sc Independent Set} does not work for {\sc Dominating Set}.

\begin{open}
Is {\sc Dominating Set} $\mathsf{W}[1]$-hard for proper $H$-graphs when parameterized by $\Vert H\Vert+k$?
\end{open}

\noindent
The good algorithmic properties of non-crossing $H$-graphs might hold beyond FO Model Checking, and \textsc{Feedback Vertex Set} seems a good candidate to consider. This is because \textsc{Feedback Vertex Set}, restricted to $H$-graphs, is in $\mathsf{XP}$~\cite{ChaplickTVZ21,Algo-H-graphs} but $\mathsf{W}[1]$-hard, when parameterized by $\Vert H\Vert$~\cite{JKT20}.

\begin{open}
Is {\sc Feedback Vertex Set} in $\mathsf{FPT}$ for proper $H$-graphs and non-crossing $H$-graphs when parameterized by $\Vert H\Vert$?
\end{open}

\noindent
Finally, recall that for every (simple) forest~$H$, we showed that 
the class of $H$-graphs is delineated (see Theorem~\ref{thm:effectiveDelineationHGraph}). What if $H$ is not a forest?

\begin{open}\label{o-6}
Is the class of $H$-graphs delineated if $H$ is a graph with a cycle? In particular, is the class of
$C_2$-graphs (i.e., circular arc-graphs) delineated?
\end{open}

\noindent
{\it Acknowledgments.} We thank Michał Pilipczuk for asking us a question at WG 2025 that led to Theorem~\ref{thm:effectiveDelineationHGraph}.

\end{document}